\documentclass{article}

\parindent=20pt
\textwidth156 mm
\textheight220 mm
\topmargin-10pt
\evensidemargin10pt
\oddsidemargin10pt
\frenchspacing

\usepackage[colorlinks,allcolors=blue]{hyperref}
\usepackage{amssymb,amsmath,amsthm,mathrsfs}
\usepackage{venturis2}
\usepackage[T1]{fontenc}
\usepackage{enumerate}
\usepackage{graphicx}
\usepackage{cite}

\urlstyle{same}
\bibliographystyle{plain}

% macros for symbols:

\def\B{\mathscr B}
\def\C{\mathbb C}
\def\d{\mathrm{d}}
\def\DD{\mathscr D}
\def\dom{\mathcal D}
\def\G{\mathcal G}
\def\H{\mathcal H}
\def\K{\mathscr K}

\def\ltwoloc{\mathop{\mathrm{L}^2_{\rm loc}}\nolimits}

\def\linf{\mathop{\mathrm{L}^\infty}\nolimits}
\def\N{\mathbb N}
\def\NN{\mathfrak N}
\def\R{\mathbb R}
\def\S{\mathbb S}

\def\Tau{\mathcal T}
\def\U{\mathrm U}
\def\UU{\mathscr U}
\def\V{\mathtt V}
\def\X{\mathtt X}
\def\Z{\mathbb Z}

% macros for operators:

\def\e{\mathop{\mathrm{e}}\nolimits}
\def\Ran{\mathop{\mathrm{Ran}}\nolimits}
\def\Span{\mathop{\mathrm{Span}}\nolimits}
\def\supp{\mathop{\mathrm{supp}}\nolimits}
\def\tsum{\mathop{\textstyle\sum}\nolimits}
\DeclareMathOperator*{\slim}{s\hspace{0.1pt}-\hspace{0.1pt}lim}

% macros for theorems:
	
\newtheorem{Theorem}{Theorem}[section]
\newtheorem{Remark}[Theorem]{Remark}
\newtheorem{Lemma}[Theorem]{Lemma}
\newtheorem{Corollary}[Theorem]{Corollary}
\newtheorem{Proposition}[Theorem]{Proposition}
\newtheorem{Assumption}[Theorem]{Assumption}
\newtheorem{Definition}[Theorem]{Definition}

% macro for numbering equations:

\begin{document}

%--------------------------------------------------------------------------------------
% Title
%--------------------------------------------------------------------------------------

\title{Spectral and scattering properties of quantum walks on homogenous trees of odd
degree}

\author{R. Tiedra de Aldecoa\footnote{Partially supported by the Chilean Fondecyt
Grant 1210003.}}

\date{\small}
\maketitle
\vspace{-1cm}

\begin{quote}
\begin{itemize}
\item[] Facultad de Matem\'aticas, Pontificia Universidad Cat\'olica de Chile,\\
Av. Vicu\~na Mackenna 4860, Santiago, Chile\\
E-mail: rtiedra@mat.uc.cl
\end{itemize}
\end{quote}

%--------------------------------------------------------------------------------------

\begin{abstract}
For unitary operators $U_0,U$ in Hilbert spaces $\H_0,\H$ and identification operator
$J:\H_0\to\H$, we present results on the derivation of a Mourre estimate for $U$
starting from a Mourre estimate for $U_0$ and on the existence and completeness of the
wave operators for the triple $(U,U_0,J)$. As an application, we determine spectral
and scattering properties of a class of anisotropic quantum walks on homogenous trees
of odd degree with evolution operator $U$. In particular, we establish a
Mourre estimate for $U$, obtain a class of locally $U$-smooth operators, and prove
that the spectrum of $U$ covers the whole unit circle and is purely absolutely
continuous, outside possibly a finite set where $U$ may have eigenvalues of finite
multiplicity. We also show that (at least) three different choices of free evolution
operators $U_0$ are possible for the proof of the existence and completeness of the
wave operators.
\end{abstract}

\textbf{2010 Mathematics Subject Classification:} 47A10, 47A40, 81Q10, 81Q12.

\smallskip

\textbf{Keywords:} Quantum walks, homogenous trees, Mourre theory, unitary operators.

%--------------------------------------------------------------------------------------
\tableofcontents
%--------------------------------------------------------------------------------------

%--------------------------------------------------------------------------------------
\section{Introduction and main results}\label{section_intro}
\setcounter{equation}{0}
%--------------------------------------------------------------------------------------

Recent years have seen a surge of research activity on discrete-time systems described
by unitary evolution operators. CMV matrices, quantum walks, Koopman operators of
dynamical systems and Floquet operators are classes of such systems having received a
lot of attention. This surge of activity has in turn motivated various researchers to
develop, or adapt from the self-adjoint setup, mathematical tools suited for the
spectral and scattering analysis of unitary operators. Among these tools is the Mourre
theory for unitary operators, which has been first introduced in \cite{ABCF06} and
then further developed in several papers such as
\cite{ABC15_1,ABC15_2,Bou_2013,FRT13,RST_2018} (see also the precursor works
\cite{Kat68,Put67}).

When the evolution operator $U$ of the system under study is nontrivial, it is often
better to start by determining properties of a simpler evolution operator $U_0$
describing the free dynamics, and then infer from it properties of $U$. This is the
core idea of perturbation theory for linear operators \cite{Kat95}. However, the
operator $U_0$ may be defined in a different Hilbert space than the operator $U$, as
for instance when the system has a multichannel structure. In such a case, the
perturbation theory has to be adapted accordingly. In the first part of the paper, we
collect results in this direction for Mourre theory and scattering theory for unitary
operators. Our results are either new or extensions of abstract results of
\cite{RST_2018,RST_2019}, and they can be considered as a unitary analogue of the
results of \cite{RT13_2} in the self-adjoint case.

In the second part of the paper, we apply our abstract results to quantum walks on
homogenous trees of odd degree $d\ge3$ as defined in \cite{HJ_2014,JM_2014} (but see
also \cite{CHKS_2009,DRMBNK_2011} for other definitions of quantum walks on trees).
Motivated by recent works on quantum walks on $\Z$
\cite{ABJ_2015,RST_2018,RST_2019,Suz_2016}, we consider quantum walks with a
position-dependent coin admiting a limit at infinity on each main branch of the tree.
Since $\Z$ is nothing else but a homogenous tree of degree $2$, these quantum walks on
homogenous trees of degree $d$ are to some extent a generalisation of the anisotropic
quantum walks on $\Z$ considered in \cite{RST_2018,RST_2019}.

\begin{figure}[h]
\centering
\includegraphics[width=240pt]{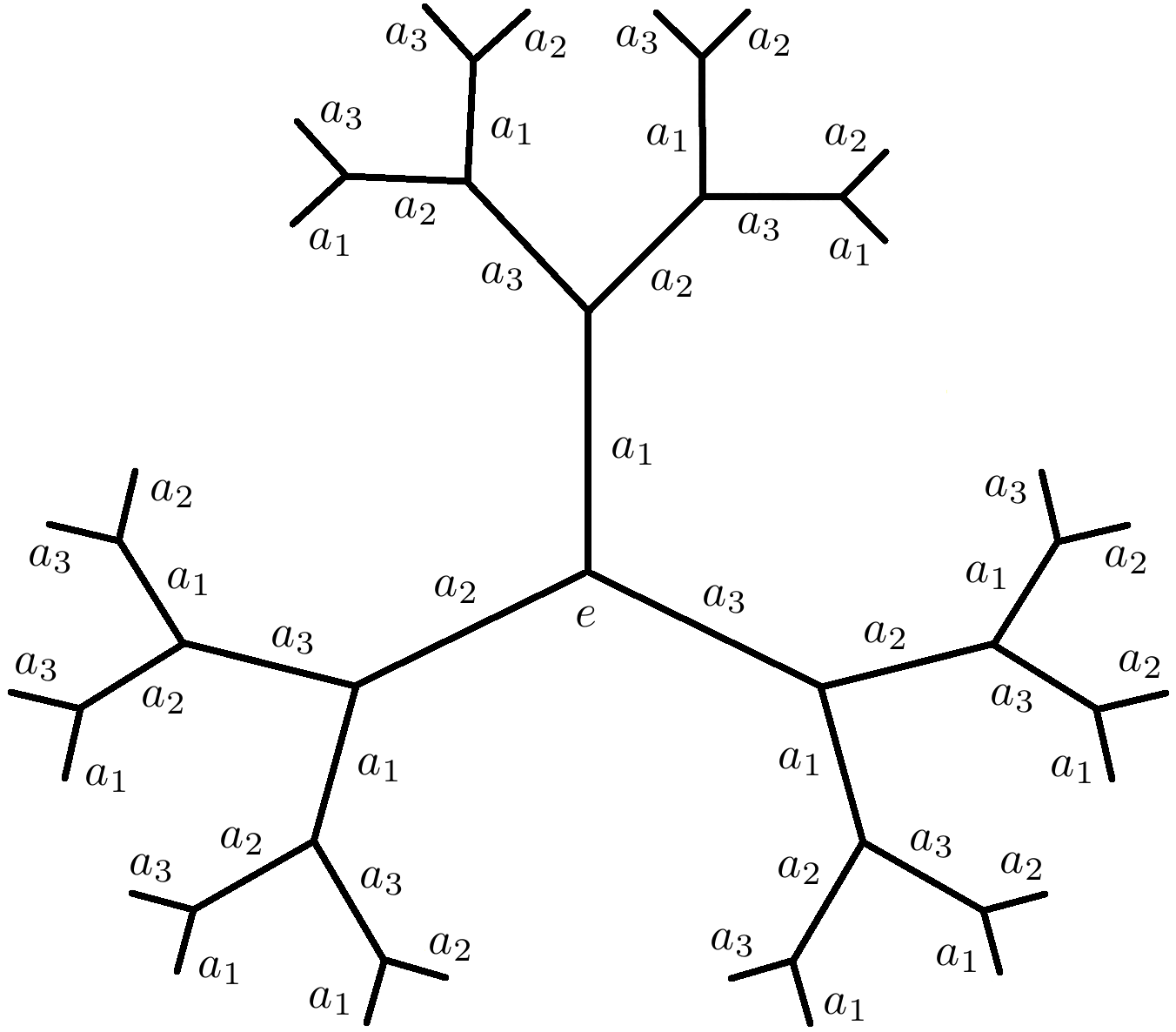}
\caption{Homogenous tree of degree $3$ with generators $a_1,a_2,a_3$}
\label{fig_tree}
\end{figure}

Here is a more detailed description of the paper. In Section \ref{sec_one}, we gather
the needed results on Mourre theory for unitary operators in one Hilbert space. Given
a unitary operator $U$ and a self-adjoint operator $A$ in a Hilbert space $\H$, we
recall the definitions of commutator classes relative to $A$, locally $U$-smooth
operators, conjugate operator for $U$, and Mourre estimate for $U$. We also recall the
two main consequences (under some regularity assumptions) of a Mourre estimate for
$U$: the existence of locally $U$-smooth operators and spectral properties of $U$
(Theorems \ref{thm_locally_smooth} and \ref{thm_spec_prop}).

In Section \ref{sec_two}, we consider a second unitary operator $U_0$ in a second
Hilbert space $\H_0$, and assume that an identification operator $J:\H_0\to\H$ is
given. Then we present an improved version of the results of \cite{RST_2018} on the
construction of a conjugate operator $A$ for $U$ starting from a conjugate operator
$A_0$ for $U_0$. In particular, we determine conditions guaranteeing that the operator
$A=JA_0J^*$ is a conjugate operator for $U$ if the operator $A_0$ is a conjugate
operator for $U_0$ (Theorem \ref{thm_rho_bis}).

In Section \ref{sec_scatt}, we collect results on the existence and completeness
under smooth perturbations of the wave operators for the triple $(U,U_0,J)$. We recall
the intertwinning property of the wave operators (Lemma \ref{lemma_intertwinning}),
present criteria for the $J$-completeness of the wave operators (Theorem
\ref{thm_wave} and Corollary \ref{corol_wave}), prove a chain rule for the wave
operators (Lemma \ref{lemma_chain}), and finally present a criterion for the
completeness, in the usual sense, of the wave operators (Theorem \ref{thm_wave_bis}).

In Section \ref{section_tree}, we apply the results that precede to quantum walks on
homogeneous trees of odd degree $d\ge3$. We consider quantum walks with evolution operator
$U=SC$, where $S$ is a shift and $C$ a position-dependent coin that converge at
infinity, on each main branch of the tree, to a constant diagonal matrix $C_i$,
$i=1,\ldots,d$ (Assumption \ref{ass_short}). This provides evolution operators $U_i:=SC_i$
describing the asymptotic behaviour of $U$ on each main branch of the tree. This also
motivates to choose $U_0:=\bigoplus_{k=1}^dU_k$ as a free evolution operator, and
suggests the definition of the identification operator $J$. In Section
\ref{section_free}, we construct a conjugate operator for $U_0$ and determine the
spectral properties of $U_0$. Using an approach motivated by classical mechanics, we
contruct a conjugate operator $A_0$ satisfying the homogeneous Mourre estimate
$$
U_0^{-1}[A_0,U_0]=2.
$$
This relation, together with Mackey's imprimitivity theorem, implies that $U_0$ is
unitarily equivalent to a multiplication operator with purely absolutely continuous
spectrum covering the whole unit circle (Proposition \ref{prop_spec_U_0}). In Section
\ref{section_full}, we show that the operator $A=JA_0J^*$ is a conjugate operator for
$U$ (Lemma \ref{lemma_A}), establish a Mourre estimate for $U$ (Proposition
\ref{prop_Mourre}), obtain a class of locally $U$-smooth operators (Theorem
\ref{thm_smooth_walk}), and show that $U$ has at most finitely many eigenvalues, each
one of finite multiplicity, and no singular continuous spectrum (Theorem
\ref{thm_spec_U}). To our knowledge this is the first spectral result of this type for
a class of quantum walks on homogeneous trees of degree $d\ge3$ with position-dependent
coin (see \cite[Thm.~4.5]{JM_2014} for a result on a class of quantum walks on rooted
trees of degree $3$ with constant coin). Finally, we establish in Section
\ref{sec_wave} the existence and completeness of the wave operators for the triple
$(U,U_0,J)$ (Theorem \ref{thm_comp_1}). As a by-product, we infer that the absolutely
continuous spectrum of $U$ covers the whole unit circle (Corollary \ref{cor_spec_U}).
Interestingly enough, we also show that two operators different from $U_0$ can be used
as a free evolution operator for $U$, and establish the existence and completeness of
the wave operators in these cases too (Corollary \ref{cor_comp_2} and Theorem
\ref{thm_comp_3}). Of course, each choice of free evolution operator comes with its
pros and cons, see Remark \ref{rem_3_free}.

We point out that our proof of absence of singular continuous spectrum is a key
ingredient to establish in the future a weak limit theorem for $U$. Weak limit
theorems are results relating scattering properties of quantum walks to probabilistic
properties of the corresponding classical random walks. Typically, these theorems show
that if $\X_n$ is the random variable for the position of a quantum walker at time
$n\in\Z$, then $\X_n/n$ converges in law to a random variable $\V$ as $n\to\infty$
with probability distribution $\mu_\V$ given in terms of scattering quantities of the
evolution operator of the quantum walk. See \cite{Kon02,Kon05,RST_2019,Suz_2016} for
examples of weak limit theorems for quantum walks on $\Z$.

To conclude, we emphasize that this work leaves open for future investigations
various other interesting problems about quantum walks on homogeneous trees of odd
degree. The existence of asymptotic velocity operators, the description of the initial
subspaces of the wave operators, the generalisation to coin operators converging to
constant coin along more refined partitions of the tree, the extension to rooted
trees, and the generalisation to coin operators admiting non-diagonal matrix limits at
infinity are examples of problems that could be investigated. See the comments
(i)-(iv) at the end of Section \ref{sec_wave} for more details.

\bigskip
\noindent
{\bf Acknowledgements:} The author thanks the anonymous referees for their suggestions
which helped improve the redaction of various definitions and motivated the extension
of the results to homogenous trees of odd degree $>3$.

\bigskip
\noindent
{\bf Notations:} $\N:=\{0,1,2,\ldots\}$ is the set of natural numbers, $\S^1$ the
complex unit circle, and $\U(n)$ the group of $n\times n$ unitary matrices. Given a
Hilbert space $\H$, we write $\|\cdot\|_\H$ for its norm and
$\langle\cdot,\cdot\rangle_\H$ for its scalar product (linear in the first argument).
Given two Hilbert spaces $\H_1,\H_2$, we write $\B(\H_1,\H_2)$ (resp. $\K(\H_1,\H_2)$,
$S_2(\H_1,\H_2)$) for the set of bounded (resp. compact, Hilbert-Schmidt) operators
from $\H_1$ to $\H_2$. We also write $\|\cdot\|_{\B(\H_1,\H_2)}$ for the norm of
$\B(\H_1,\H_2)$, and use the shorthand notations $\B(\H_1):=\B(\H_1,\H_1)$,
$\K(\H_1):=\K(\H_1,\H_1)$, and $S_2(\H_1):=S_2(\H_1,\H_1)$.

%--------------------------------------------------------------------------------------
\section{Mourre theory in one Hilbert space}\label{sec_one}
\setcounter{equation}{0}
%--------------------------------------------------------------------------------------

A unitary operator $U$ in a Hilbert space $\H$ is a surjective isometry, that is, an
operator $U\in\B(\H)$ satisfying $U^*U=UU^*=1$. Since $U^*U=UU^*=1$, the spectral
theorem for normal operators implies that $U$ admits exactly one complex spectral
family $E_U$, with support $\supp(E_U)\subset\S^1$, such that
$U=\int_\C\d E_U(z)\;\!z$. The support $\supp(E_U)$ is the set of points of
non-constancy of $E_U$, and it coincides with the spectrum $\sigma(U)$ of $U$
\cite[Thm.~7.34(a)]{Wei80}. One can associate in a canonical way a real spectral
family to the complex spectral family $E_U$. Indeed, if we let $E^U$ be the spectral
measure corresponding to the spectral family $E_U$, then the family $\widetilde E_U$
defined by
$$
\widetilde E_U(t):=
\begin{cases}
0 & \hbox{if $t<0$}\\
E^U\big(\{\e^{is}\mid s\in[0,t]\}\big) & \hbox{if $t\in[0,2\pi)$}\\
1 & \hbox{if $t\ge2\pi$,}
\end{cases}
$$
is a real spectral family with support $\supp(\widetilde E_U)\subset[0,2\pi]$ which
satisfies $U=\int_\R\d\widetilde E_U(t)\e^{it}$ \cite[Thm.~7.36]{Wei80}. Since
$\widetilde E_U$ is a real spectral family, the corresponding real spectral measure
$\widetilde E^U$ admits a decomposition
$$
\widetilde E^U=\widetilde E^U_{\rm p}+\widetilde E^U_{\rm sc}+\widetilde E^U_{\rm ac},
$$
with $\widetilde E^U_{\rm p}$, $\widetilde E^U_{\rm sc}$, $\widetilde E^U_{\rm ac}$
the pure point, the singular continuous, and the absolutely continuous components of
$\widetilde E_U$, respectively. The corresponding subspaces
$\H_{\rm p}(U):=P_{\rm p}(U)\H$, $\H_{\rm sc}(U):=P_{\rm sc}(U)\H$,
$\H_{\rm ac}(U):=P_{\rm ac}(U)\H$ with $P_\star(U):=\widetilde E^U_\star(\R)$
($\star=$ p,\,sc,\,ac) provide an orthogonal decomposition which reduces the operator
$U:$
$$
\H=\H_{\rm p}(U)\oplus\H_{\rm sc}(U)\oplus\H_{\rm ac}(U).
$$
The sets $\sigma_{\rm p}(U):=\sigma(U|_{\H_{\rm p}(U)})$,
$\sigma_{\rm sc}(U):=\sigma(U|_{\H_{\rm sc}(U)})$, and
$\sigma_{\rm ac}(U):=\sigma(U|_{\H_{\rm ac}(U)})$ are called pure point spectrum,
singular continuous spectrum, and absolutely continuous spectrum of $U$, respectively,
and the set $\sigma_{\rm c}(U):=\sigma_{\rm sc}(U)\cup\sigma_{\rm ac}(U)$ is called
the continuous spectrum of $U$. We also use the notation $\sigma_{\rm ess}(U)$ for the
essential spectrum of $U$.

If $\G$ is an auxiliary Hilbert space, then an operator $T\in\B(\H,\G)$ is called
locally $U$-smooth on a Borel set $\Theta\subset\S^1$ if there exists $c_\Theta\ge0$
such that
\begin{equation}\label{def_U_smooth}
\sum_{n\in\Z}\big\|TU^nE^U(\Theta)\varphi\big\|_\G^2
\le c_\Theta\|\varphi\|_{\H_0}^2\quad\hbox{for all $\varphi\in\H$,}
\end{equation}
and $T$ is called $U$-smooth if \eqref{def_U_smooth} is satisfied with $\Theta=\S^1$.
The condition \eqref{def_U_smooth} is invariant under rotation by $\theta\in\S^1$ in
the sense that if $T$ is locally $U$-smooth on $\Theta$, then $T$ is locally
$(\theta U)$-smooth on $\theta\Theta$ since
$$
\big\|T(\theta U)^nE^{\theta U}(\theta\Theta)\varphi\big\|_\G
=\big\|TU^nE^U(\Theta)\varphi\big\|_\G
\quad\hbox{for all $\varphi\in\H$.}
$$
As in the self-adjoint case, the existence of a locally $U$-smooth operator can be
formulated in various equivalent ways (see \cite[Thm.~7.1]{ABC15_1}). An important
consequence of the existence of a locally $U$-smooth operator $T$ on $\Theta$ is the
inclusion $\overline{E^U(\Theta)T^*\G^*}\subset\H_{\rm ac}(U)$, where $\G^*$ is the
adjoint space of $\G$ (see \cite[Thm.~2.1 \& Def.~2.2]{ABCF06}).

We now recall some results on Mourre theory for unitary operators in one Hilbert
space, starting with definitions and results borrowed from
\cite{ABG96,FRT13,RST_2018}. Let $S\in\B(\H)$ and let $A$ be a self-adjoint operator
in $\H$ with domain $\dom(A)$. For any $k\in\N$, we say that $S$ belongs to $C^k(A)$,
with notation $S\in C^k(A)$, if the map $\R\ni t\mapsto\e^{-itA}S\e^{itA}\in\B(\H)$ is
strongly of class $C^k$. In the case $k=1$, one has $S\in C^1(A)$ if and only if the
quadratic form
$$
\dom(A)\ni\varphi\mapsto\langle A\;\!\varphi,S\varphi\rangle_\H
-\langle\varphi,SA\;\!\varphi\rangle_\H\in\C
$$
is continuous for the topology induced by $\H$ on $\dom(A)$. The operator associated
to the continuous extension of the form is denoted by $[A,S]\in\B(\H)$, and it
verifies
$$
[A,S]=\slim_{t\to0}\;\![A(t),S]
\quad\hbox{with}\quad A(t):=\tfrac1{it}(\e^{it A}-1)\in\B(\H),
\quad t\in\R\setminus\{0\}.
$$

Three regularity conditions slightly stronger than $S\in C^1(A)$ are defined as
follows. $S$ belongs to $C^{1,1}(A)$, with notation $S\in C^{1,1}(A)$, if
$$
\int_0^1\tfrac{\d t}{t^2}\;\!
\big\|\e^{-itA}S\e^{itA}+\e^{itA}S\e^{-itA}-2S\big\|_{\B(\H)}<\infty.
$$
$S$ belongs to $C^{1+0}(A)$, with notation $S\in C^{1+0}(A)$, if $S\in C^1(A)$ and
$$
\int_0^1\tfrac{\d t}t\;\!\big\|\e^{-itA}[A,S]\e^{itA}-[A,S]\big\|_{\B(\H)}<\infty.
$$
$S$ belongs to $C^{1+\varepsilon}(A)$ for some $\varepsilon\in(0,1)$, with notation
$S\in C^{1+\varepsilon}(A)$, if $S\in C^1(A)$ and
$$
\big\|\e^{-itA}[A,S]\e^{itA}-[A,S]\big\|_{\B(\H)}
\le{\rm Const.}\;\!t^\varepsilon\quad\hbox{for all $t\in(0,1)$.}
$$
As banachisable topological vector spaces, these sets satisfy the continuous
inclusions \cite[Sec.~5.2]{ABG96}
$$
C^2(A)\subset C^{1+\varepsilon}(A)\subset C^{1+0}(A)\subset C^{1,1}(A)\subset C^1(A)
\subset C^0(A)\equiv\B(\H).
$$

Now, let $U$ be a unitary operator with $U\in C^1(A)$. For $S,T\in\B(\H)$, we write
$T\gtrsim S$ if there exists an operator $K\in\K(\H)$ such that $T+K\ge S$, and for
$\theta\in\S^1$ and $\varepsilon>0$ we set
$$
\Theta(\theta;\varepsilon)
:=\big\{\theta'\in\S^1\mid|\arg(\theta-\theta')|<\varepsilon\big\}
\quad\hbox{and}\quad
E^U(\theta;\varepsilon):=E^U\big(\Theta(\theta;\varepsilon)\big).
$$
With these notations at hand, we can define functions
$\varrho^A_U:\S^1\to(-\infty,\infty]$ and
$\widetilde\varrho^A_U:\S^1\to(-\infty,\infty]$ by
\begin{align*}
\varrho^A_U(\theta)
&:=\sup\big\{a\in\R\mid\exists\;\!\varepsilon>0~\hbox{such that}~
E^U(\theta;\varepsilon)U^{-1}[A,U]E^U(\theta;\varepsilon)
\ge a\;\!E^U(\theta;\varepsilon)\big\},\\
\widetilde\varrho^A_U(\theta)
&:=\sup\big\{a\in\R \mid\exists\;\!\varepsilon>0~\hbox{such that}~
E^U(\theta;\varepsilon)U^{-1}[A,U]E^U(\theta;\varepsilon)
\gtrsim a\;\!E^U(\theta;\varepsilon)\big\}.
\end{align*}
In applications, $\widetilde\varrho^A_U$ is more convenient than $\varrho^A_U $ since
it is defined in terms of a weaker positivity condition (positivity up to compact
terms). A simple argument shows that $\widetilde\varrho^A_U(\theta)$ can be defined in
an equivalent way by
$$
\widetilde\varrho^A_U(\theta)
=\sup\big\{a\in\R\mid\exists\;\!\eta\in C^\infty(\S^1,\R)~\hbox{such that}
~\eta(\theta)\ne0~\hbox{and}~\eta(U)U^{-1}[A,U]\eta(U)\gtrsim a\;\!\eta(U)^2\big\}.
$$

Further properties of the functions $\widetilde\varrho^A_U$ and $\varrho^A_U$ are
recalled in the following lemma.

\begin{Lemma}[Lemma 3.3 of \cite{RST_2018}]\label{lemma_properties}
Let $U$ be a unitary operator in $\H$ and let $A$ be a self-adjoint operator in $\H$
with $U\in C^1(A)$.
\begin{enumerate}[(a)]
\item The function $\varrho^A_U:\S^1\to(-\infty,\infty]$ is lower semicontinuous,
and $\varrho^A_U(\theta)<\infty$ if and only if $\theta\in\sigma(U)$.
\item The function $\widetilde\varrho^A_U:\S^1\to(-\infty,\infty]$ is lower
semicontinuous, and $\widetilde\varrho^A_U(\theta)<\infty$ if and only if
$\theta\in\sigma_{\rm ess}(U)$.
\item $\widetilde\varrho^A_U\ge\varrho^A_U$.
\item If $\theta\in\S^1$ is an eigenvalue of $U$ and
$\widetilde\varrho^A_U(\theta)>0$, then $\varrho^A_U(\theta)=0$. Otherwise,
$\varrho^A_U(\theta)=\widetilde\varrho^A_U(\theta)$.
\end{enumerate}
\end{Lemma}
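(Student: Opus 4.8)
My plan is to dispose of (a)--(c) with the standard soft arguments and then concentrate on (d). Everything uses a few elementary facts about the family $\{E^U(\theta;\varepsilon)\}_{\varepsilon>0}$, which I would record first: it is a decreasing family of spectral projections of $U$ with strong limit $E^U(\{\theta\})=:P_\theta$ as $\varepsilon\downarrow0$; $E^U(\theta;\varepsilon)\neq0$ for every $\varepsilon>0$ iff $\theta\in\sigma(U)$; and $E^U(\theta;\varepsilon)$ has finite rank for small $\varepsilon$ iff $\theta\notin\sigma_{\rm ess}(U)$. For the lower semicontinuity in (a), (b) I would use a shrinking-neighbourhood argument: if $\varrho^A_U(\theta_0)>a$ with witness $\varepsilon$, then for $\theta$ close to $\theta_0$ one has $\Theta(\theta;\varepsilon')\subset\Theta(\theta_0;\varepsilon)$ for a suitable $\varepsilon'$, and sandwiching the witnessing inequality by $E^U(\theta;\varepsilon')=E^U(\theta;\varepsilon')E^U(\theta_0;\varepsilon)$ yields $\varrho^A_U(\theta)\ge a$; the identical computation carrying one fixed compact operator along handles $\widetilde\varrho^A_U$. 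For the finiteness statements: off $\sigma(U)$ some $E^U(\theta;\varepsilon)$ vanishes, so $\varrho^A_U=+\infty$ there, and on $\sigma(U)$ testing the defining inequality on a unit vector of $\Ran E^U(\theta;\varepsilon)$ forces $a\le\|U^{-1}[A,U]\|=\|[A,U]\|$; off $\sigma_{\rm ess}(U)$ the operator $E^U(\theta;\varepsilon)U^{-1}[A,U]E^U(\theta;\varepsilon)$ is finite rank for small $\varepsilon$, so $\widetilde\varrho^A_U=+\infty$ there, and on $\sigma_{\rm ess}(U)$ a Weyl sequence in $\Ran E^U(\theta;\varepsilon)$ (along which any compact remainder is asymptotically negligible) again bounds $\widetilde\varrho^A_U(\theta)$ by $\|[A,U]\|$. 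Part (c) is immediate: $T\ge S$ trivially gives $T\gtrsim S$, so any admissible $a$ for $\varrho^A_U$ is admissible for $\widetilde\varrho^A_U$.

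For (d) the key tool is the virial theorem, which I would prove first: if $U\varphi=\theta\varphi$ and $U\in C^1(A)$, then writing $[A,U]=\slim_{t\to0}[A(t),U]$ with $A(t)=\tfrac1{it}(\e^{itA}-1)\in\B(\H)$ and using $U\varphi=\theta\varphi$, $U^*\varphi=\bar\theta\varphi$ gives $\langle\varphi,[A(t),U]\varphi\rangle=0$ for all $t$, hence $\langle\varphi,U^{-1}[A,U]\varphi\rangle=0$; polarising over the closed subspace $\Ran P_\theta$ then gives $P_\theta U^{-1}[A,U]P_\theta=0$. From this: if $\theta$ is an eigenvalue, feeding an eigenvector in $\Ran P_\theta\subset\Ran E^U(\theta;\varepsilon)$ into the defining inequality of $\varrho^A_U(\theta)$ forces the tested constant to be $\le0$, so $\varrho^A_U(\theta)\le0$. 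Next, if also $\widetilde\varrho^A_U(\theta)>0$, I would restrict a Mourre estimate with compact remainder (which exists because $\widetilde\varrho^A_U(\theta)>0$) to $\Ran P_\theta$; using $P_\theta U^{-1}[A,U]P_\theta=0$ this becomes $P_\theta K P_\theta\ge a'P_\theta$ on $\Ran P_\theta$ for some $a'>0$ and compact $K$, and a compactness argument on an orthonormal sequence forces $\dim\Ran P_\theta<\infty$.

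The remaining task is to produce the matching lower bounds, and here is where the real work lies. I would write $E^U(\theta;\varepsilon)=P_\theta+Q_\varepsilon$ with $Q_\varepsilon\to0$ strongly and examine the $2\times2$ block form of $E^U(\theta;\varepsilon)U^{-1}[A,U]E^U(\theta;\varepsilon)$ along $\Ran P_\theta\oplus\Ran Q_\varepsilon$. When $\theta$ is an eigenvalue with $\widetilde\varrho^A_U(\theta)>0$ (so $\dim\Ran P_\theta<\infty$): the $(1,1)$-block is $0$; the off-diagonal block $P_\theta U^{-1}[A,U]Q_\varepsilon$ is a fixed finite-rank operator composed with the strongly null family $Q_\varepsilon$, hence $\to0$ in norm; and restricting the compact-remainder estimate to $\Ran Q_\varepsilon$---where the remainder $\to0$ in norm---gives $Q_\varepsilon U^{-1}[A,U]Q_\varepsilon\ge\tfrac{a'}2Q_\varepsilon$ for small $\varepsilon$; a short estimate on the block matrix then gives $E^U(\theta;\varepsilon)U^{-1}[A,U]E^U(\theta;\varepsilon)\ge-\delta_\varepsilon E^U(\theta;\varepsilon)$ with $\delta_\varepsilon\to0$, so $\varrho^A_U(\theta)\ge0$ and, with the bound above, $\varrho^A_U(\theta)=0$. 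When $\theta$ is not an eigenvalue ($P_\theta=0$), the scheme degenerates: $E^U(\theta;\varepsilon)=Q_\varepsilon\to0$ strongly, so the compact remainder of any $\widetilde\varrho^A_U(\theta)$-estimate washes out in norm upon shrinking $\varepsilon$ and I get $\varrho^A_U(\theta)\ge\widetilde\varrho^A_U(\theta)$ at once, hence equality by (c). The residual case---$\theta$ an eigenvalue with $\widetilde\varrho^A_U(\theta)\le0$---I would treat along the same lines. The hard part, I expect, is precisely the control of the coupling between $\Ran P_\theta$ and $\Ran Q_\varepsilon$, i.e.\ showing that the off-diagonal block $P_\theta U^{-1}[A,U]Q_\varepsilon$ is asymptotically negligible: this is automatic once the eigenspace at $\theta$ is finite-dimensional---the situation that matters for a non-trivial Mourre estimate---so securing that finiteness is really the crux of (d).
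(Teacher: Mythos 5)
The paper gives no proof of this lemma (it is quoted from \cite{RST_2018}, whose proof in turn follows \cite{FRT13} and \cite[Sec.~7.2]{ABG96}), so your proposal has to be judged against that standard argument, which it essentially reproduces: parts (a)--(c) by the shrinking-neighbourhood and Weyl-sequence arguments, and part (d) by the virial theorem $P_\theta\;\!U^{-1}[A,U]\;\!P_\theta=0$ plus the block decomposition of $E^U(\theta;\varepsilon)\;\!U^{-1}[A,U]\;\!E^U(\theta;\varepsilon)$ along $\Ran P_\theta\oplus\Ran Q_\varepsilon$. All of that, including the compactness argument forcing $\dim\Ran P_\theta<\infty$ when $\widetilde\varrho^A_U(\theta)>0$ and the completing-the-square absorption of the cross term into the strictly positive $(2,2)$-block, is correct. (Two small points you should make explicit: $U^{-1}[A,U]$ is symmetric, which is what licenses the polarisation step after the virial identity; and in the non-eigenvalue case with $\theta\notin\sigma(U)$ both functions are $+\infty$, while $\theta\in\sigma(U)\setminus\sigma_{\rm ess}(U)$ without being an eigenvalue cannot occur.)

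There is, however, a genuine gap in the one sub-case you dismiss with ``along the same lines'': $\theta$ an eigenvalue with $\widetilde\varrho^A_U(\theta)\le0$, where the ``Otherwise'' clause still asserts $\varrho^A_U(\theta)=\widetilde\varrho^A_U(\theta)$, hence requires the lower bound $\varrho^A_U(\theta)\ge a$ for every $a<\widetilde\varrho^A_U(\theta)$. Your entire mechanism for controlling the coupling rests on $\|P_\theta\;\!U^{-1}[A,U]\;\!Q_\varepsilon\|\to0$, which you obtain from the finite rank of $P_\theta$; but that finiteness was \emph{derived from} $\widetilde\varrho^A_U(\theta)>0$ and is simply unavailable here --- when $\widetilde\varrho^A_U(\theta)\le0$ the eigenspace may be infinite-dimensional (indeed it must be allowed to be, since an infinite-dimensional eigenspace forces $\widetilde\varrho^A_U(\theta)\le0$ by testing the compact remainder on an orthonormal sequence in $\Ran P_\theta$). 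For a fixed bounded operator composed with the merely strongly null family $Q_\varepsilon$ there is no norm convergence, so the off-diagonal block need not become small and your block estimate does not close. You half-acknowledge this by calling the finiteness of the eigenspace ``the crux'', but you only secure it in the regime where you do not need it for this clause. Note that the rest of the $(2,2)$-block argument would survive in this sub-case (since $a<0$, one has $(a-\delta)\|\varphi_2\|^2\ge(a-\delta)\|\varphi\|^2$, so no absorption of the cross term into a positive block is needed --- only its smallness); the missing ingredient is precisely the decay of $\|P_\theta\;\!U^{-1}[A,U]\;\!Q_\varepsilon\|$, for which one needs a different idea (e.g.\ exploiting the identity $\langle\varphi,[A,U]\psi\rangle=\langle A\varphi,(U-\theta)\psi\rangle$ for $\varphi\in\Ran P_\theta\cap\dom(A)$, or a finite-rank-plus-small splitting of the compact remainder). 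As it stands, this sub-case is not proved.
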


One says that $A$ is conjugate to $U$ (or that $U$ satisfies a Mourre estimate) at a
point $\theta\in\S^1$ if $\widetilde\varrho^A_U(\theta)>0$, and that $A$ is strictly
conjugate to $U$ (or that $U$ satisfies a strict Mourre estimate) at $\theta$ if
$\varrho^A_U(\theta)>0$. Since $\widetilde\varrho^A_U(\theta)\ge\varrho^A_U(\theta)$
for each $\theta\in\S^1$ by Lemma \ref{lemma_properties}(c), strict conjugation is a
property stronger than conjugation. We write
\begin{equation}\label{def_mu}
\widetilde\mu^A(U):=\big\{\theta\in\S^1\mid\widetilde\varrho^A_U(\theta)>0\big\}
\end{equation}
for the subset of $\S^1$ where $A$ is conjugate to $U$, and note that
$\widetilde\mu^A(U)$ is open due to the lower semicontinuity of the function
$\widetilde\varrho^A_U$ (Lemma \ref{lemma_properties}(b)).

The next theorem corresponds to \cite[Thm.~3.4]{RST_2018}; its formulation has been
adapted to be consistent with the definition of locally smooth operators used in this
paper. We use the notation $\langle\cdot\rangle:=(1+|\cdot|^2)^{1/2}$ and use the
expression `spectral gap' to mean a hole in the spectrum.

\begin{Theorem}[Locally smooth operators]\label{thm_locally_smooth}
Let $U$ be a unitary operator in $\H$, let $A$ be a self-adjoint operator in $\H$, and
let $\G$ be an auxiliary Hilbert space. Assume either that $U$ has a spectral gap and
$U\in C^{1,1}(A)$, or that $U\in C^{1+0}(A)$. Suppose also there exist an open set
$\Theta\subset\S^1$, a number $a>0$ and an operator $K\in\K(\H)$ such that
$$
E^U(\Theta)\;\!U^{-1}[A,U]\;\!E^U(\Theta)\ge aE^U(\Theta)+K.
$$
Then each operator $T\in\B(\H,\G)$ which extends continuously to an element of
$\B\big(\dom(\langle A\rangle^s)^*,\G\big)$ for some $s>1/2$ is locally $U$-smooth on
any closed set $\Theta'\subset\Theta\setminus\sigma_{\rm p}(U)$.
\end{Theorem}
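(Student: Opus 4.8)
The plan is to deduce the statement from \cite[Thm.~3.4]{RST_2018}, whose hypotheses are exactly the stated dichotomy together with the Mourre estimate on $\Theta$, and whose conclusion is the present one up to the precise formulation of ``locally $U$-smooth''. The proof then has two routine parts --- reducing to the case $T=\langle A\rangle^{-s}$ and translating \eqref{def_U_smooth} into a resolvent estimate --- and one substantial part, a limiting absorption principle, which I would take over verbatim from \cite{RST_2018}.

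For the first reduction: the continuous extension $B\in\B\big(\dom(\langle A\rangle^s)^*,\G\big)$ of $T$ obeys $\|T\varphi\|_\G=\|B\varphi\|_\G\le\|B\|\,\|\langle A\rangle^{-s}\varphi\|_\H$ for $\varphi\in\H$, so $T\langle A\rangle^s$ extends to some $T_0\in\B(\H,\G)$ with $T=T_0\langle A\rangle^{-s}$ on $\H$; since $\|TU^nE^U(\Theta')\varphi\|_\G\le\|T_0\|\,\|\langle A\rangle^{-s}U^nE^U(\Theta')\varphi\|_\H$, it is enough to show that $\langle A\rangle^{-s}$ is locally $U$-smooth on $\Theta'$. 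Next, I would invoke the equivalences in \cite[Thm.~7.1]{ABC15_1}, both to reconcile \eqref{def_U_smooth} with the notion of local smoothness used in \cite{RST_2018} and to rephrase it, for $\langle A\rangle^{-s}$ on $\Theta'$, as a limiting absorption principle: existence of an open $\mathcal V\subset\S^1$ with $\Theta'\subset\mathcal V$, $\overline{\mathcal V}\subset\Theta$, and a finite $c$ such that
$$
\big\|\langle A\rangle^{-s}(U-z)^{-1}E^U(\mathcal V)\langle A\rangle^{-s}\big\|_{\B(\H)}\le c
\qquad\text{whenever }z\in\C,\ |z|\ne1,\ z/|z|\in\mathcal V .
$$
To produce such a $\mathcal V$ one needs that $\sigma_{\rm p}(U)$ does not accumulate in $\Theta$, which follows from $U\in C^1(A)$, the Mourre estimate and the virial theorem: an eigenvector $\psi$ with $U\psi=\theta\psi$ satisfies $\langle\psi,U^{-1}[A,U]\psi\rangle=0$ (with $U^{-1}[A,U]=U^*AU-A$ bounded self-adjoint), which is impossible when the strict estimate $E^U(\theta;\varepsilon)U^{-1}[A,U]E^U(\theta;\varepsilon)\ge aE^U(\theta;\varepsilon)$ holds and, under positivity only modulo compacts, forces the eigenspaces in $\Theta(\theta;\varepsilon)$ to be finite-dimensional. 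Since $\Theta'$ is closed in $\S^1$, hence compact, and disjoint from $\sigma_{\rm p}(U)$, one may choose $\mathcal V$ with $\overline{\mathcal V}\cap\sigma_{\rm p}(U)=\emptyset$ and, after shrinking (Lemma \ref{lemma_properties}(d), lower semicontinuity of $\varrho^A_U$, compactness of $\overline{\mathcal V}$, and the standard gluing of local Mourre estimates), even arrange the strict estimate $E^U(\mathcal V)U^{-1}[A,U]E^U(\mathcal V)\ge aE^U(\mathcal V)$ on $\mathcal V$.

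The main obstacle is the limiting absorption principle above, i.e.\ the analytic heart of \cite[Thm.~3.4]{RST_2018}: it is the Mourre differential-inequality method transplanted to the unit circle. One introduces regularised weighted resolvents built from $(U-z)^{-1}$ and bounded approximants of $\langle A\rangle^{-s}$, differentiates a quadratic form in them along the radial variable $|z|$, uses the strict Mourre estimate on $\mathcal V$ to gain positivity, and controls the commutator remainder $\e^{-itA}[A,U]\e^{itA}-[A,U]$ through the regularity hypothesis --- directly when $U\in C^{1+0}(A)$, and via a Cayley transform to the self-adjoint Mourre theory on the interval furnished by the spectral gap when $U\in C^{1,1}(A)$ --- and then closes a Gronwall-type inequality uniformly in the regularisation parameter and in $z$ near $\mathcal V$. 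Granting this, unwinding the reductions shows that $\langle A\rangle^{-s}$, hence $T=T_0\langle A\rangle^{-s}$, is locally $U$-smooth on $\Theta'$; the rotation invariance of \eqref{def_U_smooth} noted before the theorem shows that the base point on $\S^1$ plays no role.
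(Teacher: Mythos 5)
The paper gives no proof of this statement: it is recalled verbatim (modulo the formulation of local $U$-smoothness) from \cite[Thm.~3.4]{RST_2018}, which is exactly the route you take. Your added detail --- factoring $T=T_0\langle A\rangle^{-s}$, rephrasing \eqref{def_U_smooth} as a limiting absorption principle via \cite[Thm.~7.1]{ABC15_1}, the virial argument for non-accumulation of eigenvalues on $\Theta'$, and the differential-inequality/Cayley-transform machinery behind the cited theorem --- is a correct account of what that reference does, so your proposal matches the paper's approach.
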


The last theorem of this section has been established in \cite[Thm.~2.7]{FRT13}:

\begin{Theorem}[Spectral properties]\label{thm_spec_prop}
Let $U$ be a unitary operator in $\H$ and let $A$ be a self-adjoint operator in $\H$.
Assume either that $U$ has a spectral gap and $U\in C^{1,1}(A)$, or that
$U\in C^{1+0}(A)$. Suppose also there exist an open set $\Theta\subset\S^1$, a number
$a>0$ and an operator $K\in\K(\H)$ such that
$$
E^U(\Theta)\;\!U^{-1}[A,U]\;\!E^U(\Theta)\ge aE^U(\Theta)+K.
$$
Then $U$ has at most finitely many eigenvalues in $\Theta$, each one of finite
multiplicity, and $U$ has no singular continuous spectrum in $\Theta$.
\end{Theorem}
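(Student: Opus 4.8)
The plan is to deduce both conclusions from two ingredients that are available under the stated hypotheses: an abstract virial theorem for unitary operators, which needs only $U\in C^1(A)$ (guaranteed here by the inclusions $C^{1,1}(A),C^{1+0}(A)\subset C^1(A)$), and the existence of locally $U$-smooth operators, which is precisely the content of Theorem~\ref{thm_locally_smooth} under the present assumptions. I would start with the virial theorem: if $U\in C^1(A)$ and $\varphi\in\H$ satisfies $U\varphi=\theta\varphi$ for some $\theta\in\S^1$, then $\langle\varphi,U^{-1}[A,U]\varphi\rangle_\H=0$. This follows by writing $[A,U]=\slim_{t\to0}[A(t),U]$ with $A(t)=\tfrac1{it}(\e^{itA}-1)$ bounded and self-adjoint, computing $\langle\varphi,U^{-1}[A(t),U]\varphi\rangle_\H=\langle U\varphi,A(t)U\varphi\rangle_\H-\langle\varphi,A(t)\varphi\rangle_\H=(|\theta|^2-1)\langle\varphi,A(t)\varphi\rangle_\H=0$, and letting $t\to0$.

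For the point spectrum, suppose for contradiction that $U$ had infinitely many eigenvalues in $\Theta$ counted with multiplicity. Then one can pick an orthonormal sequence $(\varphi_n)_n$ of eigenvectors with $U\varphi_n=\theta_n\varphi_n$ and $\theta_n\in\Theta$, so that $E^U(\Theta)\varphi_n=\varphi_n$ and $\varphi_n\rightharpoonup0$ weakly. Taking the expectation of the Mourre estimate in $\varphi_n$ and using the virial theorem gives $0=\langle\varphi_n,U^{-1}[A,U]\varphi_n\rangle_\H\ge a+\langle\varphi_n,K\varphi_n\rangle_\H$; since $K$ is compact and $\varphi_n\rightharpoonup0$, the last term tends to $0$, contradicting $a>0$. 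Hence $U$ has at most finitely many eigenvalues in $\Theta$, each of finite multiplicity; write $\sigma_0:=\sigma_{\rm p}(U)\cap\Theta$ for this finite set.

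For the continuous spectrum, note that $\Theta\setminus\sigma_0$ is open, hence the union of an increasing sequence of closed subsets $\Theta_m\subset\Theta\setminus\sigma_{\rm p}(U)$. Fix $s\in(1/2,1)$ and apply Theorem~\ref{thm_locally_smooth} with $\G=\H$ and $T=\langle A\rangle^{-s}$, which extends to an element of $\B\big(\dom(\langle A\rangle^s)^*,\H\big)$: it is locally $U$-smooth on each $\Theta_m$. By the consequence of local $U$-smoothness recalled after \eqref{def_U_smooth}, $\overline{E^U(\Theta_m)T^*\H}\subset\H_{\rm ac}(U)$; since $T^*\H=\dom(\langle A\rangle^s)$ is dense in $\H$ and $E^U(\Theta_m)$ is a bounded orthogonal projection, this gives $\Ran\big(E^U(\Theta_m)\big)\subset\H_{\rm ac}(U)$. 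Letting $m\to\infty$ yields $\Ran\big(E^U(\Theta\setminus\sigma_0)\big)\subset\H_{\rm ac}(U)$, i.e. $\sigma_{\rm sc}(U)\cap(\Theta\setminus\sigma_0)=\emptyset$; as $\sigma_0$ is finite and the singular continuous measure has no atoms, $\sigma_{\rm sc}(U)\cap\Theta=\emptyset$, which is the claim.

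The genuinely analytic difficulty, namely the limiting absorption principle under $C^{1,1}(A)$ or $C^{1+0}(A)$ regularity, is already packaged into Theorem~\ref{thm_locally_smooth} and may be invoked as a black box, so the remaining work is largely organisational. The two points that still require care are the justification of the virial identity for eigenvectors that need not lie in $\dom(A)$ — handled through the bounded regularisations $A(t)$ as above — and the density argument passing from local $U$-smoothness of $\langle A\rangle^{-s}$ to the spectral inclusion $\Ran(E^U(\Theta_m))\subset\H_{\rm ac}(U)$.
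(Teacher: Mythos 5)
Your proof is correct. Note that the paper does not prove this statement itself: it is quoted verbatim from \cite[Thm.~2.7]{FRT13}, so there is no in-text argument to compare against. Your reconstruction matches the standard proof in its first half — the virial identity $\langle\varphi,U^{-1}[A,U]\varphi\rangle_\H=0$ for eigenvectors, obtained through the bounded regularisations $A(t)$, combined with weak convergence of an orthonormal sequence of eigenvectors and compactness of $K$, is exactly the argument used in \cite{FRT13}. Where you diverge is the second half: \cite{FRT13} deduces the absence of singular continuous spectrum from a limiting absorption principle for $(1-r\;\!U)^{-1}$ in Besov-type interpolation spaces, whereas you take Theorem~\ref{thm_locally_smooth} as a black box, apply it to $T=\langle A\rangle^{-s}$, and use the recalled inclusion $\overline{E^U(\Theta')T^*\G^*}\subset\H_{\rm ac}(U)$ together with density of $\dom(\langle A\rangle^s)$ and an exhaustion of $\Theta\setminus\sigma_{\rm p}(U)$ by closed sets. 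This is logically sound within the paper's architecture (Theorem~\ref{thm_locally_smooth} is stated independently, under the same hypotheses, and is itself ultimately a consequence of the same LAP), and it buys you a shorter, softer argument at the cost of hiding the analytic core inside the cited theorem — which you acknowledge explicitly. All the delicate points (eigenvectors not in $\dom(A)$, passing from density of $T^*\H$ to $E^U(\Theta_m)\H\subset\H_{\rm ac}(U)$ via the closed range of the projection, and discarding the finite set $\sigma_0$ because singular continuous measures have no atoms) are handled correctly.
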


\begin{Remark}\label{rem_abs}
(a) If the assumptions of Theorem \ref{thm_spec_prop} are satisfied with $K=0$, then
$U$ has no point spectrum in $\Theta$ either, meaning that the spectrum
of $U$ in $\Theta$ (if any) is purely absolutely continuous
(see \cite[Rem.~2.8]{FRT13}).

(b) An extension of the spectral result of Theorem \ref{thm_spec_prop} to unitary
operators $U\in C^{1,1}(A)$ without the gap assumption can be found in
\cite[Thm.~2.3]{ABC15_1}. It would be interesting to see if such an extension is also
possible for the result on locally $U$-smooth operators of Theorem
\ref{thm_locally_smooth} (the result of \cite[Prop.~7.1]{ABC15_1} in this direction
holds under an additional assumption which seems unnecessary).
\end{Remark}

%--------------------------------------------------------------------------------------
\section{Mourre theory in two Hilbert spaces}\label{sec_two}
\setcounter{equation}{0}
%--------------------------------------------------------------------------------------

From now on, in addition to the triple $(\H,U,A)$, we consider a second triple
$(\H_0,U_0,A_0)$ with $\H_0$ a Hilbert space, $U_0$ a unitary operator in $\H_0$, and
$A_0$ a self-adjoint operator in $\H_0$. We also asume that an identification operator
$J\in\B(\H_0,\H)$ is given.

The regularity of $U_0$ with respect to $A_0$ is usually easy to check, while the
regularity of $U$ with respect to $A$ is in general difficult to establish. In the
case of self-adjoint operators in one Hilbert space, various perturbative criteria
have been developed to tackle this problem, and often a distinction is made between
short-range and long-range case. For unitary operators, this distinction consists in
treating separately the two terms of the formal commutator $[A,U]=AU-UA$ in the
short-range case, or really computing the commutator $[A,U]$ in the long-range case.

In this section, we present an improved version of the results of
\cite[Sec.~3.2]{RST_2018} on the construction of a conjugate operator $A$ for $U$
starting from a conjugate operator $A_0$ for $U_0$ in the short-range case. Before
that, we recall a general result about the functions $\widetilde\varrho_U^A$ and
$\widetilde\varrho_{U_0}^{A_0}$ when the two-Hilbert space perturbation $V:=JU_0-UJ$
is compact:

\begin{Theorem}[Theorem 3.6 of \cite{RST_2018}]\label{thm_rho}
Let $(\H,U,A)$, $(\H_0,U_0,A_0)$ and $J,V\in\B(\H_0,\H)$ be as above, and assume that
\begin{enumerate}[(i)]
\item $U_0\in C^1(A_0)$ and $U\in C^1(A)$,
\item $JU_0^{-1}[A_0,U_0]J^*-U^{-1}[A,U]\in\K(\H)$,
\item $V\in\K(\H_0,\H)$,
\item For each $\eta\in C(\S^1,\R)$, one has $\eta(U)(JJ^*-1_\H)\eta(U)\in\K(\H)$.
\end{enumerate}
Then one has $\widetilde\varrho_U^A\ge \widetilde\varrho_{U_0}^{A_0}$.
\end{Theorem}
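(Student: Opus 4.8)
The plan is to run the argument through the equivalent description of $\widetilde\varrho_U^A$ and $\widetilde\varrho_{U_0}^{A_0}$ in terms of smooth functions $\eta\in C^\infty(\S^1,\R)$ recalled just before Lemma~\ref{lemma_properties}, and to transport a Mourre estimate for $U_0$ into one for $U$ by conjugating it with $J$ and $J^*$. Fix $\theta\in\S^1$ and pick $a\in\R$ with $a<\widetilde\varrho_{U_0}^{A_0}(\theta)$. Then there exist $\eta\in C^\infty(\S^1,\R)$ with $\eta(\theta)\ne0$ and a self-adjoint $K_0\in\K(\H_0)$ such that
$$
\eta(U_0)\,U_0^{-1}[A_0,U_0]\,\eta(U_0)+K_0\ge a\,\eta(U_0)^2 .
$$
Since $X\mapsto JXJ^*$ preserves the order $\ge$ and sends $\K(\H_0)$ into $\K(\H)$, this yields
$$
J\eta(U_0)U_0^{-1}[A_0,U_0]\eta(U_0)J^*+JK_0J^*\ge a\,J\eta(U_0)^2J^*,
$$
and the task reduces to identifying the two sides modulo $\K(\H)$, keeping the same $\eta$ (so that $\eta(\theta)\ne0$ is preserved).

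The key technical step is a \emph{transfer lemma}: under assumption~(iii), $Jf(U_0)-f(U)J\in\K(\H_0,\H)$ for every $f\in C(\S^1)$. For Laurent polynomials this follows from the telescoping identities $JU_0^n-U^nJ=\sum_{k=0}^{n-1}U^kVU_0^{\,n-1-k}$ and $JU_0^{-1}-U^{-1}J=-U^{-1}VU_0^{-1}$, and the general case follows by approximating $f$ uniformly on $\S^1$ by Laurent polynomials and passing to the operator-norm limit in the functional calculi of $U_0$ and $U$, a norm limit of compact operators being compact. Applying this with our (real) $\eta$, for which $\eta(U_0)$ and $\eta(U)$ are self-adjoint, gives $J\eta(U_0)=\eta(U)J+K_1$ and, by taking adjoints, $\eta(U_0)J^*=J^*\eta(U)+K_1^*$ with $K_1\in\K(\H_0,\H)$.

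Now set $B_0:=U_0^{-1}[A_0,U_0]\in\B(\H_0)$, which is bounded by assumption~(i). Expanding $J\eta(U_0)B_0\eta(U_0)J^*=(\eta(U)J+K_1)B_0(J^*\eta(U)+K_1^*)$ and noting that every term containing $K_1$ or $K_1^*$ is compact (a bounded operator times a compact one), we get $J\eta(U_0)B_0\eta(U_0)J^*=\eta(U)(JB_0J^*)\eta(U)$ modulo $\K(\H)$; by assumption~(ii), $JB_0J^*=U^{-1}[A,U]$ modulo $\K(\H)$, so the left-hand side is $\eta(U)U^{-1}[A,U]\eta(U)$ modulo $\K(\H)$. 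Likewise $J\eta(U_0)^2J^*=(\eta(U)J+K_1)(J^*\eta(U)+K_1^*)=\eta(U)JJ^*\eta(U)$ modulo $\K(\H)$, and assumption~(iv) turns $\eta(U)JJ^*\eta(U)$ into $\eta(U)^2$ modulo $\K(\H)$. Substituting these into the displayed inequality and absorbing $JK_0J^*$ together with all remainders into a single compact operator gives $\eta(U)U^{-1}[A,U]\eta(U)\gtrsim a\,\eta(U)^2$ with $\eta(\theta)\ne0$, hence $\widetilde\varrho_U^A(\theta)\ge a$. Letting $a\uparrow\widetilde\varrho_{U_0}^{A_0}(\theta)$ (and $a\to\infty$ if $\widetilde\varrho_{U_0}^{A_0}(\theta)=\infty$), and then letting $\theta$ range over $\S^1$, we obtain $\widetilde\varrho_U^A\ge\widetilde\varrho_{U_0}^{A_0}$.

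The hard part will be not a single deep point but the careful bookkeeping of compact remainders: one needs boundedness of the commutators $U_0^{-1}[A_0,U_0]$ and $U^{-1}[A,U]$ — precisely what assumption~(i) supplies — so that products such as $K_1B_0$ stay compact, and one needs the transfer lemma, which is the only place the compactness hypothesis~(iii) on $V$ enters.
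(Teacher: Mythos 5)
Your proof is correct: the passage through the $\eta$-characterisation of $\widetilde\varrho$, the conjugation of the estimate by $J(\cdot)J^*$, the telescoping/Stone--Weierstrass argument giving $J\eta(U_0)-\eta(U)J\in\K(\H_0,\H)$ from hypothesis (iii), and the use of (i), (ii), (iv) to absorb all remainders into a single compact operator is exactly the mechanism behind \cite[Thm.~3.6]{RST_2018}, for which the paper itself only supplies the citation. No gaps; the only cosmetic point is that your $\eta\in C^\infty(\S^1,\R)$ must be extended to an element of $C(\C,\R)$ before invoking (iv), which is harmless since $\eta(U)$ depends only on $\eta\upharpoonright\sigma(U)\subset\S^1$.
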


%--------------------------------------------------------------------------------------
\subsection{Short-range case}\label{sec_short}
%--------------------------------------------------------------------------------------

We start by showing how the condition $U\in C^1(A)$ and the assumptions (ii)-(iii) of
Theorem \ref{thm_rho} can be verified in the short-range case. Our approach consists
in deducing the desired information for $U$ from equivalent information on $U_0$. The
results are thus of a perturbative nature, and the price one has to pay is to impose
some compatibility conditions between $A_0$ and $A$.

Our first proposition is an improved version of \cite[Prop.~3.7]{RST_2018} that we
establish under less assumptions. We make use of the operator
$$
V_*:=JU_0^{-1}-U^{-1}J\in\B(\H_0,\H).
$$

\begin{Proposition}\label{prop_C1_short}
Let $U_0\in C^1(A_0)$, assume that $\DD\subset\H$ is a core for $A$ with
$J^*\DD\subset\dom(A_0)$, and suppose that
$\overline{V_*A_0\upharpoonright\dom(A_0)}\in\B(\H_0,\H)$ and
$\overline{(JA_0J^*-A)\upharpoonright\DD}\in\B(\H)$. Then $U\in C^1(A)$.
\end{Proposition}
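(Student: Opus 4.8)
The plan is to verify the $C^1(A)$ property of $U$ via the quadratic-form characterisation recalled in Section~\ref{sec_one}: it suffices to show that the form
$$
\DD\ni\varphi\mapsto\langle A\varphi,U\varphi\rangle_\H-\langle\varphi,UA\varphi\rangle_\H
$$
is continuous for the $\H$-topology, since $\DD$ is a core for $A$. The idea is to rewrite this form in terms of the known regularity $U_0\in C^1(A_0)$ by inserting $J$ and $J^*$, controlling all error terms by the two boundedness hypotheses on $\overline{V_*A_0\upharpoonright\dom(A_0)}$ and $\overline{(JA_0J^*-A)\upharpoonright\DD}$.

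First I would replace one of the two $A$'s using the operator $B:=\overline{(JA_0J^*-A)\upharpoonright\DD}\in\B(\H)$, so that on $\DD$ one has $A=JA_0J^*-B$ as forms (the unbounded part being $JA_0J^*$, which makes sense since $J^*\DD\subset\dom(A_0)$). Substituting, the commutator form becomes
$$
\langle A_0J^*\varphi,J^*U\varphi\rangle_{\H_0}-\langle J^*\varphi,J^*UA_0J^*\varphi\rangle_{\H_0}
$$
plus the manifestly bounded contribution coming from $B$. Next I would massage the first two terms: write $J^*U=U_0^{-1}J^*+\big(U_0^{-1}(J^*U_0-J^*U_0)\ldots\big)$ — more precisely, use $V_*^*=\overline{V_*{}^*}$, or better work with $V_*=JU_0^{-1}-U^{-1}J$ directly by noting $J^*U^{-1}=U_0^{-1}J^*-J^*V_*^*\ldots$; the cleanest route is to insert $U^{-1}U$ and use that $UJ-JU_0=-UV_*U_0$ (equivalently $JU_0^{-1}-U^{-1}J=V_*$ gives $J=U_0^{-1}\!\to$ rearranged), so that $J^*U$ and $U_0^{-1}J^*$ differ by a bounded operator times $A_0$ only through the combination appearing in $\overline{V_*A_0\upharpoonright\dom(A_0)}$. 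After these substitutions the form splits into (a) the term $\langle A_0J^*\varphi,U_0^{-1}{}^*J^*\varphi\rangle_{\H_0}-\langle J^*\varphi,\ldots\rangle_{\H_0}$, which is exactly (a bounded transform of) the $C^1(A_0)$-commutator form for $U_0$ applied to $J^*\varphi\in\dom(A_0)$, hence $\H_0$-continuous and so $\H$-continuous in $\varphi$; and (b) finitely many remainder terms each of the shape $\langle(\text{bounded})A_0J^*\varphi,(\text{bounded})\varphi\rangle$ where the operator $A_0J^*$ occurs only inside $V_*A_0$ or $JA_0J^*$, both bounded by hypothesis.

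The step I expect to be the main obstacle is the bookkeeping in (b): one has to arrange the algebraic identity so that every surviving occurrence of the unbounded operator $A_0$ is shielded either by $V_*$ (giving $\overline{V_*A_0}$, bounded) or by $J$ on the left and $J^*$ on the right (giving $\overline{JA_0J^*}$, bounded, via $B$), with no naked $A_0J^*$ or $JA_0$ left over. This is where the hypothesis is used in an essential way, and it is also where the improvement over \cite[Prop.~3.7]{RST_2018} lies — fewer compatibility conditions because we never need $A_0J^*$ to be bounded on its own, only in these two combinations. Once the form is displayed as a sum of an $\H$-continuous term plus bounded bilinear forms, continuity follows and the quadratic-form criterion yields $U\in C^1(A)$; along the way one also reads off the explicit formula for $[A,U]$, namely the sum $J[A_0,U_0]J^*+V_*[\ldots]+[\ldots]V_*^*+(B\text{-terms})$, which will be reused in the next proposition.
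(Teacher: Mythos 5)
Your proposal follows essentially the same route as the paper's proof: on the core $\DD$ one writes $A=JA_0J^*-B$ with $B:=\overline{(JA_0J^*-A)\upharpoonright\DD}\in\B(\H)$, converts $J^*U$ and $J^*U^*$ into $U_0J^*$ and $U_0^*J^*$ modulo $V_*^*$ and $V^*$, and the commutator form then splits into the bounded $C^1(A_0)$-term $\langle\varphi,J[A_0,U_0]J^*\varphi\rangle_\H$ plus remainders controlled by the shielded operators $\overline{V_*A_0\upharpoonright\dom(A_0)}$ and $B$, exactly as in the paper. The one detail you should make explicit is that the remainder coming from the other side of the commutator involves $VA_0$ rather than $V_*A_0$, whose boundedness is not among the hypotheses but follows from the identity $V=-UV_*U_0$ together with $U_0\dom(A_0)\subset\dom(A_0)$ and $[U_0,A_0]\in\B(\H_0)$ (this is the short preliminary step of the paper's proof, and it fits naturally into the bookkeeping you describe).
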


\begin{proof}
First, we show that the assumption
$\overline{V_*A_0\upharpoonright\dom(A_0)}\in\B(\H_0,\H)$ implies that
$\overline{VA_0\upharpoonright\dom(A_0)}\in\B(\H_0,\H)$. Indeed, since
$U_0\in C^1(A_0)$, we have
$$
VA_0\upharpoonright\dom(A_0)
=-UV_*U_0A_0\upharpoonright\dom(A_0)
=-\big(UV_*A_0U_0+UV_*[U_0,A_0]\big)\upharpoonright\dom(A_0)
$$
with $[U_0,A_0]\in\B(\H_0,\H)$. Therefore, by using the inclusion
$U_0\dom(A_0)\subset\dom(A_0)$ and the assumption
$\overline{V_*A_0\upharpoonright\dom(A_0)}=B_*\in\B(\H_0,\H)$, we infer that
$$
\overline{VA_0\upharpoonright\dom(A_0)}
=-\big(UB_*U_0+UV_*[U_0,A_0]\big)
\in\B(\H_0,\H).
$$

Now, we show that $U\in C^1(A)$. For $\varphi\in\DD$, a direct calculation gives
\begin{align*}
\langle A\;\!\varphi,U\varphi\rangle_\H
-\langle\varphi,UA\;\!\varphi\rangle_\H
&=\langle A\;\!\varphi,U\varphi\rangle_\H
-\langle\varphi,UA\;\!\varphi\rangle_\H
-\langle\varphi,J\;\![A_0,U_0]J^*\varphi\rangle_\H
+\langle\varphi,J\;\![A_0,U_0]J^*\varphi\rangle_\H\\
&=\langle\varphi,VA_0J^*\varphi\rangle_\H
-\langle V_*A_0J^*\varphi,\varphi\rangle_\H
+\big\langle U^*\varphi,(JA_0J^*-A)\varphi\rangle_\H\\
&\quad-\langle(JA_0J^*-A)\varphi,U\varphi\rangle_\H
+\langle\varphi,J\;\![A_0,U_0]J^*\varphi\rangle_\H.
\end{align*}
Furthermore, we have
$$
\big|\langle\varphi,VA_0J^*\varphi\rangle_\H
-\langle V_*A_0J^*\varphi,\varphi\rangle_\H\big|
\le{\rm Const.}\;\!\|\varphi\|_\H^2
$$
due to the assumption $J^*\DD\subset\dom(A_0)$ and the inclusions
$\overline{VA_0\upharpoonright\dom(A_0)}\in\B(\H_0,\H)$ and
$\overline{V_*A_0\upharpoonright\dom(A_0)}\in\B(\H_0,\H)$, and we have
$$
\big|\langle U^*\varphi,(JA_0J^*-A)\varphi\rangle_\H
-\langle(JA_0J^*-A)\varphi,U\varphi\rangle_\H\big|
\le{\rm Const.}\;\!\|\varphi\|^2_\H
$$
due to the assumption $\overline{(JA_0J^*-A)\upharpoonright\DD}\in\B(\H)$. Finally,
since $U_0\in C^1(A_0)$ we also have
$$
\big|\langle\varphi,J\;\![A_0,U_0]J^*\varphi\rangle_\H\big|
\le{\rm Const.}\;\!\|\varphi\|^2_\H.
$$
In consequence, we obtain
$$
\big|\langle A\;\!\varphi,U\varphi\rangle_\H
-\langle\varphi,UA\;\!\varphi\rangle_\H\big|
\le{\rm Const.}\;\!\|\varphi\|^2_\H,
$$
which implies that $U\in C^1(A)$ due to the density of $\DD$ in $\dom(A)$.
\end{proof}

Our second proposition is an improved version of \cite[Prop.~3.8]{RST_2018} that we
establish under less assumptions. The proposition provides explicit conditions under
which assumption (ii) of Theorem \ref{thm_rho} is verified in the short-range case.

\begin{Proposition}\label{prop_com_compact}
Let $U_0\in C^1(A_0)$, assume that $\DD\subset\H$ is a core for $A$ with
$J^*\DD\subset\dom(A_0)$, and suppose that
$\overline{V_*A_0\upharpoonright\dom(A_0)}\in\K(\H_0,\H)$ and
$\overline{(JA_0J^*-A)\upharpoonright\DD}\in\K(\H)$. Then $U\in C^1(A)$ and
$$
JU_0^{-1}[A_0,U_0]J^*-U^{-1}[A,U]\in\K(\H).
$$
\end{Proposition}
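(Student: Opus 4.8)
The plan is to reduce the commutator identity to a quadratic-form computation on $\DD$, reusing the identity already obtained in the proof of Proposition \ref{prop_C1_short}. First I would observe that $\K(\H_0,\H)\subset\B(\H_0,\H)$ and $\K(\H)\subset\B(\H)$, so the present hypotheses imply those of Proposition \ref{prop_C1_short}; that proposition therefore gives $U\in C^1(A)$ at once, and in particular $[A,U]\in\B(\H)$. Since $U$ is unitary, $JU_0^{-1}[A_0,U_0]J^*-U^{-1}[A,U]$ belongs to $\K(\H)$ if and only if $[A,U]-UJU_0^{-1}[A_0,U_0]J^*$ does; the latter is a bounded operator, and $\DD$ is dense in $\H$ (being a core for $A$), so by polarisation it suffices to produce a compact operator $K\in\K(\H)$ with $\langle\varphi,([A,U]-UJU_0^{-1}[A_0,U_0]J^*)\varphi\rangle_\H=\langle\varphi,K\varphi\rangle_\H$ for every $\varphi\in\DD$.

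For this I would start from the quadratic-form identity displayed in the proof of Proposition \ref{prop_C1_short}, which (since $U\in C^1(A)$) reads, for $\varphi\in\DD$,
$$
\langle\varphi,[A,U]\varphi\rangle_\H
=\langle\varphi,VA_0J^*\varphi\rangle_\H-\langle V_*A_0J^*\varphi,\varphi\rangle_\H
+\langle U^*\varphi,(JA_0J^*-A)\varphi\rangle_\H-\langle(JA_0J^*-A)\varphi,U\varphi\rangle_\H
+\langle\varphi,J[A_0,U_0]J^*\varphi\rangle_\H,
$$
and combine it with the elementary identity $UJU_0^{-1}=J-VU_0^{-1}$ (from $V=JU_0-UJ$), which gives $UJU_0^{-1}[A_0,U_0]J^*=J[A_0,U_0]J^*-VU_0^{-1}[A_0,U_0]J^*$. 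Subtracting, the two $J[A_0,U_0]J^*$ contributions cancel, and after merging the two $V$-terms one is left, for $\varphi\in\DD$, with
$$
\langle\varphi,([A,U]-UJU_0^{-1}[A_0,U_0]J^*)\varphi\rangle_\H
=\big\langle\varphi,V\big(A_0+U_0^{-1}[A_0,U_0]\big)J^*\varphi\big\rangle_\H
-\langle V_*A_0J^*\varphi,\varphi\rangle_\H
+\langle U^*\varphi,(JA_0J^*-A)\varphi\rangle_\H-\langle(JA_0J^*-A)\varphi,U\varphi\rangle_\H .
$$

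Finally I would recognise each group on the right as the form of a compact operator. The last two terms are the forms of $U\,\overline{(JA_0J^*-A)\upharpoonright\DD}$ and of $-\,\overline{(JA_0J^*-A)\upharpoonright\DD}^*U$, both compact by hypothesis; the term $-\langle V_*A_0J^*\varphi,\varphi\rangle_\H$ is the form of $-J\big(\overline{V_*A_0\upharpoonright\dom(A_0)}\big)^*$, compact since $\overline{V_*A_0\upharpoonright\dom(A_0)}\in\K(\H_0,\H)$. For the remaining term I would use, on $\dom(A_0)$ and via the inclusion $U_0\dom(A_0)\subset\dom(A_0)$, the algebraic fact $A_0+U_0^{-1}[A_0,U_0]=U_0^{-1}A_0U_0$ together with $VU_0^{-1}=-UV_*$ (from $V=-UV_*U_0$), which turn it into the form of $-U\,\overline{V_*A_0\upharpoonright\dom(A_0)}\,U_0J^*$, again compact. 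Collecting the four pieces gives the desired $K\in\K(\H)$; polarisation and density then promote the form identity to $[A,U]-UJU_0^{-1}[A_0,U_0]J^*=K$, and multiplying on the left by $U^{-1}$ yields $JU_0^{-1}[A_0,U_0]J^*-U^{-1}[A,U]=-U^{-1}K\in\K(\H)$.

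I expect the genuine difficulty to be precisely the point that forbids a line-by-line copying of the proof of Proposition \ref{prop_C1_short} with ``bounded'' replaced by ``compact'': the operator $\overline{VA_0\upharpoonright\dom(A_0)}$ is in general only bounded, since it equals $-U\,\overline{V_*A_0\upharpoonright\dom(A_0)}\,U_0-UV_*[U_0,A_0]$ and the summand $UV_*[U_0,A_0]$ need not be compact. The compactness of the ``$VA_0J^*$'' contribution emerges only after it is combined with part of the ``$J[A_0,U_0]J^*$'' contribution through the identity $A_0+U_0^{-1}[A_0,U_0]=U_0^{-1}A_0U_0$; realising that the bookkeeping $UJU_0^{-1}=J-VU_0^{-1}$ produces exactly this combination is the crux of the argument.
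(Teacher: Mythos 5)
Your proof is correct and follows essentially the same route as the paper's: both reduce the claim to a form identity on a dense domain whose constituent terms are recognised as forms of compact operators built from $\overline{V_*A_0\upharpoonright\dom(A_0)}$ (composed with $U_0J^*$, or adjointed) and $\overline{(JA_0J^*-A)\upharpoonright\DD}$. The only cosmetic difference is that you work with the quadratic form of $[A,U]-UJU_0^{-1}[A_0,U_0]J^*$ on $\DD$ and polarise, whereas the paper evaluates the sesquilinear form of $JU_0^{-1}[A_0,U_0]J^*-U^{-1}[A,U]$ directly on $U^{-1}\DD\times\DD$; your closing observation that $\overline{VA_0\upharpoonright\dom(A_0)}$ need not be compact is precisely the obstruction the paper's computation is arranged to avoid.
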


\begin{proof}
First, we note that $U\in C^1(A)$ due to Proposition \ref{prop_C1_short}. So $[A,U]$
is a well-defined bounded operator. Next, the facts that $U_0\in C^1(A_0)$ and
$J^*\DD\subset\dom(A_0)$ imply the inclusions
$$
U_0J^*\DD\subset U_0\;\!\dom(A_0)\subset\dom(A_0).
$$
Using this and the assumptions
$\overline{V_*A_0\upharpoonright\dom(A_0)}\in\K(\H_0,\H)$ and
$\overline{(JA_0J^*-A)\upharpoonright\DD}\in\K(\H)$, we obtain for $\varphi\in\DD$ and
$\psi\in U^{-1}\DD$ that
\begin{align*}
&\big\langle\psi,\big(JU_0^{-1}[A_0,U_0]J^*-U^{-1}[A,U]\big)\varphi\big\rangle_\H\\
&=\langle\psi,V_*A_0U_0J^*\varphi\rangle_\H
+\langle V_*A_0J^*U\psi,\varphi\rangle_\H
+\langle(JA_0J^*-A)U\psi,U\varphi\rangle_\H
-\langle\psi,(JA_0J^*-A)\varphi\rangle_\H\\
&=\langle\psi,K_1U_0J^*\varphi\rangle_\H
+\langle K_1J^*U\psi,\varphi\rangle_\H
+\langle K_2U\psi,U\varphi\rangle_\H
-\langle\psi,K_2\varphi\rangle_\H
\end{align*}
with $K_1\in\K(\H_0,\H)$ and $K_2\in\K(\H)$. Since $\DD$ and $U^{-1}\DD$ are dense in
$\H$, it follows that $JU_0^{-1}[A_0,U_0]J^*-U^{-1}[A,U]\in\K(\H)$.
\end{proof}

In the rest of the section, we particularise the previous results to the case
$A=JA_0J^*$. This case deserves special attention since it is the most natural choice
of conjugate operator $A$ for $U$ when a conjugate operator $A_0$ for $U_0$ is given.
However, one needs in this case an assumption that guarantees the self-adjointness of
the operator $A:$

\begin{Assumption}\label{ass_eaa}
There exists a set $\DD\subset\dom(A_0J^*)\subset \H$ such that
$JA_0J^*\upharpoonright\DD$ is essentially self-adjoint, with closure denoted by $A$.
\end{Assumption}

Assumption \ref{ass_eaa} might be difficult to check in general, but in concrete
situations the choice of the set $\DD$ can be quite natural (see for example
\cite[Lemma~4.9]{RST_2018} for the case of quantum walks on $\Z$,
\cite[Rem.~4.3]{RT13_1} for the case of manifolds with asymptotically cylindrical
ends, or Lemma \ref{lemma_A} for the case of quantum walks on homogeneous trees).
Furthermore, if the operator $J$ is unitary, then Assumption \ref{ass_eaa} is
automatically satisfied with the set $\DD:=J\;\!\dom(A_0)$.

The two corollaries below follow directly from Propositions \ref{prop_C1_short} \&
\ref{prop_com_compact} when Assumption \ref{ass_eaa} is satisfied. They generalise
corollaries 3.10 \& 3.11 of \cite{RST_2018}.

\begin{Corollary}\label{Corol_C1(A)}
Let $U_0\in C^1(A_0)$, suppose that Assumption \ref{ass_eaa} holds, and assume that
$\overline{V_*A_0\upharpoonright\dom(A_0)}\in\B(\H_0,\H)$. Then $U\in C^1(A)$.
\end{Corollary}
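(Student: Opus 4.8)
The plan is to derive Corollary \ref{Corol_C1(A)} directly from Proposition \ref{prop_C1_short} by checking that, under Assumption \ref{ass_eaa}, all the hypotheses of that proposition are met with the particular choice $A=JA_0J^*$. So first I would fix the set $\DD$ provided by Assumption \ref{ass_eaa}; by construction $JA_0J^*\upharpoonright\DD$ is essentially self-adjoint with closure $A$, so $\DD$ is automatically a core for $A$, which takes care of the "core" hypothesis in Proposition \ref{prop_C1_short}. Next, the inclusion $\DD\subset\dom(A_0J^*)$ in Assumption \ref{ass_eaa} says precisely that $J^*\varphi\in\dom(A_0)$ for every $\varphi\in\DD$, i.e. $J^*\DD\subset\dom(A_0)$, which is the second hypothesis of the proposition. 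The hypothesis $\overline{V_*A_0\upharpoonright\dom(A_0)}\in\B(\H_0,\H)$ is assumed outright in the corollary, so only one thing remains to verify.

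The remaining point is the hypothesis $\overline{(JA_0J^*-A)\upharpoonright\DD}\in\B(\H)$. But with the choice $A=JA_0J^*$ (as operators, $A$ being by definition the closure of $JA_0J^*\upharpoonright\DD$), the difference $JA_0J^*-A$ vanishes identically on $\DD$: indeed for $\varphi\in\DD$ one has $A\varphi=JA_0J^*\varphi$ since $A$ extends $JA_0J^*\upharpoonright\DD$. Hence $(JA_0J^*-A)\upharpoonright\DD=0$, whose closure is the zero operator, trivially in $\B(\H)$. With all four hypotheses of Proposition \ref{prop_C1_short} verified, the proposition yields $U\in C^1(A)$, which is the conclusion.

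I do not anticipate a genuine obstacle here: the corollary is essentially a bookkeeping specialisation of Proposition \ref{prop_C1_short}, and the only mildly delicate point is making sure the notation "$A=JA_0J^*$" is read correctly as "$A$ is the self-adjoint closure of $JA_0J^*\upharpoonright\DD$ from Assumption \ref{ass_eaa}", so that the term $\overline{(JA_0J^*-A)\upharpoonright\DD}$ is literally zero rather than something requiring an estimate. One small care point is that Proposition \ref{prop_C1_short} requires $\DD$ to be a core for $A$; since $A$ is by definition the closure of $JA_0J^*\upharpoonright\DD$, the set $\DD$ is a core for $A$ by construction, so this is immediate and needs only a one-line remark. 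Thus the proof is short: quote Assumption \ref{ass_eaa} to identify $\DD$, $J^*\DD\subset\dom(A_0)$ and the core property, observe $(JA_0J^*-A)\upharpoonright\DD=0$, and invoke Proposition \ref{prop_C1_short}.
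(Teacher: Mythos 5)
Your proposal is correct and matches the paper's (implicit) argument: the paper states that this corollary "follows directly" from Proposition \ref{prop_C1_short} under Assumption \ref{ass_eaa}, and your verification of the hypotheses — $\DD$ is a core for $A$ by construction, $J^*\DD\subset\dom(A_0)$ from $\DD\subset\dom(A_0J^*)$, and $(JA_0J^*-A)\upharpoonright\DD=0$ — is exactly the intended bookkeeping.
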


\begin{Corollary}\label{Corol_est_supp}
Let $U_0\in C^1(A_0)$, suppose that Assumption \ref{ass_eaa} holds, and assume that
$\overline{V_*A_0\upharpoonright\dom(A_0)}\in\K(\H_0,\H)$. Then $U\in C^1(A)$ and
$$
JU_0^{-1}[A_0,U_0]J^*-U^{-1}[A,U]\in\K(\H).
$$
\end{Corollary}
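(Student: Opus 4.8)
The plan is to deduce Corollary \ref{Corol_est_supp} directly from Proposition \ref{prop_com_compact} by checking that, under Assumption \ref{ass_eaa}, the hypotheses of the proposition are met with $A$ the closure of $JA_0J^*\upharpoonright\DD$. First I would observe that $\DD\subset\dom(A_0J^*)$ guarantees $J^*\DD\subset\dom(A_0)$, so that assumption of Proposition \ref{prop_com_compact} is automatic. Second, since $A$ is by definition the closure of $JA_0J^*\upharpoonright\DD$, the set $\DD$ is a core for $A$, and moreover the operator $(JA_0J^*-A)\upharpoonright\DD$ vanishes identically on $\DD$; hence its closure is the zero operator, which is trivially in $\K(\H)$. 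Third, the remaining hypothesis $\overline{V_*A_0\upharpoonright\dom(A_0)}\in\K(\H_0,\H)$ is exactly what we have assumed. Thus all three hypotheses of Proposition \ref{prop_com_compact} hold, and its conclusion gives both $U\in C^1(A)$ and $JU_0^{-1}[A_0,U_0]J^*-U^{-1}[A,U]\in\K(\H)$, which is precisely the statement of the corollary.

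I would write this out in two or three sentences: invoke Assumption \ref{ass_eaa} to identify $\DD$ as a core for the self-adjoint operator $A$, note that $JA_0J^*-A$ is zero on $\DD$ so its closure is $0\in\K(\H)$, and apply Proposition \ref{prop_com_compact}. The analogous proof of Corollary \ref{Corol_C1(A)} is the same with Proposition \ref{prop_C1_short} in place of Proposition \ref{prop_com_compact} and with $0\in\B(\H)$ in place of $0\in\K(\H)$, so I would either phrase the two proofs in parallel or simply remark that Corollary \ref{Corol_C1(A)} follows identically from Proposition \ref{prop_C1_short}.

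There is essentially no obstacle here: the corollary is a specialisation, and the only point requiring a word of care is the verification that $\overline{(JA_0J^*-A)\upharpoonright\DD}\in\K(\H)$, which is immediate once one notices that $A$ agrees with $JA_0J^*$ on $\DD$ by construction, so the difference operator is literally the zero operator on $\DD$ and its closure is the bounded (indeed compact) zero operator on all of $\H$. The statement ``They generalise corollaries 3.10 \& 3.11 of \cite{RST_2018}'' is not part of what needs proving. The whole argument is a two- or three-line bookkeeping check that the hypotheses transfer.

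\begin{proof}[Proof of corollaries \ref{Corol_C1(A)} \& \ref{Corol_est_supp}]
Assume that Assumption \ref{ass_eaa} holds and let $A$ be the closure of
$JA_0J^*\upharpoonright\DD$. Then $\DD$ is a core for $A$ by definition, and
$\DD\subset\dom(A_0J^*)$ implies $J^*\DD\subset\dom(A_0)$. Moreover, since $A$
coincides with $JA_0J^*$ on $\DD$, the operator $(JA_0J^*-A)\upharpoonright\DD$ is the
zero operator, so that $\overline{(JA_0J^*-A)\upharpoonright\DD}=0\in\B(\H)$, and in
particular $\overline{(JA_0J^*-A)\upharpoonright\DD}=0\in\K(\H)$. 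Therefore, if
$\overline{V_*A_0\upharpoonright\dom(A_0)}\in\B(\H_0,\H)$, Proposition
\ref{prop_C1_short} applies and gives $U\in C^1(A)$, which is Corollary
\ref{Corol_C1(A)}. And if $\overline{V_*A_0\upharpoonright\dom(A_0)}\in\K(\H_0,\H)$,
Proposition \ref{prop_com_compact} applies and gives $U\in C^1(A)$ together with
$JU_0^{-1}[A_0,U_0]J^*-U^{-1}[A,U]\in\K(\H)$, which is Corollary
\ref{Corol_est_supp}.
\end{proof}
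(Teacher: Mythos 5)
Your proposal is correct and is exactly the argument the paper intends: the paper states that the two corollaries ``follow directly'' from Propositions \ref{prop_C1_short} and \ref{prop_com_compact} under Assumption \ref{ass_eaa}, and your verification that $\DD$ is a core for $A$, that $J^*\DD\subset\dom(A_0)$, and that $\overline{(JA_0J^*-A)\upharpoonright\DD}=0\in\K(\H)$ is precisely the bookkeeping needed. Nothing is missing.
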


\begin{Remark}\label{rem_enough}
(a) If needed, the assumption
$\overline{V_*A_0\upharpoonright\dom(A_0)}\in\B(\H_0,\H)$ in Proposition
\ref{prop_C1_short} and Corollary \ref{Corol_C1(A)} can be replaced by the assumption
$\overline{VA_0\upharpoonright\dom(A_0)}\in\B(\H_0,\H)$. Indeed, since
$U_0\in C^1(A_0)$, we have $U_0^{-1}\in C^1(A_0)$ and
$$
V_*A_0\upharpoonright\dom(A_0)
=-U^{-1}VU_0^{-1}A_0\upharpoonright\dom(A_0)
=-\big(U^{-1}VA_0U_0^{-1}+U^{-1}V[U_0^{-1},A_0]\big)\upharpoonright\dom(A_0)
$$
with $[U_0^{-1},A_0]\in\B(\H_0,\H)$. Therefore, by using the inclusion
$U_0^{-1}\dom(A_0)\subset\dom(A_0)$ and the assumption
$\overline{VA_0\upharpoonright\dom(A_0)}=B\in\B(\H_0,\H)$, we infer that
$$
\overline{V_*A_0\upharpoonright\dom(A_0)}
=-\big(U^{-1}BU_0^{-1}+U^{-1}V[U_0^{-1},A_0]\big)
\in\B(\H_0,\H).
$$

(b) Similarly, if $V\in\K(\H_0,\H)$, then the assumption
$\overline{V_*A_0\upharpoonright\dom(A_0)}\in\K(\H_0,\H)$ in Proposition
\ref{prop_com_compact} and Corollary \ref{Corol_est_supp} can be replaced by the
assumption $\overline{VA_0\upharpoonright\dom(A_0)}\in\K(\H_0,\H)$. Indeed, if
$V\in\K(\H_0,\H)$ and $\overline{VA_0\upharpoonright\dom(A_0)}=B\in\K(\H_0,\H)$, then
we obtain that
$$
\overline{V_*A_0\upharpoonright\dom(A_0)}
=-\big(U^{-1}BU_0^{-1}+U^{-1}V[U_0^{-1},A_0]\big)
\in\K(\H_0,\H).
$$
\end{Remark}

By combining the results of Theorem \ref{thm_rho}, Corollary \ref{Corol_est_supp}, and
Remark \ref{rem_enough}(b), we obtain a more explicit version of Theorem \ref{thm_rho}
in the case $A=JA_0J^*:$

\begin{Theorem}\label{thm_rho_bis}
Let $U_0,U$ be unitary operators in Hilbert spaces $\H_0,\H$, let $A_0$ be a
self-ajoint operator in $\H_0$, let $J\in\B(\H_0,\H)$, and assume that
\begin{enumerate}[(i)]
\item there exists a set $\DD\subset\dom(A_0J^*)\subset \H$ such that
$JA_0J^*\upharpoonright\DD$ is essentially self-adjoint, with closure denoted by $A$,
\item $U_0\in C^1(A_0)$,
\item $V\in\K(\H_0,\H)$,
\item $\overline{VA_0\upharpoonright\dom(A_0)}\in\K(\H_0,\H)$
\item for each $\eta\in C(\S^1,\R)$, one has $\eta(U)(JJ^*-1_\H)\eta(U)\in\K(\H)$.
\end{enumerate}
Then $U\in C^1(A)$ and $\widetilde\varrho_U^A\ge \widetilde\varrho_{U_0}^{A_0}$.
\end{Theorem}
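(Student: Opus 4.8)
The plan is to obtain the statement by chaining together the results already established in this section, so the proof is essentially a bookkeeping argument. First I would observe that assumption (i) is precisely Assumption \ref{ass_eaa}, so the operator $A:=\overline{JA_0J^*\upharpoonright\DD}$ is a well-defined self-adjoint operator in $\H$ and all subsequent statements involving $A$ make sense.

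The key reduction is to pass from the hypotheses on $V$ to the corresponding hypothesis on $V_*$. Since $V\in\K(\H_0,\H)$ by (iii) and $\overline{VA_0\upharpoonright\dom(A_0)}\in\K(\H_0,\H)$ by (iv), Remark \ref{rem_enough}(b) applies and yields $\overline{V_*A_0\upharpoonright\dom(A_0)}\in\K(\H_0,\H)$. Together with assumptions (i)--(ii), this puts us exactly in the setting of Corollary \ref{Corol_est_supp}, which then delivers simultaneously the first claim of the theorem, $U\in C^1(A)$, and the compactness of the commutator difference
$$
JU_0^{-1}[A_0,U_0]J^*-U^{-1}[A,U]\in\K(\H).
$$

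Finally I would verify the four hypotheses of Theorem \ref{thm_rho}: its hypothesis (i) holds because $U_0\in C^1(A_0)$ by (ii) and $U\in C^1(A)$ by the previous step; its hypothesis (ii) is the compactness just obtained; its hypothesis (iii) is our assumption (iii); and its hypothesis (iv) is our assumption (v). Theorem \ref{thm_rho} then gives $\widetilde\varrho_U^A\ge\widetilde\varrho_{U_0}^{A_0}$, which completes the proof.

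I do not expect a genuine obstacle here, since the statement is a repackaging of Theorem \ref{thm_rho} under the more transparent hypotheses afforded by the canonical choice $A=JA_0J^*$. The only point deserving a little care is checking that Remark \ref{rem_enough}(b) is indeed applicable, that is, that the compactness of $V$ itself (and not merely of $\overline{VA_0\upharpoonright\dom(A_0)}$) is available; this is guaranteed by assumption (iii), and without it one could only replace the $V_*$-hypothesis by a boundedness statement via Remark \ref{rem_enough}(a), which would not suffice to run Corollary \ref{Corol_est_supp}.
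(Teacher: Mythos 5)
Your proposal is correct and follows exactly the route the paper intends: Remark \ref{rem_enough}(b) (using assumption (iii)) converts hypothesis (iv) into the $V_*$-form, Corollary \ref{Corol_est_supp} then yields $U\in C^1(A)$ and the compactness of $JU_0^{-1}[A_0,U_0]J^*-U^{-1}[A,U]$, and Theorem \ref{thm_rho} finishes the argument. The paper itself states the theorem as precisely this combination, so there is nothing to add.
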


%--------------------------------------------------------------------------------------
\section{Scattering theory in two Hilbert spaces}\label{sec_scatt}
\setcounter{equation}{0}
%--------------------------------------------------------------------------------------

In this section, we collect results on the existence and completeness under smooth
perturbations of the local wave operators for the triple $(U,U_0,J)$. Namely, given a
Borel set $\Theta\subset\S^1$, we present criteria for the existence and the
completeness of the strong limits
$$
W_\pm(U,U_0,J,\Theta):=\slim_{n\to\pm\infty}U^{-n}JU_0^nE^{U_0}(\Theta)
$$
under the assumption that $V=JU_0-UJ$ factorises as a product of locally smooth
operators on $\Theta$. We start by recalling the result of \cite[Lemma~2.1]{RST_2019}
on the intertwining property of wave operators (in \cite{RST_2019} the set $\Theta$ is
assumed to be open, but the proof holds for Borel sets too).

\begin{Lemma}[Intertwining property]\label{lemma_intertwinning}
Let $\Theta\subset\S^1$ be a Borel set and assume that $W_\pm(U,U_0,J,\Theta)$ exist.
Then $W_\pm(U,U_0,J,\Theta)$ satisfy for each bounded Borel function $\eta:\S^1\to\C$
the intertwining property
$$
W_\pm(U,U_0,J,\Theta)\;\!\eta(U_0)=\eta(U)\;\!W_\pm(U,U_0,J,\Theta).
$$
\end{Lemma}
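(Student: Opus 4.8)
The plan is to establish the intertwining property by the standard argument: first prove it for the special functions $\eta(z) = z^m$, $m \in \Z$, and then extend to arbitrary bounded Borel functions by a density/approximation argument using the spectral theorem. The key observation for the monomial case is that $U^{-n}JU_0^n$ and $U^m$ (resp.\ $U_0^m$) interact cleanly: since $U$ is unitary, $U^m \cdot U^{-n}JU_0^n = U^{-(n-m)}JU_0^{n-m}\cdot U_0^m$, so applying $U^m$ on the left of the approximating sequence and $U_0^m$ on the right of $E^{U_0}(\Theta)$ differ only by a shift of the index $n \to n-m$, which does not affect the limit as $n \to \pm\infty$. One must be slightly careful that $U_0^m$ commutes with $E^{U_0}(\Theta)$ (which it does, being a function of $U_0$), so that $U^{-n}JU_0^n E^{U_0}(\Theta) U_0^m = U^{-n}JU_0^{n+m}E^{U_0}(\Theta)$; combined with the shift this yields $W_\pm \eta(U_0) = \eta(U) W_\pm$ for $\eta(z) = z^m$.

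First I would fix $m \in \Z$ and a vector $\varphi \in \H_0$, and write out $W_\pm(U,U_0,J,\Theta)\,U_0^m\varphi = \slim_{n\to\pm\infty} U^{-n}JU_0^{n+m}E^{U_0}(\Theta)\varphi$, using that $U_0^m$ commutes with $E^{U_0}(\Theta)$. Reindexing the limit with $k = n+m$ (legitimate since $k \to \pm\infty$ iff $n \to \pm\infty$) gives $\slim_{k\to\pm\infty} U^{-(k-m)}JU_0^k E^{U_0}(\Theta)\varphi = U^m \slim_{k\to\pm\infty} U^{-k}JU_0^k E^{U_0}(\Theta)\varphi = U^m W_\pm(U,U_0,J,\Theta)\varphi$, where pulling $U^m$ (a fixed bounded operator) out of the strong limit is justified by continuity. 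This establishes the identity for $\eta(z) = z^m$. By linearity it holds for trigonometric polynomials, i.e.\ finite linear combinations of monomials.

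Then I would extend to general bounded Borel $\eta$ by approximation. Since the wave operators are bounded (indeed $\|W_\pm(U,U_0,J,\Theta)\| \le \|J\|$ as strong limits of operators of that norm bound), both sides of the desired identity are bounded operators depending on $\eta$, and the map $\eta \mapsto \eta(U_0)$, $\eta \mapsto \eta(U)$ are $*$-homomorphisms bounded by $\|\eta\|_\infty$ on the relevant spectral subspace. One can approximate $\eta$ pointwise-boundedly by trigonometric polynomials (or invoke the fact that the identity, once true on a set generating the Borel functions under bounded pointwise limits, extends by a monotone-class / dominated-convergence argument in the strong operator topology): if $\eta_j \to \eta$ boundedly and pointwise on $\S^1$, then $\eta_j(U_0) \to \eta(U_0)$ and $\eta_j(U) \to \eta(U)$ strongly by the spectral theorem, and strong convergence is preserved under left/right multiplication by the fixed bounded operator $W_\pm$, giving $W_\pm\,\eta(U_0) = \eta(U)\,W_\pm$ in the limit.

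The main obstacle is not conceptual but bookkeeping: one must be careful with the order of operations in the strong-operator-topology limits — specifically, that a fixed bounded operator may be pulled inside or outside a strong limit, but an operator varying with the limiting index may not, which is exactly why the reindexing trick (rather than a naive factorization) is needed. There is also a minor subtlety in the Borel extension step: one should make sure the approximating functions can be taken uniformly bounded (e.g.\ by $\|\eta\|_\infty$) so that dominated convergence applies in the spectral integral; using trigonometric polynomials one invokes Fejér-type approximation for continuous $\eta$ and then a further monotone-class step for Borel $\eta$. None of this is deep, but stating it cleanly is the bulk of the work.
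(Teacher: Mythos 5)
Your proposal is correct: the identity for $\eta(z)=z^m$ via the index shift $n\mapsto n+m$ (using that $U_0^m$ commutes with $E^{U_0}(\Theta)$ and that a fixed bounded operator $U^m$ passes through a strong limit), followed by extension to continuous $\eta$ by uniform approximation with trigonometric polynomials and to bounded Borel $\eta$ by a bounded pointwise/monotone-class argument, is exactly the standard proof. The paper itself gives no proof here but defers to the cited reference, whose argument is the same as yours, so there is nothing to correct.
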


Next, we define the closed subspaces of $\H$
$$
\NN_\pm(U,J,\Theta)
:=\Big\{\varphi\in\H\mid\lim_{n\to\pm\infty}
\big\|J^*U^nE^U(\Theta)\varphi\big\|_{\H_0}=0\Big\},
$$
and note that $E^U(\S^1\setminus\Theta)\H\subset\NN_\pm(U,J,\Theta)$, that $U$ is
reduced by $\NN_\pm(U,J,\Theta)$, and that
$$
\overline{\Ran\big(W_\pm(U,U_0,J,\Theta)\big)}\perp\NN_\pm(U,J,\Theta),
$$
this last fact being shown as in the self-adjoint case, see \cite[Lemma~3.2.1]{Yaf92}.
In particular, we have the inclusion
$$
\overline{\Ran\big(W_\pm(U,U_0,J,\Theta)\big)}
\subset E^U(\Theta)\H\ominus\NN_\pm(U,J,\Theta),
$$
which motivates the following definition:

\begin{Definition}[$J$-completeness]\label{def_J_complete}
Assume that $W_\pm(U,U_0,J,\Theta)$ exist. The operators $W_\pm(U,U_0,J,\Theta)$ are
$J$-complete on $\Theta$ if
$$
\overline{\Ran\big(W_\pm(U,U_0,J,\Theta)\big)}
=E^U(\Theta)\H\ominus\NN_\pm(U,J,\Theta).
$$
\end{Definition}

\begin{Remark}\label{rem_unitary}
If $J$ is unitary (as for instance when $\H_0=\H$ and $J=1_\H$), then
$\NN_\pm(U,J,\Theta)=E^U(\S^1\setminus\Theta)\H$, and the wave operators
$W_\pm(U,U_0,J,\Theta)$ have closed ranges because they are partial isometries with
initial sets $E^{U_0}(\Theta)\H_0$. Therefore, the $J$-completeness on $\Theta$
reduces to the completeness on $\Theta$ in the usual sense, namely,
$\Ran\big(W_\pm(U,U_0,J,\Theta)\big)=E^U(\Theta)\H$. In such a case, the wave
operators $W_\pm(U,U_0,J,\Theta)$ are unitary from $E^{U_0}(\Theta)\H_0$ to
$E^U(\Theta)\H$.
\end{Remark}

The following criterion for $J$-completeness has been shown in
\cite[Lemma~2.4]{RST_2019} for open sets $\Theta$, but the proof holds for Borel sets
too:

\begin{Lemma}\label{lemma_J_complete}
If $W_\pm(U,U_0,J,\Theta)$ and $W_\pm(U_0,U,J^*,\Theta)$ exist, then
$W_\pm(U,U_0,J,\Theta)$ are $J$-complete on $\Theta$.
\end{Lemma}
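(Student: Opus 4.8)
The plan is to show that the existence of the four wave operators $W_\pm(U,U_0,J,\Theta)$ and $W_\pm(U_0,U,J^*,\Theta)$ forces the range of $W_\pm(U,U_0,J,\Theta)$ to fill up all of $E^U(\Theta)\H\ominus\NN_\pm(U,J,\Theta)$. We already know from the discussion preceding Definition \ref{def_J_complete} that $\overline{\Ran(W_\pm(U,U_0,J,\Theta))}\subset E^U(\Theta)\H\ominus\NN_\pm(U,J,\Theta)$, so only the reverse inclusion is at stake. To get it, I would take an arbitrary $\varphi\in E^U(\Theta)\H\ominus\NN_\pm(U,J,\Theta)$ and produce an explicit vector mapped to it (up to the obvious nullspace directions) by $W_\pm(U,U_0,J,\Theta)$, using the second pair of wave operators to manufacture the preimage.

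Concretely, first I would set $\psi_\pm:=W_\pm(U_0,U,J^*,\Theta)\varphi\in\H_0$, which exists by hypothesis, and then consider $W_\pm(U,U_0,J,\Theta)\psi_\pm$. The key computation is to evaluate the composition $W_\pm(U,U_0,J,\Theta)W_\pm(U_0,U,J^*,\Theta)$ on $\varphi$: writing both as strong limits, one is led to expressions of the form $\slim_{n\to\pm\infty}U^{-n}JE^{U_0}(\Theta)J^*U^nE^U(\Theta)\varphi$, and the point is that $JE^{U_0}(\Theta)J^*$ differs from $E^U(\Theta)$ by terms that, after conjugation by $U^n$ and application to $\varphi$, vanish in the strong limit. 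This is where the hypothesis $\varphi\perp\NN_\pm(U,J,\Theta)$ and the definition of $\NN_\pm$ (the quantity $\|J^*U^nE^U(\Theta)\varphi\|_{\H_0}$ controls things) enter: one shows the composition acts as the orthogonal projection onto $E^U(\Theta)\H\ominus\NN_\pm(U,J,\Theta)$, hence fixes $\varphi$. It follows that $\varphi=W_\pm(U,U_0,J,\Theta)\psi_\pm\in\Ran(W_\pm(U,U_0,J,\Theta))$, giving the desired inclusion and hence equality of the closures.

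I expect the main obstacle to be the bookkeeping in that composition argument — in particular, justifying that one may interchange the two strong limits (or rather, that the relevant operator products converge) and handling the fact that $W_\pm(U,U_0,J,\Theta)$ is only defined on the range of $E^{U_0}(\Theta)$ and need not be a partial isometry when $J$ is not unitary. The clean way around this is to work with the intertwining property (Lemma \ref{lemma_intertwinning}) and to estimate $\|W_\pm(U,U_0,J,\Theta)\psi_\pm-\varphi\|_\H$ directly: approximate $\varphi$ by $U^nE^U(\Theta)$-translates, insert the definitions, and collapse the telescoping differences $JU_0^nE^{U_0}(\Theta)J^*-U^nE^U(\Theta)$ acting on the approximants, all of which tend to zero strongly along the relevant sequence by the existence of $W_\pm(U_0,U,J^*,\Theta)$ together with the hypothesis on $\varphi$. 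Since this is precisely the argument given in \cite[Lemma~2.4]{RST_2019} (the only change being that $\Theta$ is now an arbitrary Borel set, which does not affect any step because only spectral projections $E^U(\Theta)$, $E^{U_0}(\Theta)$ are used and these are well defined for Borel sets), I would ultimately just transcribe that proof with the obvious minor modifications, noting explicitly that openness of $\Theta$ is never invoked.
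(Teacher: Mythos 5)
Your framing of the problem is fine, but the mechanism you propose in your second paragraph is wrong, and it is also not the argument of \cite[Lemma~2.4]{RST_2019}. The composition $W_\pm(U,U_0,J,\Theta)\,W_\pm(U_0,U,J^*,\Theta)$ does equal $\slim_{n\to\pm\infty}U^{-n}JE^{U_0}(\Theta)J^*U^nE^U(\Theta)$, but it is \emph{not} the orthogonal projection onto $E^U(\Theta)\H\ominus\NN_\pm(U,J,\Theta)$ and it does not fix $\varphi$. Take $\H_0=\H$, $U_0=U$, $\Theta=\S^1$, $J=\tfrac12 1_\H$: all hypotheses hold, both wave operators equal $\tfrac12 1_\H$, $\NN_\pm(U,J,\Theta)=\{0\}$, yet the composition is $\tfrac14 1_\H$, so $W_\pm(U,U_0,J,\Theta)\psi_\pm=\tfrac14\varphi\ne\varphi$ (and $(JE^{U_0}(\Theta)J^*-1)U^n\varphi=-\tfrac34 U^n\varphi$ does not tend to $0$, contrary to your claim). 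More structurally, your argument would exhibit every $\varphi\in E^U(\Theta)\H\ominus\NN_\pm(U,J,\Theta)$ as an actual element of $\Ran\big(W_\pm(U,U_0,J,\Theta)\big)$, i.e.\ it would prove this range is closed; that is false for general non-unitary $J$ (e.g.\ $J=f(U_0)$ with $f$ continuous, strictly positive, $\inf f=0$), which is exactly why Definition \ref{def_J_complete} is stated with closures (compare Remark \ref{rem_unitary}). Your fallback of estimating $\|W_\pm(U,U_0,J,\Theta)\psi_\pm-\varphi\|_\H$ directly fails for the same reason.

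The proof you are trying to reconstruct is a duality argument, not a composition argument. For $\varphi\in E^U(\Theta)\H$ one has
$\big\|J^*U^nE^U(\Theta)\varphi\big\|_{\H_0}=\big\|U_0^{-n}J^*U^nE^U(\Theta)\varphi\big\|_{\H_0}\to\big\|W_\pm(U_0,U,J^*,\Theta)\varphi\big\|_{\H_0}$,
so $\varphi\in\NN_\pm(U,J,\Theta)$ if and only if $W_\pm(U_0,U,J^*,\Theta)\varphi=0$. On the other hand, for every $\psi_0\in\H_0$,
$\big\langle W_\pm(U,U_0,J,\Theta)\psi_0,\varphi\big\rangle_\H
=\lim_{n\to\pm\infty}\big\langle E^{U_0}(\Theta)\psi_0,U_0^{-n}J^*U^nE^U(\Theta)\varphi\big\rangle_{\H_0}
=\big\langle\psi_0,W_\pm(U_0,U,J^*,\Theta)\varphi\big\rangle_{\H_0}$,
the last step using $\Ran\big(W_\pm(U_0,U,J^*,\Theta)\big)\subset E^{U_0}(\Theta)\H_0$ (a consequence of the intertwining property, Lemma \ref{lemma_intertwinning}). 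Hence, within $E^U(\Theta)\H$, orthogonality to $\Ran\big(W_\pm(U,U_0,J,\Theta)\big)$ is equivalent to membership in $\NN_\pm(U,J,\Theta)$, and since $\overline{\Ran\big(W_\pm(U,U_0,J,\Theta)\big)}\subset E^U(\Theta)\H$ this yields $\overline{\Ran\big(W_\pm(U,U_0,J,\Theta)\big)}=E^U(\Theta)\H\ominus\NN_\pm(U,J,\Theta)$. Your closing remark that openness of $\Theta$ is never used is correct, and is the only point the paper itself adds to the citation.
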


The next theorem corresponds to \cite[Thm.~2.5]{RST_2019}; its formulation has been
adapted to be consistent with the definition of locally smooth operators used in this
paper.

\begin{Theorem}[$J$-completeness of the wave operators]\label{thm_wave}
Let $\Theta\subset\S^1$ be an open set and let $\G$ be an auxiliary Hilbert space. If
$V=T^*T_0$ with $T_0\in\B(\H_0,\G)$ locally $U_0$-smooth on any closed set
$\Theta'\subset\Theta$ and $T\in\B(\H,\G)$ locally $U$-smooth on any closed set
$\Theta'\subset\Theta$, then the wave operators $W_\pm(U,U_0,J,\Theta)$ exist, are
$J$-complete on $\Theta$, and satisfy the relations
$$
W_\pm(U,U_0,J,\Theta)^*=W_\pm(U_0,U,J^*,\Theta)
\quad\hbox{and}\quad
W_\pm(U,U_0,J,\Theta)\eta(U_0)=\eta(U)W_\pm(U,U_0,J,\Theta)
$$
for each bounded Borel function $\eta:\S^1\to\C$.
\end{Theorem}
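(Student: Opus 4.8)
The plan is to adapt the classical smooth-perturbation argument for wave operators (as in \cite{Yaf92}, Chapter 4) to the unitary, two-Hilbert-space setting, using the discrete-time analogue of Cook's method. Write $V=JU_0-UJ=T^*T_0$. The first step is to establish existence of $W_\pm(U,U_0,J,\Theta)$. For $\varphi\in E^{U_0}(\Theta)\H_0$ and integers $m<n$ of the same sign, I would use the telescoping identity
$$
U^{-n}JU_0^n\varphi-U^{-m}JU_0^m\varphi
=\sum_{k=m}^{n-1}U^{-(k+1)}\big(UJ-JU_0\big)U_0^k\varphi
=-\sum_{k=m}^{n-1}U^{-(k+1)}T^*T_0U_0^k\varphi,
$$
and then estimate the tail by Cauchy--Schwarz: for $\psi\in\H$,
$$
\Big|\Big\langle\psi,\sum_{k=m}^{n-1}U^{-(k+1)}T^*T_0U_0^k\varphi\Big\rangle_\H\Big|
\le\Big(\sum_{k\in\Z}\|TU^{-(k+1)}\psi\|_\G^2\Big)^{1/2}
\Big(\sum_{|k|\ge|m|}\|T_0U_0^kE^{U_0}(\Theta)\varphi\|_\G^2\Big)^{1/2}.
$$
Since $\varphi\in E^{U_0}(\Theta)\H_0$ is first approximated by vectors in $E^{U_0}(\Theta')\H_0$ for closed $\Theta'\subset\Theta$, on which $T_0$ is locally $U_0$-smooth, the second factor is the tail of a convergent series and goes to $0$ as $|m|\to\infty$; the first factor is bounded by $c\|\psi\|_\H$ since $T$ is $U$-smooth on $\Theta'$ — but here one must be slightly careful, because $\psi$ ranges over all of $\H$, not $E^U(\Theta')\H$. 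This is the point where the rotation-invariance and the equivalent characterisations of locally smooth operators (cf.\ \cite[Thm.~7.1]{ABC15_1}) are needed: local $U$-smoothness on $\Theta'$ gives a bound $\sum_k\|TU^kE^U(\Theta')\psi\|^2\le c\|\psi\|^2$, and one inserts $E^U(\Theta')$ using that $T_0U_0^k\varphi\in\Ran E^{U_0}(\Theta')$ together with the a priori (weak) convergence to bridge the spectral projections — alternatively, one works throughout with the smeared vectors $\eta(U_0)\varphi$ and $\eta(U)\psi$ for $\eta\in C^\infty(\S^1)$ supported in $\Theta$ and passes to the limit. Thus the sequence $U^{-n}JU_0^n\varphi$ is weakly Cauchy, and a standard argument upgrades this to strong convergence on $E^{U_0}(\Theta)\H_0$ (and trivially on its complement, where the limit is $0$), giving existence of $W_\pm(U,U_0,J,\Theta)$.

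The second step is to run the identical argument with the roles of $(U,J)$ and $(U_0,J^*)$ exchanged: since $V^*=J^*U^{-1}-U_0^{-1}J^*$ rearranges to $U_0J^*-J^*U=-U_0(V^*)U$-type expression, one checks that the "backward" perturbation $J^*U-U_0J^*$ also factorises as $T_0^*T$ through the same auxiliary space $\G$, with $T_0$ locally $U_0$-smooth and $T$ locally $U$-smooth on closed subsets of $\Theta$. Hence $W_\pm(U_0,U,J^*,\Theta)$ exist as well. By Lemma \ref{lemma_J_complete}, the existence of both $W_\pm(U,U_0,J,\Theta)$ and $W_\pm(U_0,U,J^*,\Theta)$ yields the $J$-completeness of $W_\pm(U,U_0,J,\Theta)$ on $\Theta$.

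For the final two assertions: the intertwining relation $W_\pm(U,U_0,J,\Theta)\eta(U_0)=\eta(U)W_\pm(U,U_0,J,\Theta)$ for bounded Borel $\eta$ is immediate from Lemma \ref{lemma_intertwinning} once existence is known. The identity $W_\pm(U,U_0,J,\Theta)^*=W_\pm(U_0,U,J^*,\Theta)$ is proved by computing, for $\varphi\in\H_0$ and $\psi\in\H$,
$$
\big\langle W_\pm(U,U_0,J,\Theta)\varphi,\psi\big\rangle_\H
=\lim_{n\to\pm\infty}\big\langle U^{-n}JU_0^nE^{U_0}(\Theta)\varphi,\psi\big\rangle_\H
=\lim_{n\to\pm\infty}\big\langle\varphi,E^{U_0}(\Theta)U_0^{-n}J^*U^n\psi\big\rangle_{\H_0},
$$
and recognising the right-hand side as $\langle\varphi,W_\pm(U_0,U,J^*,\Theta)\psi\rangle_{\H_0}$; for this one needs $E^{U_0}(\Theta)U_0^{-n}J^*U^n\psi\to W_\pm(U_0,U,J^*,\Theta)\psi$, which follows because $W_\pm(U_0,U,J^*,\Theta)$ already carries the projection $E^{U_0}(\Theta)$ in its definition and its adjoint intertwines, or more directly by noting $E^{U_0}(\Theta)U_0^{-n}J^*U^n=U_0^{-n}J^*E^U(\Theta)U^n+o(1)$ strongly. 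I expect the main obstacle to be the bookkeeping in the first step — namely, handling the spectral cutoffs $E^U(\Theta')$ versus $E^U(\Theta)$ carefully enough that the $U$-smoothness bound applies uniformly over test vectors $\psi\in\H$, and justifying the passage from weak to strong convergence. Everything else is a mechanical transcription of the self-adjoint smooth-perturbation theory, with $\e^{-itH}$ replaced by $U^n$ and the time integral replaced by a sum over $n\in\Z$.
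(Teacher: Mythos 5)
Your overall architecture is the right one (and matches the strategy behind the result the paper itself does not prove but imports from \cite[Thm.~2.5]{RST_2019}): discrete Cook's method plus Cauchy--Schwarz for existence, the same argument with the roles of $(U,J)$ and $(U_0,J^*)$ exchanged for the backward wave operators, Lemma \ref{lemma_J_complete} for $J$-completeness, and Lemma \ref{lemma_intertwinning} for the intertwining relation. However, the step you dismiss as ``bookkeeping'' is the mathematical heart of the proof, and your proposal does not close it. Local $U$-smoothness of $T$ on closed subsets $\Theta'\subset\Theta$ only yields $\sum_{k}\|TU^{k}E^U(\Theta')\psi\|_\G^2\le c\|\psi\|_\H^2$; it gives no control on $\sum_{k}\|TU^{k}\psi\|_\G^2$ for arbitrary $\psi\in\H$, which is exactly what your Cauchy--Schwarz factor requires. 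Your first suggested repair rests on a confusion: $T_0U_0^k\varphi$ is a vector of $\G$, so it cannot lie in $\Ran E^{U_0}(\Theta')$, and nothing places the vectors $U^{-(k+1)}T^*T_0U_0^k\varphi$ in $\Ran E^U(\Theta')$. Your second repair (testing only against $\eta(U)\psi$) proves at most weak convergence of $\eta(U)W_n\varphi$, and the assertion that ``a standard argument upgrades this to strong convergence'' is unsupported: weak Cauchyness does not imply strong Cauchyness, and the usual quadratic trick of taking $\psi=(W_n-W_m)\varphi$ runs into the very same uncontrolled factor.

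The missing idea is a localization of $U$ on the correct side. Fix $\Theta'$ closed in $\Theta$ with $\varphi=E^{U_0}(\Theta')\varphi$, choose $\eta\in C^\infty(\S^1)$ with $\eta=1$ near $\Theta'$ and $\supp\eta$ a closed subset of $\Theta$, and split
$U^{-n}JU_0^n\varphi=U^{-n}\eta(U)JU_0^n\varphi+U^{-n}\big(J\eta(U_0)-\eta(U)J\big)U_0^n\varphi$,
using $\eta(U_0)\varphi=\varphi$. Your telescoping estimate now works on the first piece, because the test vector becomes $\overline{\eta}(U)U^{k+1}\psi\in E^U(\supp\eta)\H$ and the local $U$-smoothness of $T$ on the closed set $\supp\eta$ applies, with the second Cauchy--Schwarz factor a tail of the convergent series $\sum_k\|T_0U_0^k\varphi\|_\G^2$. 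The second piece must then be shown to vanish as $n\to\pm\infty$: for a trigonometric polynomial $\eta$ one expands $J\eta(U_0)-\eta(U)J$ into a finite sum of terms $U^aVU_0^b=U^aT^*T_0U_0^b$, and each contribution $U^{a-n}T^*\big(T_0U_0^{b+n}\varphi\big)$ tends to $0$ in norm because $\|T_0U_0^{m}\varphi\|_\G\to0$ (the general term of a convergent series); a uniform approximation argument then treats general $\eta$. Without some such device (or a stationary-method substitute) the existence claim is not established. The remaining parts of your proposal -- the symmetric factorization $J^*U-U_0J^*=-U_0V^*U$ for the backward operators, $J$-completeness via Lemma \ref{lemma_J_complete}, the intertwining relation, and the adjoint identity (which additionally needs the intertwining property to kill the component $E^U(\S^1\setminus\Theta)\psi$) -- are sound once existence is in hand.
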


Now, by combining Theorem \ref{thm_locally_smooth} and Theorem \ref{thm_wave}, we
obtain the following criterion for the existence and $J$-completeness of the local
wave operators. We recall that the sets $\widetilde\mu^{A}(U)$,
$\widetilde\mu^{A_0}(U_0)$ have been defined in \eqref{def_mu}.

\begin{Corollary}\label{corol_wave}
Let $A_0,A$ self-adjoint operators in $\H_0,\H$. Assume either that $U_0$ and $U$ have
a spectral gap and $U_0\in C^{1,1}(A_0),U\in C^{1,1}(A)$, or that
$U_0\in C^{1+0}(A_0),U\in C^{1+0}(A)$. Let
$$
\Theta
:=\big\{\widetilde\mu^{A_0}(U_0)\cap\widetilde\mu^{A}(U)\big\}
\setminus\big\{\sigma_{\rm p}(U_0)\cup\sigma_{\rm p}(U)\big\},
$$
let $\G$ be an auxiliary Hilbert space, and suppose that $V=T^*T_0$ with
$T_0\in\B(\H_0,\G)$ extending continuously to an element of
$\B\big(\dom(\langle A_0\rangle^s)^*,\G\big)$ and $T\in\B(\H,\G)$ extending
continuously to an element of $\B\big(\dom(\langle A\rangle^s)^*,\G\big)$ for some
$s>1/2$. Then the wave operators $W_\pm(U,U_0,J,\Theta)$ exist, are $J$-complete on
$\Theta$, and satisfy the relations
$$
W_\pm(U,U_0,J,\Theta)^*=W_\pm(U_0,U,J^*,\Theta)
\quad\hbox{and}\quad
W_\pm(U,U_0,J,\Theta)\;\!\eta(U_0)=\eta(U)\;\!W_\pm(U,U_0,J,\Theta)
$$
for each bounded Borel function $\eta:\S^1\to\C$.
\end{Corollary}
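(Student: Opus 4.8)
The plan is to derive Corollary \ref{corol_wave} directly from Theorem \ref{thm_wave} by verifying its hypotheses, using Theorem \ref{thm_locally_smooth} to produce the required locally smooth operators. First I would observe that $\Theta$ is an open set: by Lemma \ref{lemma_properties}(b) the functions $\widetilde\varrho^{A_0}_{U_0}$ and $\widetilde\varrho^{A}_{U}$ are lower semicontinuous, so the sets $\widetilde\mu^{A_0}(U_0)$ and $\widetilde\mu^{A}(U)$ are open; removing the closed sets $\sigma_{\rm p}(U_0)$ and $\sigma_{\rm p}(U)$ (which are closed because, under the stated regularity assumptions and Theorem \ref{thm_spec_prop}, $U_0$ and $U$ have only finitely many eigenvalues of finite multiplicity in their respective Mourre sets, hence the point spectrum inside $\widetilde\mu^{A_0}(U_0)$, resp. $\widetilde\mu^{A}(U)$, is finite — and one restricts attention to that part) leaves $\Theta$ open. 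Actually it is cleaner to note directly: on the open set $\widetilde\mu^{A_0}(U_0)$ the point spectrum of $U_0$ is a closed (finite) subset by Theorem \ref{thm_spec_prop}, and similarly for $U$, so $\Theta$ is the intersection of two open sets minus a finite set, hence open.

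Next I would apply Theorem \ref{thm_locally_smooth} twice. Fix any closed set $\Theta'\subset\Theta$. Since $\Theta'\subset\widetilde\mu^{A_0}(U_0)$, for each $\theta\in\Theta'$ there is $\varepsilon_\theta>0$ and $a_\theta>0$ with a Mourre estimate for $U_0$ on $\Theta(\theta;\varepsilon_\theta)$; by compactness — or rather, one works with an open set $\Theta_0$ with $\Theta'\subset\Theta_0\subset\overline{\Theta_0}\subset\widetilde\mu^{A_0}(U_0)$ on which a uniform Mourre estimate with a compact remainder holds (this is the standard localisation argument, and is exactly the hypothesis needed for Theorem \ref{thm_locally_smooth}) — we get from Theorem \ref{thm_locally_smooth} that $T_0$, which extends to an element of $\B(\dom(\langle A_0\rangle^s)^*,\G)$ for some $s>1/2$, is locally $U_0$-smooth on $\Theta'$ (note $\Theta'$ avoids $\sigma_{\rm p}(U_0)$ by construction). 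The same argument with $(U,A,T)$ in place of $(U_0,A_0,T_0)$ shows $T$ is locally $U$-smooth on $\Theta'$. Since $\Theta'\subset\Theta$ was an arbitrary closed subset, the hypotheses of Theorem \ref{thm_wave} are met with $V=T^*T_0$.

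Finally, Theorem \ref{thm_wave} immediately yields that the wave operators $W_\pm(U,U_0,J,\Theta)$ exist, are $J$-complete on $\Theta$, and satisfy the adjoint relation $W_\pm(U,U_0,J,\Theta)^*=W_\pm(U_0,U,J^*,\Theta)$ and the intertwining relation with $\eta(U_0),\eta(U)$; this is exactly the conclusion sought. The only genuinely delicate point is the passage from the pointwise Mourre estimate encoded in $\Theta'\subset\widetilde\mu^{A_0}(U_0)$ (respectively $\widetilde\mu^{A}(U)$) to the global-on-an-open-neighbourhood Mourre estimate with compact remainder $E^{U_0}(\Theta_0)U_0^{-1}[A_0,U_0]E^{U_0}(\Theta_0)\ge aE^{U_0}(\Theta_0)+K$ demanded verbatim by Theorem \ref{thm_locally_smooth}: one covers $\Theta'$ by finitely many arcs on which local estimates hold, patches them using a smooth partition of unity in $C^\infty(\S^1,\R)$ (as in the equivalent formulation of $\widetilde\varrho^A_U$ recalled before Lemma \ref{lemma_properties}), and absorbs the cross terms and localisation errors into a compact operator. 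This is routine but is the main obstacle; everything else is a direct citation of the already-stated theorems.
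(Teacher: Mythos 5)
Your proposal is correct and follows exactly the route the paper intends: the corollary is stated as the combination of Theorem \ref{thm_locally_smooth} (applied to $(U_0,A_0,T_0)$ and to $(U,A,T)$ to get local smoothness on closed subsets of $\Theta$) with Theorem \ref{thm_wave}. Your closing worry about patching local Mourre estimates into a single global one is not really an obstacle, since Theorem \ref{thm_locally_smooth} can be applied separately on each arc $\Theta(\theta;\varepsilon_\theta)$ furnished by the definition of $\widetilde\varrho^{A_0}_{U_0}$, and local smoothness on a closed set contained in a finite union of such arcs follows by splitting $E^{U_0}(\Theta')$ over a finite disjoint Borel refinement.
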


In the last part of the section, we determine conditions under which the (global) wave
operators
$$
W_\pm(U,U_0,J):=\slim_{n\to\pm\infty}U^{-n}JU_0^nP_{\rm ac}(U_0)
$$
are complete in the usual sense; that is, satisfy
$\Ran\big(W_\pm(U,U_0,J)\big)=\H_{\rm ac}(U)$. When defining the completeness of the
wave operators $W_\pm(U,U_0,J)$, the simplest choice would be to require in addition
that $W_\pm(U,U_0,J)$ are partial isometries with initial subspaces
$\H_0^\pm=\H_{\rm ac}(U_0)$ (as in \cite[Def.~III.9.24]{BW83} or
\cite[Def.~2.3.1]{Yaf92} in the self-adjoint case). However, in applications it may
happen that the ranges of $W_\pm(U,U_0,J)$ are equal to $\H_{\rm ac}(U)$ but that
$\H_0^\pm\neq\H_{\rm ac}(U_0)$. In the self-ajoint case, this typically happens for
multichannel type scattering processes, in which case the usual criteria for
completeness, as \cite[Prop.~III.9.40]{BW83} or \cite[Thm.~2.3.6]{Yaf92}, do not be
apply. This is why we present below a criterion fot the completeness of the wave
operators $W_\pm(U,U_0,J)$ without assuming that $\H_0^\pm=\H_{\rm ac}(U_0)$.

We start by establishing a chain rule for wave operators analogous to
\cite[Thm.~2.1.7]{Yaf92} in the self-adjoint case.

\begin{Lemma}[Chain rule]\label{lemma_chain}
Let $U_1,U_2,U_3$ be unitary operators in Hilbert spaces $\H_1,\H_2,\H_3$, let
$J_{23}\in\B(\H_2,\H_3)$ and $J_{31}\in\B(\H_3,\H_1)$, and assume that the strong
limits
$$
W_\pm(U_3,U_2,J_{23}):=\slim_{n\to\pm\infty}U_3^{-n}J_{23}U_2^nP_{\rm ac}(U_2)
\quad\hbox{and}\quad
W_\pm(U_1,U_3,J_{31}):=\slim_{n\to\pm\infty}U_1^{-n}J_{31}U_3^nP_{\rm ac}(U_3)
$$
exist. Then the strong limits
$$
W_\pm(U_1,U_2,J_{31}J_{23})
:=\slim_{n\to\pm\infty}U_1^{-n}J_{31}J_{23}U_2^nP_{\rm ac}(U_2)
$$
exist and satisfy the chain rule
$$
W_\pm(U_1,U_2,J_{31}J_{23})=W_\pm(U_1,U_3,J_{31})\;\!W_\pm(U_3,U_2,J_{23}).
$$
\end{Lemma}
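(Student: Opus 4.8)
The plan is to mimic the classical proof of the chain rule for wave operators (as in \cite[Thm.~2.1.7]{Yaf92}), adapting it to the discrete-time unitary setting. The core idea is to insert a "resolution of the identity" provided by the first wave operator into the limit defining the composite wave operator, and to control the error term using the fact that $P_{\rm ac}(U_2)$ projects onto the absolutely continuous subspace.

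First I would fix $\varphi\in\H_2$ and consider, for $n\in\Z$, the vector $U_1^{-n}J_{31}J_{23}U_2^nP_{\rm ac}(U_2)\varphi$. The natural decomposition is
$$
U_1^{-n}J_{31}J_{23}U_2^nP_{\rm ac}(U_2)\varphi
=U_1^{-n}J_{31}U_3^n\big(U_3^{-n}J_{23}U_2^nP_{\rm ac}(U_2)\varphi\big).
$$
Writing $\psi_n:=U_3^{-n}J_{23}U_2^nP_{\rm ac}(U_2)\varphi$, we know by hypothesis that $\psi_n\to W_\pm(U_3,U_2,J_{23})\varphi=:\psi_\infty$ as $n\to\pm\infty$. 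The point $\psi_\infty$ lies in $\Ran\big(W_\pm(U_3,U_2,J_{23})\big)\subset\H_{\rm ac}(U_3)$, since wave operators map into the absolutely continuous subspace; alternatively this follows from the intertwining property together with the fact that $W_\pm$ is an isometry on $\H_{\rm ac}(U_2)$ — either way, $P_{\rm ac}(U_3)\psi_\infty=\psi_\infty$. Then I would write
$$
U_1^{-n}J_{31}J_{23}U_2^nP_{\rm ac}(U_2)\varphi
=U_1^{-n}J_{31}U_3^nP_{\rm ac}(U_3)\psi_\infty
+U_1^{-n}J_{31}U_3^n(\psi_n-\psi_\infty).
$$
The first term converges to $W_\pm(U_1,U_3,J_{31})\psi_\infty$ by the very definition of $W_\pm(U_1,U_3,J_{31})$ applied to $\psi_\infty$. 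The second term is bounded in norm by $\|J_{31}\|_{\B(\H_3,\H_1)}\,\|\psi_n-\psi_\infty\|_{\H_3}$, which tends to $0$. Hence the limit defining $W_\pm(U_1,U_2,J_{31}J_{23})\varphi$ exists and equals $W_\pm(U_1,U_3,J_{31})W_\pm(U_3,U_2,J_{23})\varphi$; since this holds for all $\varphi\in\H_2$ and both sides are bounded operators, the chain rule identity follows. For $\varphi$ not in $\H_{\rm ac}(U_2)$ we have $P_{\rm ac}(U_2)\varphi=0$ and both sides vanish, so the statement holds on all of $\H_2$.

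The one genuine subtlety — and the step I expect to be the main obstacle — is the justification that the error term $U_1^{-n}J_{31}U_3^n(\psi_n-\psi_\infty)$ vanishes, which needs the uniform bound $\|U_1^{-n}J_{31}U_3^n\|_{\B(\H_3,\H_1)}\le\|J_{31}\|_{\B(\H_3,\H_1)}$ for all $n$; this is immediate since $U_1,U_3$ are unitary, so no real difficulty arises there. The more delicate point is establishing $P_{\rm ac}(U_3)\psi_\infty=\psi_\infty$, i.e.\ that $\Ran\big(W_\pm(U_3,U_2,J_{23})\big)\subset\H_{\rm ac}(U_3)$. This is the unitary analogue of the standard fact in scattering theory and can be derived from the intertwining property (Lemma \ref{lemma_intertwinning}): for $\varphi\in\H_{\rm ac}(U_2)$ and any $\eta$, $\eta(U_3)W_\pm\varphi=W_\pm\eta(U_2)\varphi$, from which one reads off that the spectral measure of $U_3$ for $W_\pm\varphi$ is absolutely continuous with respect to that of $U_2$ for $\varphi$. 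I would simply invoke this; it is standard and already used implicitly elsewhere in the paper.

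Everything else is bookkeeping: checking that the decomposition above is algebraically correct (which is just inserting $U_3^nU_3^{-n}=1_{\H_3}$), that the strong limits on the right-hand side are the ones hypothesized to exist, and that boundedness lets us pass from a dense-subspace statement to all of $\H_2$ — though in fact the argument works directly for every $\varphi\in\H_2$, so no density argument is even needed. The proof is therefore short and closely parallels the self-adjoint case.
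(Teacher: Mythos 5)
Your proof is correct and follows essentially the same route as the paper: both arguments hinge on the fact that $\Ran\big(W_\pm(U_3,U_2,J_{23})\big)\subset\H_{\rm ac}(U_3)$ (the paper writes this as $\big(1-P_{\rm ac}(U_3)\big)W_\pm(U_3,U_2,J_{23})=0$ and inserts $P_{\rm ac}(U_3)+1-P_{\rm ac}(U_3)$ into the product, while you split $\psi_n=\psi_\infty+(\psi_n-\psi_\infty)$, which amounts to the same decomposition), together with the uniform bound $\|U_1^{-n}J_{31}U_3^n\|\le\|J_{31}\|$. Your vector-level write-up in fact spells out the "limit of a product equals product of limits" step that the paper leaves implicit; the only blemish is the final remark about $\varphi\notin\H_{\rm ac}(U_2)$, which is both unnecessary and imprecise (such a $\varphi$ need not satisfy $P_{\rm ac}(U_2)\varphi=0$), but the argument as given already covers all $\varphi\in\H_2$.
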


\begin{proof}
For any $n\in\Z$, we have the identity
\begin{equation}\label{eq_rhs}
U_1^{-n}J_{31}J_{23}U_2^nP_{\rm ac}(U_2)
=U_1^{-n}J_{31}U_3^n\big(P_{\rm ac}(U_3)+1-P_{\rm ac}(U_3)\big)
U_3^{-n}J_{23}U_2^nP_{\rm ac}(U_2).
\end{equation}
But due to the fact that $\big(1-P_{\rm ac}(U_3)\big)W_\pm(U_3,U_2,J_{23})=0$ (which
can be shown as in the self-adjoint case, see \cite[Eq.~2.1.18]{Yaf92}), we have for
any $\varphi_2\in\H_2$ that
$$
\lim_{n\to\pm\infty}\big\|\big(1-P_{\rm ac}(U_3)\big)U_3^{-n}J_{23}U_2^n
P_{\rm ac}(U_2)\varphi_2\big\|_{\H_3}=0.
$$
Therefore, we obtain that
\begin{align*}
\slim_{n\to\pm\infty}U_1^{-n}J_{31}J_{23}U_2^nP_{\rm ac}(U_2)
&=\slim_{n\to\pm\infty}U_1^{-n}J_{31}U_3^nP_{\rm ac}(U_3)
U_3^{-n}J_{23}U_2^nP_{\rm ac}(U_2)\\
&=\slim_{n\to\pm\infty}U_1^{-n}J_{31}U_3^nP_{\rm ac}(U_3)
\cdot\slim_{n\to\pm\infty}U_3^{-n}J_{23}U_2^nP_{\rm ac}(U_2)\\
&=W_\pm(U_1,U_3,J_{31})\;\!W_\pm(U_3,U_2,J_{23}),
\end{align*}
which proves the claims.
\end{proof}

We can now present our theorem on the completeness of the wave operators
$W_\pm(U,U_0,J)$. It is a unitary analogue of \cite[Prop.~5.1]{RT13_2} in the
self-adjoint case.

\begin{Theorem}[Completeness of the wave operators]\label{thm_wave_bis}
Suppose that the wave operators $W_\pm(U,U_0,J)$ exist and are partial isometries with
initial set projections $P_0^\pm$. If there is $J'\in\B(\H,\H_0)$ such that
\begin{equation}\label{eq_wave_prime}
W_\pm(U_0,U,J')
:=\slim_{n\to\pm\infty}U_0^{-n}J'U^nP_{\rm ac}(U)
\end{equation}
exist and
\begin{equation}\label{eq_JJ_prime}
\slim_{n\to\pm\infty}(JJ'-1)U^nP_{\rm ac}(U)=0,
\end{equation}
then $\Ran\big(W_\pm(U,U_0,J)\big)=\H_{\rm ac}(U)$. Conversely, if
$\Ran\big(W_\pm(U,U_0,J)\big)=\H_{\rm ac}(U)$ and there is $J'\in\B(\H,\H_0)$ such
that
\begin{equation}\label{eq_prime_JJ}
\slim_{n\to\pm\infty}(J'J-1)U_0^nP_0^\pm=0,
\end{equation}
then $W_\pm(U_0,U,J')$ exist and \eqref{eq_JJ_prime} holds.
\end{Theorem}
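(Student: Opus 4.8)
The plan is to mimic the classical self-adjoint argument of \cite[Prop.~5.1]{RT13_2}, using the chain rule of Lemma \ref{lemma_chain} as the main tool and exploiting the intertwining property (Lemma \ref{lemma_intertwinning}) together with the hypothesis that $W_\pm(U,U_0,J)$ are partial isometries.

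For the first (forward) implication, I would argue as follows. Assume $W_\pm(U_0,U,J')$ exist and \eqref{eq_JJ_prime} holds. First I would show that the strong limit
$$
\slim_{n\to\pm\infty}U^{-n}(JJ')U^nP_{\rm ac}(U)
$$
equals $P_{\rm ac}(U)$: indeed, writing $JJ'=(JJ'-1)+1$ and using \eqref{eq_JJ_prime} together with $\|U^{-n}\|=1$, the contribution of $JJ'-1$ vanishes strongly, while the contribution of $1$ is just $U^{-n}U^nP_{\rm ac}(U)=P_{\rm ac}(U)$. On the other hand, by the chain rule (Lemma \ref{lemma_chain}) applied with $U_1=U_3=U$, $U_2=U_0$ (after first composing in the other order) — more precisely, applying Lemma \ref{lemma_chain} with $U_1=U$, $U_2=U$, $U_3=U_0$, $J_{23}=J'$, $J_{31}=J$ — the limit $\slim_{n\to\pm\infty}U^{-n}(JJ')U^nP_{\rm ac}(U)$ exists and equals $W_\pm(U,U_0,J)\,W_\pm(U_0,U,J')$. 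Comparing the two computations gives
$$
W_\pm(U,U_0,J)\,W_\pm(U_0,U,J')=P_{\rm ac}(U).
$$
Since the right-hand side has range $\H_{\rm ac}(U)$, the range of the left-hand side is all of $\H_{\rm ac}(U)$, hence a fortiori $\Ran\big(W_\pm(U,U_0,J)\big)\supseteq\H_{\rm ac}(U)$; the reverse inclusion is automatic because $W_\pm(U,U_0,J)$ intertwines $U_0$ and $U$ and therefore maps into $\H_{\rm ac}(U)$ (this is the standard consequence of the intertwining property, as in \cite[Lemma~2.3.1]{Yaf92}). This proves $\Ran\big(W_\pm(U,U_0,J)\big)=\H_{\rm ac}(U)$.

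For the converse, assume $\Ran\big(W_\pm(U,U_0,J)\big)=\H_{\rm ac}(U)$ and that some $J'$ satisfies \eqref{eq_prime_JJ}. Since $W_\pm(U,U_0,J)$ is a partial isometry with initial projection $P_0^\pm$ and final projection $P_{\rm ac}(U)$ (the latter because the range is $\H_{\rm ac}(U)$), the adjoint $W_\pm(U,U_0,J)^*$ is a partial isometry with initial projection $P_{\rm ac}(U)$ and final projection $P_0^\pm$. I would then show directly from the definition that
$$
W_\pm(U_0,U,J')=\slim_{n\to\pm\infty}U_0^{-n}J'U^nP_{\rm ac}(U)
=W_\pm(U,U_0,J)^*,
$$
so in particular the limit \eqref{eq_wave_prime} exists. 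To see this, write $J'=J'(JW_\pm(U,U_0,J)W_\pm(U,U_0,J)^*)+\ldots$; more concretely, use that for $\varphi\in\H_{\rm ac}(U)$ the vector $U^nP_{\rm ac}(U)\varphi$ is, up to an $o(1)$ error, of the form $JU_0^nP_0^\pm\psi_n$ with $\psi_n$ bounded (this is exactly the statement that $W_\pm(U,U_0,J)$ has range $\H_{\rm ac}(U)$, read backwards); then $U_0^{-n}J'U^nP_{\rm ac}(U)\varphi$ becomes $U_0^{-n}(J'J)U_0^nP_0^\pm\psi_n+o(1)$, and by \eqref{eq_prime_JJ} (applied as in the forward direction, with $\|U_0^{-n}\|=1$) the operator $U_0^{-n}(J'J)U_0^nP_0^\pm$ converges strongly to $P_0^\pm$, yielding $W_\pm(U_0,U,J')=W_\pm(U,U_0,J)^*$. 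Finally, \eqref{eq_JJ_prime} follows: by the forward part of the chain-rule computation (now running it in the reverse composition order, $U_1=U_2=U$, $U_3=U_0$) one gets $\slim_{n\to\pm\infty}U^{-n}(JJ')U^nP_{\rm ac}(U)=W_\pm(U,U_0,J)W_\pm(U_0,U,J')=W_\pm(U,U_0,J)W_\pm(U,U_0,J)^*=P_{\rm ac}(U)$, and subtracting $P_{\rm ac}(U)=U^{-n}U^nP_{\rm ac}(U)$ gives $\slim_{n\to\pm\infty}U^{-n}(JJ'-1)U^nP_{\rm ac}(U)=0$; applying $U^n$ on the left (an isometry) converts this into \eqref{eq_JJ_prime}.

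The main obstacle, and the step deserving the most care, is the bookkeeping in the converse direction: turning the range condition $\Ran\big(W_\pm(U,U_0,J)\big)=\H_{\rm ac}(U)$ into the approximation $U^nP_{\rm ac}(U)\varphi=JU_0^n(\text{bounded})+o(1)$ uniformly enough to commute the limit past $J'$, and correctly identifying the initial/final projections of the partial isometries so that the adjoint relation $W_\pm(U_0,U,J')=W_\pm(U,U_0,J)^*$ comes out with the right projections. The forward direction is essentially a mechanical application of Lemma \ref{lemma_chain} plus the triangle-inequality estimate using \eqref{eq_JJ_prime}, so I expect it to be routine; everything else reduces to the standard facts about wave operators and partial isometries recalled from \cite{Yaf92}.
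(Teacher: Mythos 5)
Your argument is correct and follows essentially the same route as the paper: the chain rule of Lemma \ref{lemma_chain} yields $W_\pm(U,U_0,J)\,W_\pm(U_0,U,J')=P_{\rm ac}(U)$ in the forward direction, and the converse rests on approximating $U^n\varphi$ by $JU_0^nP_0^\pm\psi$ for $\varphi\in\Ran\big(W_\pm(U,U_0,J)\big)=\H_{\rm ac}(U)$. The only point to tighten is that the vector you call $\psi_n$ must be a fixed vector $\psi\in P_0^\pm\H_0$ (independent of $n$), namely the unique preimage with $W_\pm(U,U_0,J)\psi=\varphi$, so that the strong limit \eqref{eq_prime_JJ} may legitimately be applied to it; with that reading, your identification $W_\pm(U_0,U,J')=W_\pm(U,U_0,J)^*$ and your chain-rule derivation of \eqref{eq_JJ_prime} (where the paper instead uses a second direct triangle-inequality estimate) both go through.
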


\begin{proof}
(i) An application of Lemma \ref{lemma_chain} with $U_1=U_2=U$, $U_3=U_0$,
$\H_1=\H_2=\H$, $\H_3=\H_0$, $J_{23}=J'$, and $J_{31}=J$, implies that the strong
limits
$$
W_\pm (U,U,JJ')
:=\slim_{n\to\pm\infty}U^{-n}JJ'U^nP_{\rm ac}(U)
$$
exist and satisfy the chain rule
\begin{equation}\label{eq_chain}
W_\pm(U,U,JJ')=W_\pm(U,U_0,J)\;\!W_\pm(U_0,U,J').
\end{equation}
In consequence, the equality
$$
\slim_{n\to\pm\infty}\big(U^{-n}JJ'U^nP_{\rm ac}(U)-P_{\rm ac}(U)\big)=0,
$$
which follow from \eqref{eq_JJ_prime}, implies that
$W_\pm(U,U,JJ')P_{\rm ac}(U)=P_{\rm ac}(U)$. This, together with \eqref{eq_chain} and
the equality $W_\pm(U_0,U,J')=W_\pm(U_0,U,J')P_{\rm ac}(U)$,
gives
$$
W_\pm(U,U_0,J)W_\pm(U_0,U,J')
=W_\pm(U,U,JJ')P_{\rm ac}(U)
=P_{\rm ac}(U),
$$
which is equivalent to
$$
W_\pm(U_0,U,J')^*W_\pm(U,U_0,J)^*=P_{\rm ac}(U).
$$
This implies the inclusion
$
\ker\big(W_\pm(U,U_0,J)^*\big)\subset\H_{\rm ac}(U)^\perp
$,
which together with the fact that the range of a partial isometry is closed leads to
the inclusion
$$
\H=\Ran\big(W_\pm(U,U_0,J)\big)\oplus\ker\big(W_\pm(U,U_0,J)^*\big)
\subset\H_{\rm ac}(U)\oplus\H_{\rm ac}(U)^\perp
=\H.
$$
So, one must have $\Ran\big(W_\pm(U,U_0,J)\big)=\H_{\rm ac}(U)$, which proves the
first claim.

(ii) For the converse, let $\varphi\in\H_{\rm ac}(U)$. Then we know from the
assumption $\Ran\big(W_\pm(U,U_0,J)\big)=\H_{\rm ac}(U)$ that there exist
$\varphi_0^\pm\in P_0^\pm\H_0$ such that
\begin{equation}\label{eq_converse}
\lim_{n\to\pm\infty}\big\|U^n\varphi-JU_0^nP_0^\pm\varphi_0^\pm\big\|_\H=0.
\end{equation}
Together with \eqref{eq_prime_JJ}, this implies that the norm
\begin{align*}
\big\|U_0^{-n}J'U^n\varphi-P_0^\pm\varphi_0^\pm \big\|_{\H_0}
&\le\big\|U_0^{-n}J'
\big(U^n\varphi-JU_0^nP_0^\pm\varphi_0^\pm\big)\big\|_{\H_0}
+\big\|U_0^{-n}J'JU_0^nP_0^\pm\varphi_0^\pm
-P_0^\pm\varphi_0^\pm\big\|_{\H_0}\\
&\le{\rm Const.}\;\!\big\|U^n\varphi-JU_0^nP_0^\pm\varphi_0^\pm\big\|_\H
+\big\|(J'J-1)U_0^nP_0^\pm\varphi_0^\pm \big\|_{\H_0}
\end{align*}
converges to $0$ as $n\to\pm\infty$, showing that the wave operators
\eqref{eq_wave_prime} exist.

To show \eqref{eq_JJ_prime}, we first observe that \eqref{eq_prime_JJ} gives
$$
\slim_{n\to\pm\infty}(JJ'-1)JU_0^nP_0^\pm
=\slim_{n\to\pm\infty}J(J'J-1)U_0^nP_0^\pm
=0.
$$
Together with \eqref{eq_converse}, this implies that the norm
\begin{align*}
\big\|(JJ'-1)U^n\varphi\big\|_\H
&\le\big\|(JJ'-1)\big(U^n\varphi-JU_0^nP_0^\pm\varphi_0^\pm\big)\big\|_\H
+\big\|(JJ'-1)JU_0^nP_0^\pm\varphi_0^\pm\big\|_\H\\
&\le{\rm Const.}\;\!\big\|JU_0^nP_0^\pm\varphi_0^\pm-U^n\varphi\big\|_\H
+\big\|(JJ'-1)JU_0^nP_0^\pm\varphi_0^\pm\big\|_\H
\end{align*}
converges to $0$ as $n\to\pm\infty$, showing \eqref{eq_JJ_prime}.
\end{proof}

%--------------------------------------------------------------------------------------
\section{Quantum walks on homogenous trees of odd degree}\label{section_tree}
\setcounter{equation}{0}
%--------------------------------------------------------------------------------------

In this section, we use the theory of Sections \ref{sec_one}-\ref{sec_scatt} to
determine spectral and scattering properties of a class of anisotropic quantum walks
on homogenous trees of odd degree. We start by recalling from \cite{HJ_2014} the
necessary material for the definition of the quantum walks.

Let $\Tau$ be an homogenous tree of odd degree $d\ge3$, that is, a finitely generated
group $\Tau$ with generators $a_1,\dots,a_d$, neutral element $e$, and presentation
$$
\Tau:=\big\langle a_1,\dots,a_d\mid a_1^2=\dots=a_d^2=e\big\rangle.
$$
See Figure \ref{fig_tree} for an illustration in the case $d=3$, with progress away
from $e$ corresponding to multiplication from the right. Using the word length
$|\cdot|:\Tau\to\N$ \cite[Sec.~6.1]{CC_2010}, we can define for each $i=1,\dots,d$ the
$i$-th main branch of $\Tau$, that is, the subtree
$$
\Tau_i:=\big\{x\in\Tau\mid|a_ix|=|x|-1\big\},
$$
and we can define the sets $\Tau_{\rm e}$ and $\Tau_{\rm o}$ of even or odd elements
of $\Tau$
$$
\Tau_{\rm e}:=\big\{x\in\Tau\mid|x|\in2\N\big\}
\quad\hbox{and}\quad
\Tau_{\rm o}:=\big\{x\in\Tau\mid|x|\in2\N+1\big\}.
$$
We write $\chi_{\mathcal B}$ for the characteristic function of a set
${\mathcal B}\subset\Tau$. So, $\chi_{\Tau_i}$, $\chi_{\Tau_{\rm e}}$,
$\chi_{\Tau_{\rm o}}$ denote the characteristic functions of $\Tau_i$, $\Tau_{\rm e}$,
$\Tau_{\rm o}$, respectively. We also use the shorthand notations
$$
\chi_1:=\chi_{\Tau_1\cup\{e\}},\quad
\chi_i:=\chi_{\Tau_i}~(i\ge2),\quad
\chi_{\rm e}:=\chi_{\Tau_{\rm e}},\quad
\chi_{\rm o}:=\chi_{\Tau_{\rm o}},
$$
which allow to write succinctly two partitions of unity of $\Tau:$
$$
\sum_{k=1}^d\chi_k\equiv1
\quad\hbox{and}\quad
\chi_{\rm e}+\chi_{\rm o}\equiv1.
$$
By letting $\ell^2(\Tau)$ be the Hilbert space of square-summable functions $\Tau\to\C$
with scalar product
$$
\langle f,g\rangle_{\ell^2(\Tau)}:=\sum_{x\in\Tau}f(x)\overline{g(x)},
\quad f,g\in\ell^2(\Tau),
$$
and $\delta_x$ be the element of the canonical basis of $\ell^2(\Tau)$ which sits at
$x\in\Tau$ (i.e. $\delta_x(y):=\delta_{x,y}$ for $y\in\Tau$), we can decompose
$\chi_1$ as $\chi_1=\chi_{\Tau_1}+\delta_e$.

For $i,j=1,\ldots,d$, we define the right translation $R_i\in\B\big(\ell^2(\Tau)\big)$
and the shift $S_{i,j}\in\B\big(\ell^2(\Tau)\big)$ by
$$
R_if:=f(\;\!\cdot\;\!a_i),\quad f\in\ell^2(\Tau),
$$
and
$$
S_{i,j}:=\chi_{\rm e}R_i+\chi_{\rm o}R_j.
$$
The operators $R_i$ and $S_{i,j}$ are unitary and satisfy the relations
$R_i^{-1}=R_i=S_{i,i}$ and $S_{i,j}^{-1}=S_{j,i}$. Let $\H:=\ell^2(\Tau,\C^d)$
the Hilbert space of square-summable functions $\Tau\to\C^d$ with scalar product
$$
\langle\varphi,\psi\rangle_\H:=\sum_{x\in\Tau}\langle\varphi(x),\psi(x)\rangle_{\C^d},
\quad \varphi,\psi\in\H.
$$
Then the evolution operator of the quantum walk that we consider is the product
$U:=SC$ in $\H$, where $S$ is the (diagonal unitary) shift operator given by
$$
S:=
\left(\begin{smallmatrix}
S_{1+1,1+2} &&&\\
& S_{2+1,2+2} && \mbox{\large$0$}\\
\mbox{\large$0$} && \ddots &\\
&&& S_{d+1,d+2}\\
\end{smallmatrix}\right)
\quad\hbox{with}\quad S_{d,d+1}:=S_{d,1},~S_{d+1,d+2}:=S_{1,2},
$$
and $C$ the (unitary) coin operator given by
$$
(C\varphi)(x):=C(x)\varphi(x),\quad\varphi\in\H,~x\in\Tau,~C(x)\in\U(d).
$$
We assume that the coin operator $C$ has an anisotropic behaviour at infinity; it
converges with short-range rate to $d$ asymptotic coin operators, one on each main
branch of $\Tau$, in the following way:

\begin{Assumption}[Short-range]\label{ass_short}
For each $i=1,\ldots,d$, there exist a diagonal matrix $C_i\in\U(d)$ and a scalar
$\varepsilon_i>0$ such that
$$
\big\|C(x)-C_i\big\|_{\B(\C^d)}\le{\rm Const.}\;\!\langle x\rangle^{-(1+\varepsilon_i)}
\quad\hbox{if $x\in\Tau_i$.}
$$
\end{Assumption}

This assumption provides $d$ new evolution operators $U_i:=SC_i$ describing the
asymptotic behaviour of $U$ on each main branch $\Tau_i$. It also suggests to define
the free evolution operator as the direct sum operator
$$
U_0:=\bigoplus_{k=1}^dU_k\quad\hbox{in}\quad\H_0:=\bigoplus_{k=1}^d\H,
$$
and to define the identification operator $J\in\B(\H_0,\H)$ as
$$
J\;\!\Phi:=\sum_{k=1}^d\chi_k\;\!\varphi_k,
\quad\Phi=(\varphi_1,\dots,\varphi_d)\in\H_0.
$$

Using the same notation for functions and the corresponding multiplication operators
in $\H$, we directly get:

\begin{Lemma}\label{lemma_J}
The adjoint $J^*\in\B(\H,\H_0)$ is given by
$$
J^*\varphi=\big(\chi_1\;\!\varphi,\dots,\chi_d\;\!\varphi\big),
\quad\varphi\in\H,
$$
and satisfies the relations $J^*J=\bigoplus_{k=1}^d\chi_k$ and $JJ^*=1_\H$.
\end{Lemma}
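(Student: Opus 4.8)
The plan is to compute the adjoint $J^*$ directly from the defining relation $\langle J\Phi,\varphi\rangle_\H=\langle\Phi,J^*\varphi\rangle_{\H_0}$ and then verify the two operator identities by elementary manipulations with the characteristic functions $\chi_k$.

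First I would fix $\Phi=(\varphi_1,\varphi_2,\varphi_3)\in\H_0$ and $\varphi\in\H$ and expand
$$
\langle J\Phi,\varphi\rangle_\H
=\Big\langle\tsum_{k=1}^3\chi_k\varphi_k,\varphi\Big\rangle_\H
=\tsum_{k=1}^3\langle\chi_k\varphi_k,\varphi\rangle_\H
=\tsum_{k=1}^3\langle\varphi_k,\chi_k\varphi\rangle_\H,
$$
where in the last step I use that each multiplication operator $\chi_k$ is self-adjoint (indeed a projection) on $\H$, since $\chi_k$ is a real-valued, in fact $\{0,1\}$-valued, function on $\Tau$. The right-hand side is exactly $\langle\Phi,(\chi_1\varphi,\chi_2\varphi,\chi_3\varphi)\rangle_{\H_0}$, so by uniqueness of the adjoint $J^*\varphi=(\chi_1\varphi,\chi_2\varphi,\chi_3\varphi)$, which is the asserted formula. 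Boundedness of $J^*$ is immediate (it is the adjoint of a bounded operator), but it also follows directly from $\sum_{k=1}^3\chi_k\equiv1$ and the fact that the $\chi_k$ have pairwise disjoint supports, which gives $\|J^*\varphi\|_{\H_0}^2=\sum_{k=1}^3\|\chi_k\varphi\|_\H^2=\|\varphi\|_\H^2$.

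For the composition identities I would simply compute. For $J^*J$: given $\Phi=(\varphi_1,\varphi_2,\varphi_3)$, one has $J^*J\Phi=\big(\chi_1\tsum_k\chi_k\varphi_k,\chi_2\tsum_k\chi_k\varphi_k,\chi_3\tsum_k\chi_k\varphi_k\big)$, and since $\chi_i\chi_k=\delta_{ik}\chi_k$ (the supports $\Tau_1\cup\{e\},\Tau_2,\Tau_3$ are pairwise disjoint and $\chi_k$ is idempotent), the $i$-th component collapses to $\chi_i\varphi_i$. Hence $J^*J=\chi_1\oplus\chi_2\oplus\chi_3$. For $JJ^*$: given $\varphi\in\H$, $JJ^*\varphi=\tsum_{k=1}^3\chi_k(\chi_k\varphi)=\tsum_{k=1}^3\chi_k\varphi=\varphi$, using idempotence $\chi_k^2=\chi_k$ and the partition of unity $\sum_{k=1}^3\chi_k\equiv1$; thus $JJ^*=1_\H$.

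There is no real obstacle here; the only point meriting a line of care is the observation that the three sets $\Tau_1\cup\{e\}$, $\Tau_2$, $\Tau_3$ partition $\Tau$ (so that the $\chi_k$ are orthogonal projections summing to the identity), which is precisely the content of the relation $\sum_{k=1}^3\chi_k\equiv1$ recorded just before the statement together with the disjointness built into the definitions of $\chi_1,\chi_2,\chi_3$. Everything else is a direct computation, which is exactly why the preceding sentence in the paper says the lemma follows ``directly''.
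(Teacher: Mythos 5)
Your proof is correct and is exactly the direct computation the paper has in mind (the paper omits the argument, saying the lemma follows ``directly'' from the definition of $J$). The only points that need care — self-adjointness of the multiplication operators $\chi_k$, pairwise disjointness of the supports $\Tau_1\cup\{e\},\Tau_2,\Tau_3$, and the partition of unity $\sum_{k=1}^3\chi_k\equiv1$ — are all handled properly.
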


\begin{Remark}
Choosing $U_0$ as a free evolution operator for $U$ is not the only possibility.
Like for other quantum systems with discrete or continuous time variable, several
choices are possible for the free evolution operator. We will discuss some of them in
Section \ref{sec_wave}.
\end{Remark}

%--------------------------------------------------------------------------------------
\subsection{Free evolution operator}\label{section_free}
%--------------------------------------------------------------------------------------

In this section, we construct a conjugate operator for the free evolution operator
$U_0$ and determine the spectral properties of $U_0$. For this, one first needs to
construct a conjugate operator for the shifts $S_{i,j}$ ($i\ne j$). Following the
intuition provided by classical mechanics \cite[Sec.~1.1]{Tie_2013}, the approach we
use is to define the conjugate operator for $S_{i,j}$ as an operator of the form
$S_{i,j}^{-1}[X^2,S_{i,j}]$, where $X$ is an appropriate position observable in
$\ell^2(\Tau)$ growing along the discrete flow generated by $S_{i,j}$. A natural
candidate for $X$ is the operator of multiplication by the word length $|\cdot|$.
However, the fact that in general $|\cdot|$ does not take the value $0$ on the support
of the iterates $S_{i,j}^nf$ ($n\in\Z,f\in\ell^2(\Tau)$) generates technical
difficulties. See Figure \ref{fig_iterates} for an example where $d=3$, $(i,j)=(1,2)$,
$f=\delta_{a_3}$, and the minimum value taken by $|\cdot|$ on the support of the
iterates is $1$. To correct this issue, one has to use modifications of the word length
consisting in $|\cdot|$ composed with a translation making the support of the iterates
$S_{i,j}^nf$ pass by the neutral element (in Figure \ref{fig_iterates}, this would be
the translation by $a_3$). For each $i\ne j$, the modified word length is defined as
$$
|\cdot|_{i,j}:\Tau\to\N,~~x\mapsto|x|_{i,j}:=\big|x_{i,j}^{-1}x\big|,
$$
where $x_{i,j}$ is the longest word in the reduced representation of $x$, starting
from the left and ending with a letter different from $a_i$ or $a_j$. For instance, if
$x=a_1a_2a_3a_2$, then $x_{1,2}=a_1a_2a_3$, and if $x=a_1a_2a_1a_2$, then $x_{1,2}=e$.

\begin{figure}[h]
\centering
\includegraphics[width=245pt]{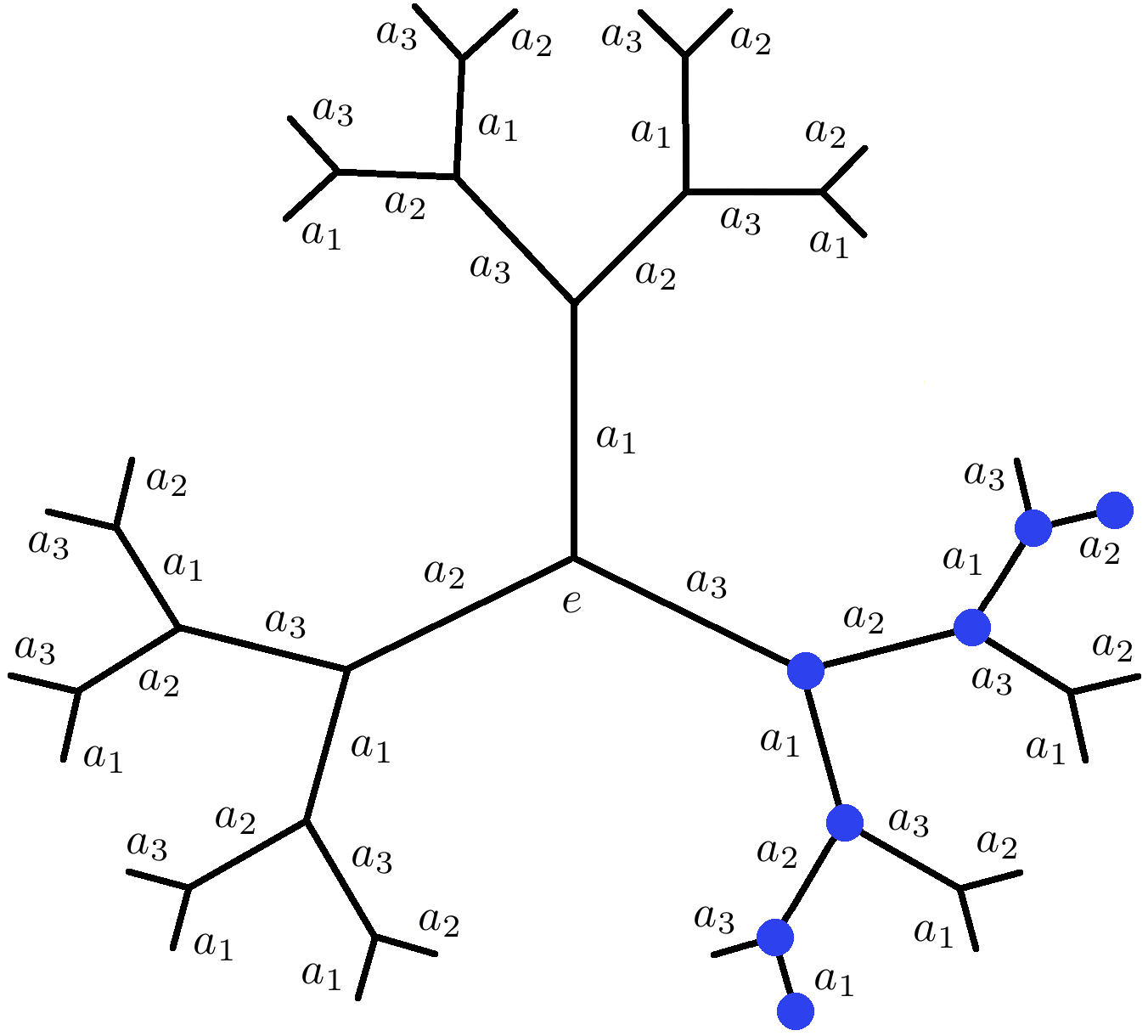}
\caption{Support of the iterates $S_{12}^n\delta_{a_3}$ ($n\in\Z$) when $d=3$}
\label{fig_iterates}
\end{figure}

Using these modified word lengths, we can now construct the conjugate operator for
$S_{i,j}$. We write $C_{\rm c}(\Tau)$ for the set of functions $\Tau\to\C$ with compact
support.

\begin{Lemma}[Conjugate operator for $S_{i,j}$]\label{lemma_SC_0_ij}
Take $i\ne j$. Then the operator
\begin{equation}\label{eq_A_ij}
A_{i,j}f
:=S_{i,j}^{-1}\big[|\cdot|_{i,j}^2,S_{i,j}\big]f
=\big(\chi_{\rm e}\;\!|\cdot a_j|_{i,j}^2+\chi_{\rm o}\;\!|\cdot a_i|_{i,j}^2
-|\cdot|_{i,j}^2\big)f,\quad f\in C_{\rm c}(\Tau),
\end{equation}
is essentially self-adjoint in $\ell^2(\Tau)$, with closure also denoted by $A_{i,j}$.
Furthermore, one has $S_{i,j}\in C^\infty(A_{i,j})$ with
$$
S_{i,j}^{-1}[A_{i,j},S_{i,j}]=2\cdot1_{\ell^2(\Tau)}.
$$
\end{Lemma}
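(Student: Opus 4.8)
The plan is to verify the three assertions of the lemma in turn: first the explicit formula \eqref{eq_A_ij} for $A_{ij}$ on $C_{\rm c}(\Tau)$, then the essential self-adjointness, and finally the identity $S_{ij}^{-1}[A_{ij},S_{ij}]=2$ together with $S_{ij}\in C^\infty(A_{ij})$.

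For the formula, I would start directly from $A_{ij}f=S_{ij}^{-1}\bigl(|\cdot|_k^2 S_{ij}f\bigr)-|\cdot|_k^2 f$ for $f\in C_{\rm c}(\Tau)$. Since $S_{ij}=\chi_{\rm e}R_i+\chi_{\rm o}R_j$ and $S_{ij}^{-1}=S_{ji}=\chi_{\rm e}R_j+\chi_{\rm o}R_i$, and since $R_i$ maps functions supported on $\Tau_{\rm e}$ to functions supported on $\Tau_{\rm o}$ and vice versa (because $|xa_i|=|x|\pm1$), a careful bookkeeping of parities shows $S_{ij}^{-1}\bigl(g\cdot(\chi_{\rm e}R_if+\chi_{\rm o}R_jf)\bigr)=\chi_{\rm e}R_j(g\chi_{\rm o})R_jf+\chi_{\rm o}R_i(g\chi_{\rm e})R_if$ for a multiplier $g$; applied with $g=|\cdot|_k^2$ and using $R_j(|\cdot|_k^2\chi_{\rm o})R_jf=(\chi_{\rm e}|\cdot a_j|_k^2)f$ on the relevant support, one lands on $\chi_{\rm e}|\cdot a_j|_k^2 f+\chi_{\rm o}|\cdot a_i|_k^2 f$. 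This is mostly a routine but slightly delicate computation with the parity projections $\chi_{\rm e},\chi_{\rm o}$ and the commutation rules $R_i\chi_{\rm e}=\chi_{\rm o}R_i$; the main thing to get right is that conjugating a multiplication operator $g$ by $S_{ij}$ amounts to a branch-dependent translation of $g$, which is exactly what makes $A_{ij}$ a multiplication operator.

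Once $A_{ij}=M_{a_{ij}}$ is a real multiplication operator with symbol $a_{ij}:=\chi_{\rm e}|\cdot a_j|_k^2+\chi_{\rm o}|\cdot a_i|_k^2-|\cdot|_k^2$, essential self-adjointness on $C_{\rm c}(\Tau)$ is standard: a multiplication operator by a real function on $\ell^2$ of a countable set is always essentially self-adjoint on the finitely-supported vectors (its closure is the maximal multiplication operator, which is self-adjoint). For $S_{ij}\in C^\infty(A_{ij})$, since $A_{ij}$ is multiplication by $a_{ij}$, the operators $\e^{itA_{ij}}$ are multiplication by $\e^{ita_{ij}}$, and $\e^{-itA_{ij}}S_{ij}\e^{itA_{ij}}$ acts on $\delta_x$ essentially as $\e^{it(a_{ij}(x\cdot)-a_{ij}(x))}$ times a shift; the exponent is a bounded multiplication operator, so one checks by hand that the map $t\mapsto\e^{-itA_{ij}}S_{ij}\e^{itA_{ij}}$ is smooth in operator norm, with iterated commutators $[A_{ij},[A_{ij},\dots,[A_{ij},S_{ij}]\dots]]$ all bounded. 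The cleanest route is to compute the first commutator: $S_{ij}^{-1}[A_{ij},S_{ij}]=S_{ij}^{-1}A_{ij}S_{ij}-A_{ij}=S_{ij}^{-1}M_{a_{ij}}S_{ij}-M_{a_{ij}}$, which by the same conjugation-is-translation principle equals multiplication by $\bigl(\chi_{\rm e}\,a_{ij}(\cdot a_i)+\chi_{\rm o}\,a_{ij}(\cdot a_j)\bigr)-a_{ij}$, and then show this telescopes to the constant $2$.

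The main obstacle is the final telescoping identity: one must show that, for $x\in\Tau$, the quantity $a_{ij}(xa_i)-a_{ij}(x)$ (when $x$ is even) or $a_{ij}(xa_j)-a_{ij}(x)$ (when $x$ is odd) equals exactly $2$. Expanding $a_{ij}$, the even case reads $\bigl(|xa_ia_j|_k^2-|xa_i|_k^2\bigr)+\bigl(|xa_i|_k^2-|x|_k^2\bigr)$ — wait, more carefully, $a_{ij}(xa_i)=|xa_ia_j|_k^2-|xa_i|_k^2$ since $xa_i$ is odd, and $a_{ij}(x)=|xa_j|_k^2-|x|_k^2$ since $x$ is even; the difference is $|xa_ia_j|_k^2-|xa_i|_k^2-|xa_j|_k^2+|x|_k^2$. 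Here the key geometric input is how the modified word norm $|\cdot|_k$ behaves under right multiplication by $a_i$ and $a_j$ along the trajectory of $S_{ij}$: precisely, along such a trajectory the base point $x_k$ in $|y|_k=|x_k^{-1}y|$ stays fixed, so $|\cdot|_k$ differs from $|\cdot|$ by a constant shift and each step of the trajectory increments $|\cdot|_k$ by exactly $1$, turning the four-term combination of squares into $(n+2)^2-(n+1)^2-(n+1)^2+n^2=2$. Establishing that the "anchor" $x_k$ is indeed invariant along $S_{ij}$-trajectories — which is the whole point of introducing the modified word norms and is illustrated in Figure \ref{fig_iterates} — and handling the finitely many exceptional vertices near $e$ where the combinatorics of $|\cdot|_k$ could degenerate, is where the real work lies; everything else is bookkeeping.
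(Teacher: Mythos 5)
Your overall strategy --- realise $A_{ij}$ as multiplication by the real symbol $a_{ij}:=\chi_{\rm e}|\cdot a_j|_k^2+\chi_{\rm o}|\cdot a_i|_k^2-|\cdot|_k^2$, get essential self-adjointness for free, and reduce $S_{ij}^{-1}[A_{ij},S_{ij}]=2$ to a pointwise identity for second differences of $|\cdot|_k^2$ --- is exactly the paper's. The essential self-adjointness argument and the passage from $[A_{ij},S_{ij}]=2S_{ij}$ to $S_{ij}\in C^\infty(A_{ij})$ are fine. But the central computation, as you set it up, would fail. The conjugation formula has the generators swapped: since $S_{ij}^{-1}=S_{ji}=\chi_{\rm e}R_j+\chi_{\rm o}R_i$, at an even site $(S_{ij}^{-1}M_gS_{ij}f)(x)=g(xa_j)f(x)$, not $g(xa_i)f(x)$ (this is consistent with \eqref{eq_A_ij}, which you correctly reproduce, but inconsistent with the formula $\chi_{\rm e}\,a_{ij}(\cdot a_i)+\chi_{\rm o}\,a_{ij}(\cdot a_j)$ you then apply). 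With the correct indices the symbol at even $x$ is $a_{ij}(xa_j)-a_{ij}(x)=|xa_ja_i|_k^2-2|xa_j|_k^2+|x|_k^2$, a genuine second difference along the two-step move $x\mapsto xa_j\mapsto xa_ja_i$. Your four-term expression $|xa_ia_j|_k^2-|xa_i|_k^2-|xa_j|_k^2+|x|_k^2$ is not that quantity and is not constant: for $(i,j,k)=(1,2,3)$ and $x=a_3a_1$ it equals $1-0-4+1=-2$. (You also mis-expand the symbol at odd points: $a_{ij}(y)=|ya_i|_k^2-|y|_k^2$ for odd $y$, so $a_{ij}(xa_i)=|x|_k^2-|xa_i|_k^2$, not $|xa_ia_j|_k^2-|xa_i|_k^2$.)

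The justification of the telescoping also rests on a false premise. It is not true that each step of an $S_{ij}$-trajectory increments $|\cdot|_k$ by exactly $1$: on the incoming half of each trajectory the modified norm decreases by $1$ per step, reaches $0$ at the anchor, and only then increases. The identity survives because $(n\pm2)^2-2(n\pm1)^2+n^2=2$ in either monotone case; what actually needs proving is that two consecutive steps never have opposite signs, i.e. $|xa_ja_i|_k\ne|x|_k$ (which holds by the very definition of $|\cdot|_k$), and the turning-point cases $|x|_k=0$ and $|x|_k=1$ (the latter splitting according to whether $x_k^{-1}x=a_i$ or $a_j$) must be checked by hand. These are not ``finitely many exceptional vertices near $e$'': every trajectory carries an anchor, so the set $\{x\in\Tau\mid|x|_k\le1\}$ is infinite. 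This case analysis is precisely part (ii) of the paper's proof and is the part your write-up leaves undone.
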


\begin{proof}
(i) The equality
$$
S_{i,j}^{-1}\big[|\cdot|_{i,j}^2,S_{i,j}\big]f
=\big(\chi_{\rm e}\;\!|\cdot a_j|_{i,j}^2+\chi_{\rm o}\;\!|\cdot a_i|_{i,j}^2
-|\cdot|_{i,j}^2\big)f,\quad f\in C_{\rm c}(\Tau),
$$
follows from a direct calculation, and the essential self-adjointness of $A_{i,j}$
follows from the fact that multiplication operators are essentially self-adjoint on
the set of functions with compact support \cite[Ex.~5.1.15]{Ped_1989}. Now, further
calculations give
$$
S_{i,j}^{-1}[A_{i,j},S_{i,j}]f
=\big\{\chi_{\rm e}\big(|\cdot a_ja_i|_{i,j}^2-2|\cdot a_j|_{i,j}^2+|\cdot|_{i,j}^2\big)
+\chi_{\rm o}\big(|\cdot a_ia_j|_{i,j}^2-2|\cdot a_i|_{i,j}^2+|\cdot|_{i,j}^2\big)\big\}f,
$$
and in point (ii) below we show that
\begin{equation}\label{eq_equal_2}
\chi_{\rm e}(x)\big(|xa_ja_i|_{i,j}^2-2|xa_j|_{i,j}^2+|x|_{i,j}^2\big)
+\chi_{\rm o}(x)\big(|xa_ia_j|_{i,j}^2-2|xa_i|_{i,j}^2+|x|_{i,j}^2\big)=2
\quad\hbox{for all $x\in\Tau$.}
\end{equation}
Therefore, it follows that $S_{i,j}^{-1}[A_{i,j},S_{i,j}]=2\cdot1_{\ell^2(\Tau)}$ on
$C_{\rm c}(\Tau)$, and since $C_{\rm c}(\Tau)$ is a core for $A_{i,j}$ this implies
that $S_{i,j}\in C^\infty(A_{i,j})$ with
$S_{i,j}^{-1}[A_{i,j},S_{i,j}]=2\cdot1_{\ell^2(\Tau)}$.

(ii) To prove \eqref{eq_equal_2} it is sufficient to show that
$$
|xa_ja_i|_{i,j}^2-2|xa_j|_{i,j}^2+|x|_{i,j}^2=2\hbox{ if $x\in\Tau_{\rm e}$}
\quad\hbox{and}\quad
|xa_ia_j|_{i,j}^2-2|xa_i|_{i,j}^2+|x|_{i,j}^2=2\hbox{ if $x\in\Tau_{\rm o}$.}
$$
We only give the proof of the first identity, since the second one is similar. If
$|x|_{i,j}=0$, then $x_{i,j}^{-1}x=e$ and
$$
|xa_ja_i|_{i,j}^2-2|xa_j|_{i,j}^2+|x|_{i,j}^2
=|a_ja_i|^2-2|a_j|^2+0^2
=2.
$$
If $|x|_{i,j}=1$, then $x_{i,j}^{-1}x=a_i$ or $x_{i,j}^{-1}x=a_j$. In the first case, we get
$$
|xa_ja_i|_{i,j}^2-2|xa_j|_{i,j}^2+|x|_{i,j}^2
=|a_ia_ja_i|^2-2|a_ia_j|^2+1^2
=2,
$$
and in the second case we get
$$
|xa_ja_i|_{i,j}^2-2|xa_j|_{i,j}^2+|x|_{i,j}^2
=|a_i|^2-2|e|^2+1^2
=2.
$$
Finally, if $|x|_{i,j}\ge2$, then $|xa_j|_{i,j}=|x|_{i,j}+1$ or $|xa_j|_{i,j}=|x|_{i,j}-1$. In the first
case, we get $|xa_ja_i|_{i,j}=|x|_{i,j}+2$ and
$$
|xa_ja_i|_{i,j}^2-2|xa_j|_{i,j}^2+|x|_{i,j}^2
=(|x|_{i,j}+2)^2-2(|x|_{i,j}+1)^2+|x|_{i,j}^2
=2,
$$
and in the second case we get $|xa_ja_i|_{i,j}=|x|_{i,j}-2$ and
$$
|xa_ja_i|_{i,j}^2-2|xa_j|_{i,j}^2+|x|_{i,j}^2
=(|x|_{i,j}-2)^2-2(|x|_{i,j}-1)^2+|x|_{i,j}^2
=2.
$$
Note that we cannot have $|xa_ja_i|_{i,j}=|x|_{i,j}$ due to the very definition of $|\cdot|_{i,j}$.
\end{proof}

Using the operators $A_{i,j}$ we can construct a conjugate operator for operators
$\widetilde U_0:=SC_0$, where $C_0$ is a diagonal coin operator
$$
(C_0\varphi)(x):=C_0(x)\varphi(x),
\quad\hbox{$\varphi\in\H$, $x\in\Tau$, $C_0(x)\in\U(d)$ and diagonal.}
$$
We use the notation $C_{\rm c}(\Tau,\C^d)$ for the set of functions $\Tau\to\C^d$ with
compact support and we set $A_{d,d+1}:=A_{d,1}$ and $A_{d+1,d+2}:=A_{1,2}$.

\begin{Lemma}[Conjugate operator for $\widetilde U_0$]\label{lemma_SC_0}
The operator
$$
\widetilde A\;\!\varphi
:=\left(\begin{smallmatrix}
A_{1+1,1+2} &&&\\
& A_{2+1,2+2} && \mbox{\large$0$}\\
\mbox{\large$0$} && \ddots &\\
&&& A_{d+1,d+2}\\
\end{smallmatrix}\right)
\varphi,
\quad\varphi\in C_{\rm c}(\Tau,\C^d),
$$
is essentially self-adjoint in $\H$, with closure also denoted by $\widetilde A$.
Furthermore, one has $\widetilde U_0\in C^\infty(\widetilde A)$ with
$$
\widetilde U_0^{-1}[\widetilde A,\widetilde U_0]=2\cdot1_\H.
$$
\end{Lemma}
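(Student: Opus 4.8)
The plan is to reduce the lemma to the one-branch statement already proved in Lemma \ref{lemma_SC_0_ij} by exploiting the block-diagonal structure of $\widetilde A$ and $\widetilde U_0$ together with the diagonality of the coin $C_0$. The key observation is that $C_0$, being diagonal at each vertex, acts componentwise on $\C^3$, so the three components of $\H=\ell^2(\Tau)\oplus\ell^2(\Tau)\oplus\ell^2(\Tau)$ (labelled by the coin index) are each invariant under $\widetilde U_0=SC_0$, and on the $m$-th component $\widetilde U_0$ acts as $S_{i_m j_m} (C_0)_{mm}$, where $(i_1j_1,i_2j_2,i_3j_3)=(23,31,12)$ and $(C_0)_{mm}$ denotes the scalar multiplication operator by the $m$-th diagonal entry of $C_0(\cdot)$. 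Thus $\widetilde U_0 = \bigoplus_{m=1}^3 S_{i_m j_m}(C_0)_{mm}$ and $\widetilde A = \bigoplus_{m=1}^3 A_{i_m j_m}$, so it suffices to treat a single summand $V_m := S_{i_m j_m}(C_0)_{mm}$ in $\ell^2(\Tau)$ with conjugate operator $A_{i_m j_m}$.

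First I would record that $A_{ij}$ is essentially self-adjoint on $C_{\rm c}(\Tau)$ by Lemma \ref{lemma_SC_0_ij}, so $\widetilde A$, being the orthogonal direct sum of three copies of such operators restricted to $C_{\rm c}(\Tau,\C^3)=C_{\rm c}(\Tau)\oplus C_{\rm c}(\Tau)\oplus C_{\rm c}(\Tau)$, is essentially self-adjoint in $\H$ with closure the direct sum of the closures. Next I would check the regularity and commutator identity on a single branch. Write $V = S_{ij} D$ with $D$ the scalar multiplication operator by a function $d:\Tau\to\S^1$ (a diagonal entry of $C_0$, hence $|d(x)|=1$ for all $x$). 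Since $D$ is a bounded multiplication operator and $A_{ij}$ is itself a bounded multiplication operator by \eqref{eq_A_ij}, the two commute: $[A_{ij},D]=0$ on $C_{\rm c}(\Tau)$, so $D\in C^\infty(A_{ij})$ with vanishing commutator. Combined with $S_{ij}\in C^\infty(A_{ij})$ from Lemma \ref{lemma_SC_0_ij}, the product rule for the classes $C^k(A_{ij})$ gives $V=S_{ij}D\in C^\infty(A_{ij})$. Moreover, using $[A_{ij},D]=0$ and the Leibniz rule for commutators,
$$
V^{-1}[A_{ij},V]
= D^{-1}S_{ij}^{-1}\big([A_{ij},S_{ij}]D + S_{ij}[A_{ij},D]\big)
= D^{-1}S_{ij}^{-1}[A_{ij},S_{ij}]D
= D^{-1}\big(2\cdot1_{\ell^2(\Tau)}\big)D
= 2\cdot1_{\ell^2(\Tau)},
$$
where the middle step uses $S_{ij}^{-1}[A_{ij},S_{ij}]=2\cdot1_{\ell^2(\Tau)}$ from Lemma \ref{lemma_SC_0_ij} and the fact that $2\cdot1$ is central. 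Taking the direct sum over the three branches yields $\widetilde U_0\in C^\infty(\widetilde A)$ with $\widetilde U_0^{-1}[\widetilde A,\widetilde U_0]=2\cdot1_\H$.

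I expect the only genuinely delicate point to be a bookkeeping matter rather than a conceptual one: making sure that "$C^k(A)$ for a direct sum equals the direct sum of the $C^k$-classes" is invoked cleanly, and that the product rule $S,D\in C^k(A)\Rightarrow SD\in C^k(A)$ with $[A,SD]=[A,S]D+S[A,D]$ (see \cite[Sec.~5.1]{ABG96}) is applied with the correct domains — all routine, since $D$ and $A_{ij}$ are bounded multiplication operators and hence $D\in C^\infty(A_{ij})$ trivially. No new estimates are needed; the entire content is the reduction to Lemma \ref{lemma_SC_0_ij} plus the observation that multiplying the shift on the left by a unitary diagonal coin does not affect the commutator with $A_{ij}$ because that coin commutes with the (multiplication) operator $A_{ij}$.
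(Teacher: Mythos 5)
Your argument is correct and is essentially the paper's own proof: the paper likewise reduces everything to Lemma \ref{lemma_SC_0_ij} via the isomorphism $\H\simeq\ell^2(\Tau)\oplus\ell^2(\Tau)\oplus\ell^2(\Tau)$ together with the relation $[\widetilde A,C_0]=0$, computing $\widetilde U_0^{-1}[\widetilde A,\widetilde U_0]=C_0^{-1}\big(S^{-1}[\widetilde A,S]\big)C_0=2\cdot1_\H$. One small correction: $A_{ij}$ is \emph{not} a bounded multiplication operator (its symbol grows like $2|x|_k+1$, which is why the paper only has $A_{ij}\langle\cdot\rangle^{-1}\in\linf(\Tau)$ later on), but your conclusions $[A_{ij},D]=0$ and $D\in C^\infty(A_{ij})$ survive because $\e^{itA_{ij}}$ and $D$ are multiplication operators in the same basis and therefore commute exactly.
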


\begin{proof}
The essential self-adjointness of $\widetilde A$ on $C_{\rm c}(\Tau,\C^d)\subset\H$
follows from the isomorphism $\H\simeq\bigoplus_{k=1}^d\ell^2(\Tau)$ and the fact that
the operators $A_{i,j}$ are essentially self-adjoint on
$C_{\rm c}(\Tau)\subset\ell^2(\Tau)$. Using the same isomorphism, Lemma
\ref{lemma_SC_0_ij}, and the relation $[\widetilde A,C_0]=0$, one obtains that
$\widetilde U_0\in C^\infty(\widetilde A)$ with
$$
\widetilde U_0^{-1}[\widetilde A,\widetilde U_0]
=C_0^{-1}\big(S^{-1}[\widetilde A,S]\big)C_0
=C_0^{-1}(2\cdot1_\H)C_0
=2\cdot1_\H.
$$
\end{proof}

Lemma \ref{lemma_SC_0}, Theorem \ref{thm_spec_prop}, and Remark \ref{rem_abs}(a)
imply that $\widetilde U_0$ has purely absolutely continuous spectrum. But more can be
said. Since $\widetilde A$ is essentially self-adjoint on $C_{\rm c}(\Tau,\C^d)$, the
identity $\widetilde U_0^{-1}[\widetilde A,\widetilde U_0]=2\cdot1_\H$ implies that
$\widetilde U_0^{-1}\widetilde A\;\!\widetilde U_0=\widetilde A+2\cdot1_\H$. Using
this relation and functional calculus, we obtain for any $s\in\R$ and
$\gamma\in C(\S^1)$ that
\begin{equation}\label{eq_imp}
\e^{is\widetilde A}\gamma(\widetilde U_0)\e^{-is\widetilde A}
=\gamma\big(\widetilde U_0\e^{is\widetilde U_0^{-1}\widetilde A\;\!\widetilde U_0}
\e^{-is\widetilde A}\big)
=\gamma\big(\widetilde U_0\e^{is(\widetilde A+2\cdot1_\H)}\e^{-is\widetilde A}\big)
=\gamma(\e^{2is}\widetilde U_0).
\end{equation}
The relation
$
\e^{is\widetilde A}\gamma(\widetilde U_0)\e^{-is\widetilde A}
=\gamma(\e^{2is}\widetilde U_0)
$
and Mackey's imprimitivity theorem \cite[Thm.~5]{Ors_1979} applied to the group $\R$
and the subgroup $\Z$ imply the existence of a continuous unitary representation
$\sigma$ of $\Z$ in a Hilbert space $\mathfrak h_\sigma$ achieving the following: Let
$F_\sigma$ be the set of functions $f_\sigma:\R\to\mathfrak h_\sigma$ such that
\begin{enumerate}[(i)]
\item $f_\sigma(n+s)=\sigma(n)f_\sigma(s)$ for all $n\in\Z$ and $s\in\R$,
\item $\|f_\sigma(\;\!\cdot\;\!)\|_{\mathfrak h_\sigma}\in\ltwoloc(\R)$,
\item $f_\sigma$ is strongly measurable,
\end{enumerate}
let $\langle\cdot,\cdot\rangle_{\H_\sigma}$ and $\|\cdot\|_{\H_\sigma}$ be the scalar
product and norm on $F_\sigma$ given by
$$
\langle f_\sigma,g_\sigma\rangle_{\H_\sigma}
:=\int_0^1\d s\,\langle f_\sigma(s),g_\sigma(s)\rangle_{\mathfrak h_\sigma}
\quad\hbox{and}\quad
\|f_\sigma\|_{\H_\sigma}:=\sqrt{\langle f_\sigma,f_\sigma\rangle_{\H_\sigma}},
\quad f_\sigma,g_\sigma\in F_\sigma,
$$
and let $\H_\sigma$ be the Hilbert space completion of $F_\sigma$ for the norm
$\|\cdot\|_{\H_\sigma}$, that is,
$$
\H_\sigma:=\{f_\sigma\in F_\sigma\mid\|f_\sigma\|_{\H_\sigma}<\infty\}
/\{f_\sigma\in F_\sigma\mid\|f_\sigma\|_{\H_\sigma}=0\}.
$$
Then there exists a unitary operator $\UU:\H\to\H_\sigma$ satisfying for each
$s\in\R$ and $\gamma\in C(\S^1)$
$$
\UU\e^{\pi is\widetilde A}\UU^{-1}=U_\sigma(s)
\quad\hbox{and}\quad
\UU\gamma(\widetilde U_0)\UU^{-1}=P_\sigma(\gamma),
$$
with $U_\sigma$ the induced continuous unitary representation of $\sigma$ from $\Z$ to
$\R$ given by
$$
\big(U_\sigma(s)f_\sigma\big)(t):=f_\sigma(t+s),\quad s,t\in\R,~f_\sigma\in\H_\sigma,
$$
and $P_\sigma$ given by
$$
\big(P_\sigma(\gamma)f_\sigma\big)(s):=\gamma(\e^{2\pi is})f_\sigma(s),
\quad s\in\R,~f_\sigma\in\H_\sigma,~\gamma\in C(\S^1).
$$
Therefore, the operator $\widetilde U_0$ is unitarily equivalent to a multiplication
operator with purely absolutely continuous spectrum covering the whole unit circle
$\S^1$. See \cite[Sec.~2.1]{HJ_2014} for a similar result in the case $C_0=1_\H$
obtained by using the matrix representation of $S$ in the canonical basis of $\H$. See
also \cite{AP72,Tie17_1,RT19} for more general results on spectral properties of
unitary representations satisfying commutation relations similar to \eqref{eq_imp}.

Using what precedes, we can finally construct a conjugate operator for $U_0$ and
determine its spectral properties:

\begin{Proposition}[Spectral properties of $U_0$]\label{prop_spec_U_0}
Let
$$
A_0\Phi:=\left(\bigoplus_{k=1}^d\widetilde A\right)\Phi,
\quad\Phi\in\bigoplus_{k=1}^dC_{\rm c}(\Tau,\C^d).
$$
\begin{enumerate}[(a)]
\item $A_0$ is essentially self-adjoint in $\H_0$, with closure also denoted by $A_0$.
\item $U_0\in C^\infty(A_0)$ with $U_0^{-1}[A_0,U_0]=2\cdot1_{\H_0}$, and $U_0$
satisfies the imprimitivity relation
$$
\e^{isA_0}\gamma(U_0)\e^{-isA_0}=\gamma(\e^{2is}U_0),\quad s\in\R,~\gamma\in C(\S^1).
$$
\item $U_0$ is unitarily equivalent to a multiplication operator with purely
absolutely continuous spectrum
$$
\sigma(U_0)=\sigma_{\rm ac}(U_0)=\S^1.
$$
\end{enumerate}
\end{Proposition}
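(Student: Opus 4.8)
The plan is to reduce the entire statement to Lemma~\ref{lemma_SC_0} and the imprimitivity argument following it, by noting that each summand $U_i=SC_i$ of $U_0$ is an operator of the type $\widetilde U_0=SC_0$ with $C_0$ diagonal: indeed, Assumption~\ref{ass_short} guarantees that each $C_i$ is a diagonal matrix in $\U(3)$, so $SC_i$ satisfies the hypotheses of Lemma~\ref{lemma_SC_0}. Everything then follows by working branch by branch and taking direct sums over $i=1,2,3$.

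For part~(a), I would use the elementary fact that a block-diagonal symmetric operator is essentially self-adjoint on a direct sum of subspaces as soon as each block is essentially self-adjoint on the corresponding subspace; applied to $A_0=\widetilde A\oplus\widetilde A\oplus\widetilde A$ and the core $C_{\rm c}(\Tau,\C^3)\oplus C_{\rm c}(\Tau,\C^3)\oplus C_{\rm c}(\Tau,\C^3)$, the essential self-adjointness of $\widetilde A$ from Lemma~\ref{lemma_SC_0} gives the claim. For part~(b), I would apply Lemma~\ref{lemma_SC_0} to each $U_i=SC_i$ separately to get $U_i\in C^\infty(\widetilde A)$ with $U_i^{-1}[\widetilde A,U_i]=2\cdot1_\H$; since each map $s\mapsto\e^{-is\widetilde A}U_i\e^{is\widetilde A}$ is strongly of class $C^\infty$ and a finite direct sum of such maps is again strongly $C^\infty$, this yields $U_0\in C^\infty(A_0)$ with $U_0^{-1}[A_0,U_0]=2\cdot1_{\H_0}$. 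Because $A_0$ is essentially self-adjoint on the core above, this operator identity upgrades to $U_0^{-1}A_0U_0=A_0+2\cdot1_{\H_0}$, and the imprimitivity relation $\e^{isA_0}\gamma(U_0)\e^{-isA_0}=\gamma(\e^{2is}U_0)$ then follows from functional calculus exactly as in the derivation of~\eqref{eq_imp}.

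For part~(c), I would recall that the Mackey imprimitivity argument carried out above shows that every operator of the type $\widetilde U_0$ is unitarily equivalent to a multiplication operator with purely absolutely continuous spectrum equal to $\S^1$; hence so is each $U_i$, and since $U_0=U_1\oplus U_2\oplus U_3$ it is unitarily equivalent to a direct sum of three such multiplication operators, which is again a multiplication operator, with purely absolutely continuous spectrum equal to $\S^1\cup\S^1\cup\S^1=\S^1$. I do not expect a genuine obstacle here: the proposition is essentially bookkeeping, combining Lemma~\ref{lemma_SC_0} with three standard stability properties of finite direct sums (essential self-adjointness of block-diagonal operators, stability of the regularity classes $C^k(A)$ together with the associated commutator identities, and stability of absolute continuity). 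The only point deserving a moment's care is verifying that $SC_i$ is genuinely an instance of $\widetilde U_0$, which is immediate from the diagonality of $C_i$ in Assumption~\ref{ass_short}.
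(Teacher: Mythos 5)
Your proposal is correct and follows essentially the same route as the paper: part (a) via essential self-adjointness of direct sums, part (b) by applying Lemma~\ref{lemma_SC_0} with $\widetilde U_0=SC_i$ to each summand and taking direct sums of the commutator identities, and part (c) via the Mackey imprimitivity argument preceding the proposition. The only cosmetic difference is that in (c) you apply the imprimitivity argument summand by summand before taking the direct sum, whereas the paper applies it directly to the pair $(U_0,A_0)$; both are immediate once the relation $U_0^{-1}[A_0,U_0]=2\cdot1_{\H_0}$ is established.
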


\begin{proof}
Point (a) follows from Lemma \ref{lemma_SC_0} and the fact that direct sums of
essentially self-adjoint operators are essentially self-adjoint. For point (b), using
Lemma \ref{lemma_SC_0} with $\widetilde U_0=SC_k$ ($k=1,\ldots,d$), we get the equalities
$$
U_0^{-1}[A_0,U_0]
=\bigoplus_{k=1}^d(SC_k)^{-1}[\widetilde A,SC_k]
=\bigoplus_{k=1}^d(2\cdot1_\H)
=2\cdot1_{\H_0}.
$$
This implies that $U_0\in C^\infty(A_0)$ with $U_0^{-1}[A_0,U_0]=2\cdot1_{\H_0}$, and
thus the imprimitivity relation
$$
\e^{isA_0}\gamma(U_0)\e^{-isA_0}=\gamma(\e^{2is}U_0),\quad s\in\R,~\gamma\in C(\S^1).
$$
Finally, using this relation and Mackey's imprimitivity theorem, one can show point
(c) as in the paragraph preceding this proposition.
\end{proof}

%--------------------------------------------------------------------------------------
\subsection{Full evolution operator}\label{section_full}
%--------------------------------------------------------------------------------------

In this section, we use the theory of Section \ref{sec_two} to construct a conjugate
operator $A$ for the full evolution operator $U$ starting from the conjugate operator
$A_0$ for $U_0$. As a by-product, we obtain a class of locally $U$-smooth operators
and determine spectral properties of $U$.

We start by showing that the perturbation $V=JU_0-UJ$ is trace class:

\begin{Lemma}\label{lemma_V_trace}
The perturbation $V$ factorises as $V=G^*G_0$, with $G_0\in S_2(\H_0)$ and
$G\in S_2(\H,\H_0)$. In particular, $V$ is trace class.
\end{Lemma}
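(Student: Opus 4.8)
The plan is to compute $V=JU_0-UJ$ explicitly and to produce the Hilbert-Schmidt factorisation one main branch at a time. Using Lemma \ref{lemma_J} together with $U_0=U_1\oplus U_2\oplus U_3$, $U_i=SC_i$, $U=SC$ and $J\Phi=\sum_{k=1}^3\chi_k\varphi_k$, I would first check that for $\Phi=(\varphi_1,\varphi_2,\varphi_3)\in\H_0$,
$$
V\Phi=\sum_{k=1}^3\big(\chi_kSC_k-SC\chi_k\big)\varphi_k ,
$$
so that $V$ is the ``row'' operator with blocks $W_k:=\chi_kSC_k-SC\chi_k\in\B(\H)$. Since a product of two Hilbert-Schmidt operators is trace class, and a finite sum of trace-class (resp. Hilbert-Schmidt) operators is again trace class (resp. Hilbert-Schmidt), it suffices to factor each $W_k$ as $W_k=G_k^*G_{0,k}$ with $G_{0,k},G_k$ Hilbert-Schmidt; assembling these blocks in the $k$-index then produces $G_0\in S_2(\H_0)$ and $G\in S_2(\H,\H_0)$ with $V=G^*G_0$, and the trace-class assertion follows.

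For each $k$, since the scalar multiplication operator $\chi_k$ commutes with the coin $C$, I would rewrite
$$
W_k=[\chi_k,S]\,C_k+S\,\chi_k\,(C_k-C).
$$
The first term is finite rank: $\chi_k$ is the characteristic function of the main branch $\Tau_k$ (together with $e$ when $k=1$) and $S$ sends each canonical basis vector to a neighbouring one, so $[\chi_k,S]$ is supported on the finitely many vertices adjacent to an edge separating $\Tau_k$ from its complement; hence $[\chi_k,S]C_k$ is finite rank, and being a product of two finite-rank (hence Hilbert-Schmidt) operators it factors through $S_2(\H)$. The second term is $S$ composed with the operator of multiplication by the matrix-valued symbol $x\mapsto\chi_k(x)\big(C_k-C(x)\big)$, which by Assumption \ref{ass_short} (plus a finite-rank contribution at $e$ when $k=1$) obeys a short-range decay bound on $\Tau_k$. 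I would split each matrix $C_k-C(x)$ by a polar-type decomposition to write this multiplication operator as $B_k^*D_k$, where $B_k,D_k$ are multiplication operators whose symbols have pointwise operator norm $\le{\rm Const.}\;\!\chi_{\Tau_k}(x)\langle x\rangle^{-(1+\varepsilon_k)/2}$; Assumption \ref{ass_short} is then used to conclude $B_k,D_k\in S_2(\H)$, and since $S$ is unitary, $S\chi_k(C_k-C)=(B_kS^*)^*D_k$ with $B_kS^*\in S_2(\H)$. Adding the finite-rank term and re-grouping the factors (stacking onto an auxiliary direct sum) gives the desired $W_k=G_k^*G_{0,k}$ with $G_k,G_{0,k}\in S_2$.

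The algebra here---the explicit form of $V$, the expansion of $W_k$, and the identification of the finite-rank support of $[\chi_k,S]$---is routine. The substantive step, and the one place where Assumption \ref{ass_short} genuinely enters, is the verification that $B_k$ and $D_k$ are Hilbert-Schmidt, i.e. that the short-range decay rate of the coin on $\Tau_k$ is strong enough, relative to the geometry of the tree, to control the relevant Schatten norms. I expect this estimate---together with the bookkeeping needed to split the $3\times3$ matrix symbols so that both $G$ and $G_0$ (and not merely their product $G^*G_0$) lie in $S_2$---to be the main obstacle.
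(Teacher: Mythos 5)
Your route is essentially the paper's: the same explicit formula $V\Phi=\sum_{k=1}^3(\chi_kSC_k-SC\chi_k)\varphi_k$, the same split of each block into a commutator term $[\chi_k,S]C_k$ (which the paper shows is supported on $\{e,a_1,a_2,a_3\}$ by computing $S_{ij}^{-1}[\chi_k,S_{ij}]=\delta_{j,k}\delta_e-\delta_{i,k}\delta_{a_k}+\delta_{k,1}\delta_{a_i}$ --- exactly your ``boundary of $\Tau_k$'' observation) plus a short-range term $S(C_k-C)\chi_k$, and the same reduction of the latter to the Hilbert--Schmidt property of the weight $\chi_{\Tau_k}\langle\cdot\rangle^{-(1+\varepsilon_k)/2}$. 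The only structural difference is bookkeeping: instead of splitting the decay symmetrically between two Hilbert--Schmidt factors and stacking auxiliary direct sums, the paper puts the entire weight into the single diagonal operator $G_0=\langle\cdot\rangle^{-(1+\varepsilon_1)/2}\oplus\langle\cdot\rangle^{-(1+\varepsilon_2)/2}\oplus\langle\cdot\rangle^{-(1+\varepsilon_3)/2}$ and absorbs everything else, including the finite-rank commutator term, into one operator $D$ with $G^*=DG_0$; it then only needs $D\in\B(\H_0,\H)$ (which follows from Assumption \ref{ass_short} and the finite support of $[\chi_k,S]$) and $G_0\in S_2(\H_0)$, and $G=G_0D^*\in S_2(\H,\H_0)$ comes for free. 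That version avoids the matrix ``polar-type'' splitting you sketch and keeps $G_0$ an operator on $\H_0$ as the statement requires.

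The step you defer is, however, exactly where the argument is delicate, and your instinct that it hinges on ``the geometry of the tree'' is well founded --- more so than the paper's own treatment suggests. The paper settles it with the computation $\|\langle\cdot\rangle^{-s}\|_{S_2(\H)}^2=3\sum_{y\in\Tau}\langle y\rangle^{-2s}$ and the assertion that this is finite for $s>1/2$. On $\Z$ that is true, but on the homogeneous tree of degree $3$ the sphere $\{y\in\Tau\mid|y|=n\}$ contains $3\cdot2^{n-1}$ points, so $\sum_{y\in\Tau}\langle y\rangle^{-2s}=1+3\sum_{n\ge1}2^{n-1}(1+n^2)^{-s}$ diverges for every $s$: no polynomial weight in the word metric is Hilbert--Schmidt on $\ell^2(\Tau)$, and for the same reason $S(C_k-C)\chi_k$ need not even be trace class under Assumption \ref{ass_short} as stated. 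So the estimate you single out as the main obstacle is genuinely the obstacle, and it cannot be closed by the $\Z$-type short-range hypothesis alone: one would need the decay of $C(x)-C_k$ to be summable against the exponential volume growth of $\Tau_k$ (i.e.\ essentially exponential rather than polynomial decay in $|x|$) for either your factorisation or the paper's to go through. In short, your proposal reproduces the paper's proof, including the one step that does not survive scrutiny; you were right to flag it rather than wave it through.
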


\begin{proof}
A direct calculation gives for $\Phi=(\varphi_1,\dots,\varphi_d)\in\H_0$
$$
V\Phi
=\sum_{k=1}^d\big(\chi_kSC_k-SC\chi_k\big)\varphi_k
=\sum_{k=1}^d\big(S(C_k-C)\chi_k+[\chi_k,S]C_k\big)\varphi_k
=G^*G_0\Phi
$$
where
$$
G_0:=\bigoplus_{k=1}^d\langle\cdot\rangle^{-(1+\varepsilon_k)/2}
$$
and
$$
G^*:=DG_0\quad\hbox{with}\quad
D\Phi:=\sum_{k=1}^d\big(S(C_k-C)\chi_k+[\chi_k,S]C_k\big)
\langle\cdot\rangle^{1+\varepsilon_k}\varphi_k.
$$
Therefore, it is sufficient to show that $G_0\in S_2(\H_0)$ and $D\in\B(\H_0,\H)$ to
prove the claim.

To show that $G_0\in S_2(\H_0)$, it is sufficient to prove that
$\langle\cdot\rangle^{-s}\in S_2(\H)$ for $s>1/2$ since $\tfrac{1+\varepsilon_k}2>1/2$
for each $k=1,\ldots,d$. Let $(e_i)_{i=1}^d$ be the standard basis of $\C^d$. Then the
family $(\delta_x\otimes e_i)_{x\in\Tau,\,i=1,\ldots,d}$ is an orthonormal basis of $\H$,
and a direct calculation gives
$$
\big\|\langle\cdot\rangle^{-s}\big\|_{S_2(\H)}^2
=\sum_{x\in\Tau,\,i=1,\ldots,d}
\big\|\langle\cdot\rangle^{-s}(\delta_x\otimes e_i)\big\|_\H^2
=d\sum_{x\in\Tau}\langle x\rangle^{-2s}
<\infty.
$$

To show that $D\in\B(\H_0,\H)$, it is sufficient to prove the inclusions
$S(C_k-C)\chi_k\langle\cdot\rangle^{1+\varepsilon_k}\in\B(\H)$ and
$[\chi_k,S]C_k\langle\cdot\rangle^{1+\varepsilon_k}\in\B(\H)$ for $k=1,\ldots,d$. The first
inclusion follows directly from Assumption \ref{ass_short}. For the second inclusion,
we note that
$$
[\chi_k,S]C_k\langle\cdot\rangle^{1+\varepsilon_k}
=S
\left(\begin{smallmatrix}
S_{1+1,1+2}^{-1}[\chi_k,S_{1+1,1+2}] &&&\\
& S_{2+1,2+2}^{-1}[\chi_k,S_{2+1,2+2}] && \mbox{\large$0$}\\
\mbox{\large$0$} && \ddots &\\
&&& S_{d+1,d+2}^{-1}[\chi_k,S_{d+1,d+2}]\\
\end{smallmatrix}\right)
C_k\langle\cdot\rangle^{1+\varepsilon_k}
$$
with
\begin{align*}
S_{i,j}^{-1}[\chi_k,S_{i,j}]
&=\chi_{\rm e}\chi_k(\;\!\cdot\;\!a_j)+\chi_{\rm o}\chi_k(\;\!\cdot\;\!a_i)-\chi_k\\
&=\chi_{\rm e}\big(\chi_{\Tau_k\cdot a_j}+\delta_{k,1}\delta_{a_j}-\chi_{\Tau_k}\big)
+\chi_{\rm o}\big(\chi_{\Tau_k\cdot a_i}+\delta_{k,1}\delta_{a_i}-\chi_{\Tau_k}\big).
\end{align*}
But $\chi_{\Tau_k\cdot a_j}=\chi_{\Tau_k}+\delta_e-\delta_{a_k}$ if $j=k$ and
$\chi_{\Tau_k\cdot a_j}=\chi_{\Tau_k}$ if $j\ne k$. Therefore,
\begin{align*}
S_{i,j}^{-1}[\chi_k,S_{i,j}]
&=\chi_{\rm e}\big(\delta_{j,k}(\delta_e-\delta_{a_k})+\delta_{k,1}\delta_{a_j}\big)
+\chi_{\rm o}\big(\delta_{i,k}(\delta_e-\delta_{a_k})+\delta_{k,1}\delta_{a_i}\big)\\
&=\delta_{j,k}\delta_e-\delta_{i,k}\delta_{a_k}+\delta_{k,1}\delta_{a_i}\\
&=S_{i,j}^{-1}[\chi_k,S_{i,j}]\;\!\chi_{\{e,a_1,\dots,a_d\}}
\end{align*}
and thus
$$
[\chi_k,S]C_k\langle\cdot\rangle^{1+\varepsilon_k}
=[\chi_k,S]C_k\langle\cdot\rangle^{1+\varepsilon_k}\chi_{\{e,a_1,\dots,a_d\}}
\in\B(\H),
$$
which concludes the proof.
\end{proof}

Next, we show that the assumption (iv) of Theorem \ref{thm_rho_bis} is satisfied.

\begin{Lemma}\label{lemma_diff}
One has $\overline{VA_0\upharpoonright\dom(A_0)}\in\K(\H_0,\H)$.
\end{Lemma}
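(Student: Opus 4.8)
The plan is to leverage the factorisation of $V$ obtained in the proof of Lemma~\ref{lemma_V_trace}. There one has $V=G^*G_0=DG_0^2$, where $G_0^2=\langle\cdot\rangle^{-(1+\varepsilon_1)}\oplus\langle\cdot\rangle^{-(1+\varepsilon_2)}\oplus\langle\cdot\rangle^{-(1+\varepsilon_3)}$ (with $\langle\cdot\rangle$ the multiplication operator on $\H=\ell^2(\Tau,\C^3)$) and $D\in\B(\H_0,\H)$ is the operator introduced there. Since $D$ is bounded, it suffices to show that $G_0^2A_0\upharpoonright\dom(A_0)$ extends to a compact operator $K\in\K(\H_0)$; then $VA_0\upharpoonright\dom(A_0)=DG_0^2A_0\upharpoonright\dom(A_0)$ extends to $DK\in\K(\H_0,\H)$, which is the claim.

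To identify $K$, I would use that $A_0$ is nothing but a multiplication operator. Indeed, $A_0$ is the direct sum of three copies of $\widetilde A=\mathrm{diag}(A_{23},A_{31},A_{12})$ and, multiplication operators being essentially self-adjoint on $C_{\rm c}(\Tau)$, each $A_{ij}$ is the maximal multiplication operator on $\ell^2(\Tau)$ by the function $g_{ij}:=\chi_{\rm e}|\cdot a_j|_k^2+\chi_{\rm o}|\cdot a_i|_k^2-|\cdot|_k^2$ appearing in \eqref{eq_A_ij}. Hence, for $\Phi\in\dom(A_0)$, the vector $G_0^2A_0\Phi$ is obtained by multiplying, in the $k$-th summand of $\H_0=\H\oplus\H\oplus\H$, by the $\B(\C^3)$-valued function $x\mapsto\langle x\rangle^{-(1+\varepsilon_k)}\;\!\mathrm{diag}\big(g_{23}(x),g_{31}(x),g_{12}(x)\big)$. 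A multiplication operator on $\ell^2(\Tau,\C^3)$ by a $\B(\C^3)$-valued function vanishing at infinity on $\Tau$ is compact, being the operator-norm limit of its compressions to finite subsets of $\Tau$, so it is enough to check that $\langle x\rangle^{-(1+\varepsilon_k)}g_{ij}(x)\to0$ as $x\to\infty$ in $\Tau$; since $\varepsilon_k>0$, this follows as soon as we know that $|g_{ij}(x)|\le{\rm Const.}\;\!\langle x\rangle$ on $\Tau$.

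This linear bound is the only real computation, and it is essentially already contained in step (ii) of the proof of Lemma~\ref{lemma_SC_0_ij}. Because $x_k$ is a prefix of $x$ one has $|x|_k=|x|-|x_k|\le|x|$; and the case analysis there shows that $g_{ij}(x)$ is bounded by an absolute constant when $|x|_k\le1$, while for $|x|_k\ge2$ one has $|xa_j|_k,|xa_i|_k\in\{|x|_k-1,|x|_k+1\}$, so $|g_{ij}(x)|\le2|x|_k+1\le2|x|+1$. In either case $|g_{ij}(x)|\le{\rm Const.}\;\!\langle x\rangle$, as needed.

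I do not expect a genuine obstacle here; the only point requiring a little care is the bookkeeping of domains, namely using that each $A_{ij}$ is the maximal multiplication operator by $g_{ij}$, so that $G_0^2A_0$ acts as multiplication by the bounded (hence compact) matrix-valued function above on all of $\dom(A_0)$, and therefore its closure genuinely equals that compact operator rather than merely extending it.
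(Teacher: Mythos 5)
Your proof is correct, and it takes a somewhat more direct route than the paper's. The paper splits $VA_0\Phi=\sum_{k=1}^3\big(S(C_k-C)\chi_k\widetilde A+F_k\big)\varphi_k$, treats the commutator contributions $F_k=[\chi_k,S]C_k\chi_{\{e,a_1,a_2,a_3\}}\widetilde A$ separately as finite-rank operators, and handles the main term by inserting $\langle\cdot\rangle^{1+\varepsilon_k}\langle\cdot\rangle^{-\varepsilon_k}$ and invoking a Schatten-class estimate $\langle\cdot\rangle^{-\varepsilon_k}\in S_\rho(\H)$ together with the boundedness of $\widetilde A\langle\cdot\rangle^{-1}$. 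You instead reuse the factorisation $V=DG_0^2$ wholesale (so the commutator terms stay absorbed in the bounded operator $D$) and observe that, since $A_0$ is the maximal multiplication operator by a function of at most linear growth, $G_0^2A_0$ is multiplication by a matrix-valued function vanishing at infinity, hence compact by truncation to finite subsets of $\Tau$ — no Schatten-class machinery and no separate finite-rank bookkeeping. Both arguments rest on the same two computations (the decay from Assumption \ref{ass_short} and the linear bound $|g_{ij}(x)|\le{\rm Const.}\;\!\langle x\rangle$ coming from $|xa_j|_k=|x|_k\pm1$, i.e.\ the content of $\widetilde A\langle\cdot\rangle^{-1}\in\B(\H)$), so the gain is mainly economy of exposition; the paper's version has the mild advantage of exhibiting the non-finite-rank part of $\overline{VA_0\upharpoonright\dom(A_0)}$ as an element of an explicit Schatten ideal rather than merely as a compact operator.
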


\begin{proof}
Take $\Phi=(\varphi_1,\dots,\varphi_d)\in\dom(A_0)$. Then it follows from the
proof of Lemma \ref{lemma_V_trace} that
$$
VA_0\Phi=\sum_{k=1}^d\big(S(C_k-C)\chi_k\widetilde A+F_k\big)\varphi_k,
$$
with $F_k:=[\chi_k,S]C_k\chi_{\{e,a_1,\dots,a_d\}}\widetilde A$ a finite rank operator.
In addition, we have
$$
S(C_k-C)\chi_k\widetilde A\;\!\varphi_k
=S(C_k-C)\chi_k\langle\cdot\rangle^{1+\varepsilon_k}
\langle\cdot\rangle^{-\varepsilon_k}
\widetilde A\;\!\langle\cdot\rangle^{-1}\varphi_k
$$
with $S(C_k-C)\chi_k\langle\cdot\rangle^{1+\varepsilon_k}$ bounded,
$\langle\cdot\rangle^{-\varepsilon_k}$ in the Schatten-class $S_\rho(\H)$ for any
$\rho>1/\varepsilon_k$, and $\widetilde A\langle\cdot\rangle^{-1}$ bounded. Indeed,
the operator $S(C_k-C)\chi_k\langle\cdot\rangle^{1+\varepsilon_k}$ is bounded due to
Assumption \ref{ass_short}, the operator $\langle\cdot\rangle^{-\varepsilon_k}$
belongs to $S_\rho(\H)$ because
$$
\big\|\langle\cdot\rangle^{-\varepsilon_k}\big\|_{S_\rho(\H)}^\rho
=\sum_{x\in\Tau,\,i=1,\ldots,d}\big\langle\langle\cdot\rangle^{-\rho\varepsilon_k}
(\delta_x\otimes e_i),(\delta_x\otimes e_i)\big\rangle_\H
=d\sum_{x\in\Tau}\langle x\rangle^{-\rho\varepsilon_k}
<\infty,
$$
and the operator $\widetilde A\langle\cdot\rangle^{-1}$ is bounded because
$$
\widetilde A\;\!\langle\cdot\rangle^{-1}
=\left(\begin{smallmatrix}
A_{1+1,1+2}\langle\cdot\rangle^{-1} &&&\\
& A_{2+1,2+2}\langle\cdot\rangle^{-1} && \mbox{\large$0$}\\
\mbox{\large$0$} && \ddots &\\
&&& A_{d+1,d+2}\langle\cdot\rangle^{-1}\\
\end{smallmatrix}\right)
$$
with
$$
A_{i,j}\langle\cdot\rangle^{-1}
=\big(\chi_{\rm e}\;\!|\cdot a_j|_{i,j}^2+\chi_{\rm o}\;\!|\cdot a_i|_{i,j}^2-|\cdot|_{i,j}^2\big)
\langle\cdot\rangle^{-1}\in\linf(\Tau)
$$
(this last inclusion can be verified as in point (ii) of the proof of Lemma
\ref{lemma_SC_0_ij}). It follows that
$$
VA_0\Phi=\sum_{k=1}^d(K_k+F_k)\varphi_k,
$$
with $K_k\in S_\rho(\H)$ and $F_k$ of finite rank. This implies that
$\overline{VA_0\upharpoonright\dom(A_0)}\in\K(\H_0,\H)$.
\end{proof}

We now define the conjugate operator $A$ for $U$ as in Section
\ref{sec_short}, and observe that in our case it coincides with the operator
$\widetilde A$ of Lemma \ref{lemma_SC_0}:

\begin{Lemma}[Conjugate operator for $U$]\label{lemma_A}
The operator
$$
A\;\!\varphi:=JA_0J^*\varphi,
\quad\varphi\in C_{\rm c}(\Tau,\C^d),
$$
is essentially self-adjoint in $\H$, with closure (also denoted by $A$) equal to
$\widetilde A$.
\end{Lemma}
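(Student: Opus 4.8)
The plan is to compute $J A_0 J^*$ explicitly on $C_{\rm c}(\Tau,\C^3)$ and recognise it as $\widetilde A$, then invoke essential self-adjointness of $\widetilde A$ already obtained in Lemma \ref{lemma_SC_0}. First I would use the description of $J^*$ from Lemma \ref{lemma_J}, namely $J^*\varphi=(\chi_1\varphi,\chi_2\varphi,\chi_3\varphi)$, so that for $\varphi\in C_{\rm c}(\Tau,\C^3)$ one has $A_0 J^*\varphi=(\widetilde A\chi_1\varphi,\widetilde A\chi_2\varphi,\widetilde A\chi_3\varphi)$, which makes sense because each $\chi_k\varphi$ is again in $C_{\rm c}(\Tau,\C^3)\subset\dom(\widetilde A)$; in particular $C_{\rm c}(\Tau,\C^3)\subset\dom(A_0 J^*)$, so the set $\DD:=C_{\rm c}(\Tau,\C^3)$ is admissible in the sense of Assumption \ref{ass_eaa}. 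Applying $J$ and using $JJ^*=1_\H$ together with the partition of unity $\sum_{k=1}^3\chi_k\equiv1$, I would get $JA_0J^*\varphi=\sum_{k=1}^3\chi_k\widetilde A\chi_k\varphi$.

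The key step is then to show that $\sum_{k=1}^3\chi_k\widetilde A\chi_k=\widetilde A$ on $C_{\rm c}(\Tau,\C^3)$, equivalently that $\sum_{k=1}^3\chi_k[\widetilde A,\chi_k]=0$ there. Since $\widetilde A$ is the diagonal operator $\mathrm{diag}(A_{23},A_{31},A_{12})$ and each $A_{ij}$ is a multiplication operator by a bounded-difference function of the modified word norms, the commutator $[\widetilde A,\chi_k]$ is itself a (componentwise) multiplication operator. Indeed, for a multiplication operator $M_g$ and the characteristic function $\chi_k$ one has $[M_g,\chi_k]=0$ because multiplication operators commute — so this is immediate: $\widetilde A$ acts fibrewise as multiplication, hence commutes with every $\chi_k$, and therefore $\sum_k\chi_k\widetilde A\chi_k=\widetilde A\sum_k\chi_k^2$. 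The remaining point is that $\sum_k\chi_k^2=\sum_k\chi_k\equiv 1$ because the $\chi_k$ are characteristic functions of a partition, giving $JA_0J^*\varphi=\widetilde A\varphi$ for all $\varphi\in C_{\rm c}(\Tau,\C^3)$.

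Once the identity $A\upharpoonright C_{\rm c}(\Tau,\C^3)=\widetilde A\upharpoonright C_{\rm c}(\Tau,\C^3)$ is established, essential self-adjointness of $A=JA_0J^*$ on $C_{\rm c}(\Tau,\C^3)$ follows immediately from Lemma \ref{lemma_SC_0}, which asserts precisely that $\widetilde A$ is essentially self-adjoint on that domain; and the closures coincide since the operators agree on a common core. The main (and really the only) subtlety to watch is the domain bookkeeping: one must check that $C_{\rm c}(\Tau,\C^3)\subset\dom(A_0 J^*)$ so that the composition $JA_0J^*$ is genuinely defined on $C_{\rm c}(\Tau,\C^3)$, which as noted above holds because $\chi_k$ preserves $C_{\rm c}(\Tau,\C^3)$ and $C_{\rm c}(\Tau,\C^3)$ is a core for $A_0$. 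I do not expect any genuine obstacle here; the statement is essentially an unwinding of Lemma \ref{lemma_J} and Lemma \ref{lemma_SC_0}.
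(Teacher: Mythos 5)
Your proposal is correct and follows essentially the same route as the paper: identify $JA_0J^*\varphi$ with $\sum_{k=1}^3\chi_k\widetilde A\;\!\chi_k\varphi$, use that $\widetilde A$ acts componentwise as a multiplication operator (hence commutes with each $\chi_k$) together with $\sum_{k=1}^3\chi_k^2\equiv1$ to get $JA_0J^*\varphi=\widetilde A\;\!\varphi$, and then invoke that $C_{\rm c}(\Tau,\C^3)$ is a core for $\widetilde A$ by Lemma \ref{lemma_SC_0}. Your extra check that $C_{\rm c}(\Tau,\C^3)\subset\dom(A_0J^*)$ is a welcome but minor addition that the paper leaves implicit.
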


\begin{proof}
Let $\varphi\in C_{\rm c}(\Tau,\C^d)$. Then the identity $\sum_{k=1}^d\chi_k^2\equiv1$
and the fact that diagonal multiplication operators mutually commute imply that
$$
A\;\!\varphi
=\tsum_{k=1}^d\chi_k\widetilde A\;\!\chi_k\varphi
=\widetilde A\;\!\big(\tsum_{k=1}^d\chi_k^2\big)\varphi
=\widetilde A\;\!\varphi.
$$
Since $C_{\rm c}(\Tau,\C^d)$ is a core for $\widetilde A$, it follows that
$A$ is essentially self-adjoint in $\H$, with closure equal to $\widetilde A$.
\end{proof}

By combining the results that precede, we can now establish a Mourre estimate for $U$.

\begin{Proposition}[Mourre estimate for $U$]\label{prop_Mourre}
One has $U\in C^1(A)$ and $\widetilde\varrho_U^A(\theta)\ge2$ for all $\theta\in\S^1$.
\end{Proposition}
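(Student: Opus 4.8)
The plan is to derive the Mourre estimate for $U$ directly from the homogeneous Mourre estimate $U_0^{-1}[A_0,U_0]=2\cdot1_{\H_0}$ for $U_0$ (Proposition \ref{prop_spec_U_0}(b)) by applying the two–Hilbert–space machinery of Theorem \ref{thm_rho_bis}, which is tailored precisely to the case $A=JA_0J^*$. So the first step is to verify the five hypotheses (i)--(v) of that theorem in our concrete setting. Hypothesis (i) is exactly Lemma \ref{lemma_A}, which says that $JA_0J^*\upharpoonright C_{\rm c}(\Tau,\C^3)$ is essentially self-adjoint with closure $\widetilde A$; note in particular that $C_{\rm c}(\Tau,\C^3)\subset\dom(A_0J^*)$ since $J^*\varphi=(\chi_1\varphi,\chi_2\varphi,\chi_3\varphi)$ has compact support componentwise and $C_{\rm c}(\Tau,\C^3)^{\oplus3}$ is a core for $A_0$. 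Hypothesis (ii), $U_0\in C^1(A_0)$, follows from Proposition \ref{prop_spec_U_0}(b), which even gives $U_0\in C^\infty(A_0)$. Hypothesis (iii), $V\in\K(\H_0,\H)$, follows from Lemma \ref{lemma_V_trace}, where $V$ is shown to be trace class hence compact. Hypothesis (iv), $\overline{VA_0\upharpoonright\dom(A_0)}\in\K(\H_0,\H)$, is exactly Lemma \ref{lemma_diff}. Hypothesis (v), $\eta(U)(JJ^*-1_\H)\eta(U)\in\K(\H)$ for each $\eta\in C(\C,\R)$, is immediate because $JJ^*=1_\H$ by Lemma \ref{lemma_J}, so $JJ^*-1_\H=0$.

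With all hypotheses in place, Theorem \ref{thm_rho_bis} yields at once that $U\in C^1(A)$ and the inequality of functions $\widetilde\varrho_U^A\ge\widetilde\varrho_{U_0}^{A_0}$ on $\S^1$. The second step is then to compute $\widetilde\varrho_{U_0}^{A_0}$. From $U_0^{-1}[A_0,U_0]=2\cdot1_{\H_0}$ we get, for every $\theta\in\S^1$ and every $\varepsilon>0$, the exact identity $E^{U_0}(\theta;\varepsilon)\,U_0^{-1}[A_0,U_0]\,E^{U_0}(\theta;\varepsilon)=2\,E^{U_0}(\theta;\varepsilon)$, which is a (strict, indeed $K=0$) Mourre estimate with constant $2$. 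Hence $\varrho_{U_0}^{A_0}(\theta)\ge2$, and a fortiori $\widetilde\varrho_{U_0}^{A_0}(\theta)\ge2$ for all $\theta\in\S^1$ by Lemma \ref{lemma_properties}(c). Combining, $\widetilde\varrho_U^A(\theta)\ge\widetilde\varrho_{U_0}^{A_0}(\theta)\ge2$ for all $\theta\in\S^1$, which is the assertion.

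I do not expect any serious obstacle here: the proposition is essentially a bookkeeping step assembling Lemmas \ref{lemma_V_trace}, \ref{lemma_diff}, \ref{lemma_A} and Proposition \ref{prop_spec_U_0} through Theorem \ref{thm_rho_bis}. The only point requiring a moment of care is checking the domain compatibility underlying hypothesis (i) and (iv), namely that $J^*$ maps the chosen core $C_{\rm c}(\Tau,\C^3)$ into $\dom(A_0)$ and that $C_{\rm c}(\Tau,\C^3)^{\oplus3}$ is a core for $A_0$; both are clear from the explicit form of $J^*$ in Lemma \ref{lemma_J} and the definition of $A_0$ on compactly supported functions in Proposition \ref{prop_spec_U_0}. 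One might also remark, though it is not needed for the statement, that since $U_0$ has no eigenvalues (Proposition \ref{prop_spec_U_0}(c)) one in fact has $\varrho_{U_0}^{A_0}=\widetilde\varrho_{U_0}^{A_0}\ge2$ by Lemma \ref{lemma_properties}(d), so the transported estimate is as strong as one could hope at the level of $\widetilde\varrho$.
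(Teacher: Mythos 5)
Your proposal is correct and follows exactly the route of the paper's own proof: verify the hypotheses of Theorem \ref{thm_rho_bis} via Lemmas \ref{lemma_J}, \ref{lemma_V_trace}, \ref{lemma_diff}, \ref{lemma_A} and Proposition \ref{prop_spec_U_0}(b), then combine $\widetilde\varrho_U^A\ge\widetilde\varrho_{U_0}^{A_0}$ with $\widetilde\varrho_{U_0}^{A_0}\ge\varrho_{U_0}^{A_0}=2$ from the homogeneous estimate and Lemma \ref{lemma_properties}(c). Your write-up is simply a more explicit version of the same argument, with the domain-compatibility check for hypothesis (i) spelled out.
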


\begin{proof}
Theorem \ref{thm_rho_bis} applies since its assumptions are verified in Proposition
\ref{prop_spec_U_0}(b) and Lemmas \ref{lemma_J}, \ref{lemma_V_trace},
\ref{lemma_diff}, and \ref{lemma_A}. Furthermore, Lemma \ref{lemma_properties}(c) and
Proposition \ref{prop_spec_U_0}(b) imply that
$\widetilde\varrho_{U_0}^{A_0}(\theta)\ge\varrho_{U_0}^{A_0}(\theta)=2$ for all
$\theta\in\S^1$. Thus, $U\in C^1(A)$ and
$\widetilde\varrho_U^A(\theta)\ge\widetilde\varrho_{U_0}^{A_0}(\theta)\ge2$ for all
$\theta\in\S^1$.
\end{proof}

To infer results for $U$ from the Mourre estimate of Proposition \ref{prop_Mourre},
one needs to verify a slightly stronger regularity condition than $U\in C^1(A):$

\begin{Lemma}\label{lemma_C_1_epsilon}
One has $U\in C^{1+\varepsilon}(A)$ for each $\varepsilon\in(0,1)$ with
$\varepsilon\le\min\{\varepsilon_1,\dots,\varepsilon_d\}$.
\end{Lemma}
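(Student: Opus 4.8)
Since $U\in C^1(A)$ is already known from Proposition \ref{prop_Mourre}, the plan is to upgrade this to $C^{1+\varepsilon}(A)$ by splitting off the non-decaying part of the coin. Write $C=C_0+W$, where $C_0$ is the operator of multiplication by the diagonal matrix $C_0(x):=C_i$ for $x\in\Tau_i$, with the (irrelevant) value $C_0(e):=C_1$, and $W:=C-C_0$. Since $C_0(x)\in\U(3)$ is diagonal for every $x$, the operator $SC_0$ is an instance of the operator $\widetilde U_0=SC_0$ of Lemma \ref{lemma_SC_0}, so $SC_0\in C^\infty(\widetilde A)=C^\infty(A)$ (recall $A=\widetilde A$ by Lemma \ref{lemma_A}). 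As $U=SC_0+SW$ with $S\in C^\infty(A)$ (Lemma \ref{lemma_SC_0} with coin $1_\H$), and since $C^{1+\varepsilon}(A)$ is stable under sums and under multiplication by elements of $C^\infty(A)$ (a direct consequence of the product rule for commutators; see also \cite[Sec.~5.2]{ABG96}), it remains to prove that $W\in C^{1+\varepsilon}(A)$.

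For this I would use that $A=\widetilde A$ is the operator of multiplication by the diagonal matrix $G(x):=\mathrm{diag}\big(g_1(x),g_2(x),g_3(x)\big)$, where $g_1,g_2,g_3$ are the multiplier functions of $A_{23},A_{31},A_{12}$ (so $g_1=\chi_{\rm e}\;\!|\cdot\;\!a_3|_1^2+\chi_{\rm o}\;\!|\cdot\;\!a_2|_1^2-|\cdot|_1^2$, and similarly for $g_2,g_3$), which satisfy $|g_m(x)|\le{\rm Const.}\;\!\langle x\rangle$ as noted in the proof of Lemma \ref{lemma_diff}. Since $W$ is the multiplication operator by $W(x)\in\B(\C^3)$ and Assumption \ref{ass_short}, together with $\varepsilon\le\min\{\varepsilon_1,\varepsilon_2,\varepsilon_3\}$, gives $\|W(x)\|_{\B(\C^3)}\le{\rm Const.}\;\!\langle x\rangle^{-(1+\varepsilon)}$ for all $x\in\Tau$, a computation on the core $C_{\rm c}(\Tau,\C^3)$ identifies $[A,W]$ with the multiplication operator whose $(j,l)$ matrix entry is $\big(g_j(x)-g_l(x)\big)W_{jl}(x)$; this operator is bounded because $|g_j-g_l|\;\!|W_{jl}|\le{\rm Const.}\;\!\langle\cdot\rangle^{-\varepsilon}$, which in particular re-proves $W\in C^1(A)$.

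Finally, since $\e^{itA}$ is multiplication by $\mathrm{diag}\big(\e^{itg_1(x)},\e^{itg_2(x)},\e^{itg_3(x)}\big)$, the operator $\e^{-itA}[A,W]\e^{itA}-[A,W]$ is multiplication by the matrix with $(j,l)$ entry $\big(\e^{it(g_l(x)-g_j(x))}-1\big)\big(g_j(x)-g_l(x)\big)W_{jl}(x)$. Using the elementary inequality $\big|\e^{i\theta}-1\big|\le2|\theta|^\varepsilon$, valid for all $\theta\in\R$ and $\varepsilon\in(0,1)$, each such entry is bounded in modulus by
$$
2\;\!t^\varepsilon\;\!\big|g_j(x)-g_l(x)\big|^{1+\varepsilon}\big|W_{jl}(x)\big|
\le{\rm Const.}\;\!t^\varepsilon\;\!\langle x\rangle^{1+\varepsilon}\langle x\rangle^{-(1+\varepsilon)}
={\rm Const.}\;\!t^\varepsilon,
$$
with a constant independent of $x$ and $t$; hence $\big\|\e^{-itA}[A,W]\e^{itA}-[A,W]\big\|_{\B(\H)}\le{\rm Const.}\;\!t^\varepsilon$ for $t\in(0,1)$, so that $W\in C^{1+\varepsilon}(A)$ and therefore $U\in C^{1+\varepsilon}(A)$. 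The step requiring the most care is this final Hölder bound: it works exactly because the single power of $\langle\cdot\rangle$ coming from the (at most) linear growth of $A$ is compensated by the $(1+\varepsilon)$-fold decay of $C(x)-C_i$ furnished by Assumption \ref{ass_short}. The remaining points — the identification of $[A,W]$ on the core via a quadratic-form/density argument, and the product stability of $C^{1+\varepsilon}(A)$ — are routine.
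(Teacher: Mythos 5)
Your proof is correct, but it takes a genuinely different route from the paper's. The paper does not split the coin: it works with $[A,U]$ directly, writing $[A,U]=2U+\sum_{k}SD_k$ with $D_k=[\widetilde A\;\!\langle\cdot\rangle^{-1},\langle\cdot\rangle(C-C_k)\chi_k]$, and then invokes the abstract machinery of \cite[pp.~325--326]{ABG96}: the bound $\|\e^{-itA}D_k\e^{itA}-D_k\|_{\B(\H)}\le{\rm Const.}(\|\sin(tA)D_k\|_{\B(\H)}+\|\sin(tA)(D_k)^*\|_{\B(\H)})$, followed by a reduction via $\Lambda_t:=t\langle\cdot\rangle(t\langle\cdot\rangle+i)^{-1}$ to the weighted estimates $\langle\cdot\rangle^\varepsilon D_k,\langle\cdot\rangle^\varepsilon(D_k)^*\in\B(\H)$. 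You instead peel off $SC_0$ with $C_0=\sum_k\chi_kC_k$ (which is exactly the operator $\widetilde U_0$ of Theorem \ref{thm_comp_3}, hence in $C^\infty(A)$ by Lemma \ref{lemma_SC_0}), reduce to $W=C-C_0\in C^{1+\varepsilon}(A)$, and exploit the fact that both $A$ and $W$ are multiplication operators to compute $\e^{-itA}[A,W]\e^{itA}-[A,W]$ entrywise and conclude with the elementary inequality $|\e^{i\theta}-1|\le2|\theta|^\varepsilon$. Your version is more elementary and self-contained, and it makes visible exactly where the exponent comes from (one power of $\langle x\rangle$ from the linear growth of the multiplier of $A$ against the $\langle x\rangle^{-(1+\varepsilon)}$ decay of Assumption \ref{ass_short} --- the same balance the paper uses, in disguise); the paper's version is less computation-specific and carries over to situations where $A$ is not a multiplication operator, since it only needs $\langle\cdot\rangle^\varepsilon D_k$ bounded. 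Two small points you wave at but which do hold: the algebra property of $C^{1+\varepsilon}(A)$ (sums, and products with $C^\infty(A)$ factors) is indeed in \cite[Sec.~5.2]{ABG96}; and the identification of $[A,W]$ raises no core-versus-domain subtlety because $\|G(x)\|_{\B(\C^3)}\|W(x)\|_{\B(\C^3)}\le{\rm Const.}$ makes both $GW$ and $WG$ bounded, so the commutator form is continuous on all of $\dom(A)$, not merely on $C_{\rm c}(\Tau,\C^3)$.
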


\begin{proof}
We know from Proposition \ref{prop_Mourre} that $U\in C^1(A)$. To go further and prove
that $U\in C^{1+\varepsilon}(A)$, we need to show that
$$
\big\|\e^{-itA}[A,U]\e^{itA}-[A,U]\big\|_{\B(\H)}
\le{\rm Const.}\;\!t^\varepsilon\quad\hbox{for all $t\in(0,1)$.}
$$
Using the relations $S^{-1}[\widetilde A,S]=2\cdot1_\H$ and
$\sum_{k=1}^d\chi_k\equiv1$, we get for $\varphi\in C_{\rm c}(\Tau,\C^d)$
$$
[A,U]\varphi
=[\widetilde A,SC]\varphi
=\big([\widetilde A,S]C+S[\widetilde A,C]\big)\varphi
=\big(2U+\tsum_{k=1}^dS[\widetilde A,C\chi_k]\big)\varphi.
$$
For the second term, we get
$$
S[\widetilde A,C\chi_k]\varphi
=S[\widetilde A,(C-C_k)\chi_k]\varphi
=S[\widetilde A\;\!\langle\cdot\rangle^{-1},\langle\cdot\rangle(C-C_k)\chi_k]\varphi
$$
where $S\in C^1(A)$ with $[A,S]=S\big(S^{-1}[\widetilde A,S]\big)=2S$ and
\begin{equation}\label{def_D_k}
D_k:=[\widetilde A\;\!\langle\cdot\rangle^{-1},
\langle\cdot\rangle(C-C_k)\chi_k]\in\B(\H)
\end{equation}
due to Assumption \ref{ass_short} and the fact that
$\widetilde A\langle\cdot\rangle^{-1}\in\B(\H)$ (see the proof of Lemma
\ref{lemma_diff}). Since $C_{\rm c}(\Tau,\C^d)$ is a core for $A$, what precedes
implies that all the operators in the expression for $[A,U]$ belong to $C^1(A)$,
except the operators $D_k$ which we only know to be bounded. Therefore, we have to
show that
$$
\big\|\e^{-itA}D_k\e^{itA}-D_k\big\|_{\B(\H)}
\le{\rm Const.}\;\!t^\varepsilon\quad\hbox{for all $t\in(0,1)$.}
$$
Now, algebraic manipulations as presented in \cite[p.~325-326]{ABG96} show that for
all $t\in(0,1)$
\begin{align*}
\big\|\e^{-itA}D_k\e^{itA}-D_k\big\|_{\B(\H)}
&\le{\rm Const.}\;\!\big(\|\sin(tA)D_k\|_{\B(\H)}
+\|\sin(tA)(D_k)^*\|_{\B(\H)}\big)\\
&\le{\rm Const.}\;\!\big(\|tA\;\!(tA+i)^{-1}D_k\|_{\B(\H)}
+\|tA\;\!(tA+i)^{-1}(D_k)^*\|_{\B(\H)}\big).
\end{align*}
Furthermore, if we set $A_t:=tA\;\!(tA+i)^{-1}$ and
$\Lambda_t:=t\langle\cdot\rangle(t\langle\cdot\rangle+i)^{-1}$, we obtain that
$$
A_t=\big(A_t+i(tA +i)^{-1}A\;\!\langle\cdot\rangle^{-1}\big)\Lambda_t
$$
with $A\langle\cdot\rangle^{-1}\in\B(\H)$. Thus, since
$\|A_t+i(tA +i)^{-1}A\;\!\langle\cdot\rangle^{-1}\|_{\B(\H)}$ is bounded by a constant
independent of $t\in(0,1)$, it is sufficient to prove that
$$
\|\Lambda_t D_k\|_{\B(\H)}+\|\Lambda_t(D_k)^*\|_{\B(\H)}
\le{\rm Const.}\;\!t^\varepsilon\quad\hbox{for all $t\in(0,1)$.}
$$
But this estimate will hold if we show that the operators
$\langle\cdot\rangle^\varepsilon D_k$ and $\langle\cdot\rangle^\varepsilon(D_k)^*$
defined on $C_{\rm c}(\Tau,\C^d)$ extend continuously to elements of $\B(\H)$. For
this, we fix $\varepsilon\in(0,1)$ with
$\varepsilon\le\min\{\varepsilon_1,\dots,\varepsilon_d\}$, and note that
$\langle\cdot\rangle^{1+\varepsilon}(C-C_k)\chi_k\in\B(\H)$ due to Assumption
\ref{ass_short}. With this inclusion and the fact that
$\widetilde A\;\!\langle\cdot\rangle^{-1}\in\B(\H)$, one infers from \eqref{def_D_k}
that $\langle\cdot\rangle^\varepsilon D_k$ and
$\langle\cdot\rangle^\varepsilon(D_k)^*$ defined on $C_{\rm c}(\Tau,\C^d)$ extend
continuously to elements of $\B(\H)$, as desired.
\end{proof}

We are now in a position to obtain a class of locally $U$-smooth operators and
determine spectral properties of $U$.

\begin{Theorem}[Locally $U$-smooth operators]\label{thm_smooth_walk}
Let $\G$ be an auxiliary Hilbert space. Then each operator $T\in\B(\H,\G)$ which
extends continuously to an element of $\B\big(\dom(\langle\cdot\rangle^{-s}),\G\big)$
for some $s>1/2$ is locally $U$-smooth on any closed set
$\Theta'\subset\S^1\setminus\sigma_{\rm p}(U)$.
\end{Theorem}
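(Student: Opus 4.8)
The plan is to apply Theorem~\ref{thm_locally_smooth} with the conjugate operator $A=\widetilde A$ of Lemma~\ref{lemma_A} and with the open set $\Theta=\S^1$. For this, three ingredients are needed: the regularity of $U$ with respect to $A$, a Mourre estimate for $U$ on the whole circle with a compact remainder, and the fact that the condition imposed on $T$ in the statement coincides with the one required in Theorem~\ref{thm_locally_smooth} relative to $A$.

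The regularity is immediate from Lemma~\ref{lemma_C_1_epsilon}, which gives $U\in C^{1+\varepsilon}(A)\subset C^{1+0}(A)$ for any sufficiently small $\varepsilon\in(0,1)$; thus the $C^{1+0}(A)$ alternative in the hypotheses of Theorem~\ref{thm_locally_smooth} holds, and no spectral gap is needed. For the Mourre estimate I would use, in place of the pointwise bound $\widetilde\varrho_U^A\ge2$ of Proposition~\ref{prop_Mourre}, its operator form on all of $\S^1$. Indeed, the assumptions already checked in the proof of Proposition~\ref{prop_Mourre} — $U_0\in C^1(A_0)$ (Proposition~\ref{prop_spec_U_0}(b)), Assumption~\ref{ass_eaa} (Lemma~\ref{lemma_A}), $V\in\K(\H_0,\H)$ (Lemma~\ref{lemma_V_trace}), and $\overline{VA_0\upharpoonright\dom(A_0)}\in\K(\H_0,\H)$ (Lemma~\ref{lemma_diff}) — allow one, through Remark~\ref{rem_enough}(b) and Corollary~\ref{Corol_est_supp}, to infer that $JU_0^{-1}[A_0,U_0]J^*-U^{-1}[A,U]\in\K(\H)$. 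Since $U_0^{-1}[A_0,U_0]=2\cdot1_{\H_0}$ by Proposition~\ref{prop_spec_U_0}(b) and $JJ^*=1_\H$ by Lemma~\ref{lemma_J}, this reads $U^{-1}[A,U]=2\cdot1_\H+K$ for some $K\in\K(\H)$, so in particular $E^U(\S^1)\;\!U^{-1}[A,U]\;\!E^U(\S^1)\ge2\;\!E^U(\S^1)+K$; this is precisely the Mourre hypothesis of Theorem~\ref{thm_locally_smooth} with $\Theta=\S^1$ and $a=2$.

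Finally, one matches the condition on $T$. It suffices to treat $s\in(1/2,1]$, since the spaces $\dom(\langle\cdot\rangle^{-s})$ grow with $s$, so that extending continuously to $\B(\dom(\langle\cdot\rangle^{-s}),\G)$ for some $s>1/2$ is equivalent to doing so for some $s\in(1/2,1]$. The proof of Lemma~\ref{lemma_diff} shows $\widetilde A\langle\cdot\rangle^{-1}\in\B(\H)$, i.e. the continuous inclusion $\dom(\langle\cdot\rangle)\subseteq\dom(A)$; by interpolation this gives $\dom(\langle\cdot\rangle^s)\subseteq\dom(\langle A\rangle^s)$ continuously for $s\in[0,1]$, hence the continuous inclusion $\dom(\langle A\rangle^s)^*\subseteq\dom(\langle\cdot\rangle^{-s})$ between the anti-dual spaces. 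Therefore any $T\in\B(\H,\G)$ extending continuously to $\B(\dom(\langle\cdot\rangle^{-s}),\G)$ also extends continuously to $\B(\dom(\langle A\rangle^s)^*,\G)$, and Theorem~\ref{thm_locally_smooth} applies and yields that $T$ is locally $U$-smooth on every closed set $\Theta'\subset\S^1\setminus\sigma_{\rm p}(U)$, as claimed. The argument is thus essentially an assembling of facts already proved; the only points that require a little care are the passage from the pointwise Mourre estimate to the global operator estimate on $\S^1$, and the comparison of the weighted scales generated by $\langle\cdot\rangle$ and by $A$.
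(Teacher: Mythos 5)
Your proposal is correct, and it reaches the conclusion by a slightly different route than the paper at one key step. The paper's own proof invokes the pointwise estimate $\widetilde\varrho_U^A(\theta)\ge2$ of Proposition \ref{prop_Mourre} to produce, for each $\theta\in\S^1$, an open neighbourhood $\Theta_\theta$ on which the hypotheses of Theorem \ref{thm_locally_smooth} hold, and then covers an arbitrary closed set $\Theta'\subset\S^1\setminus\sigma_{\rm p}(U)$ by finitely many closed subsets of such neighbourhoods. You instead bypass the local-to-global covering argument entirely by extracting the global operator identity $U^{-1}[A,U]=2\cdot1_\H+K$, $K\in\K(\H)$, from Corollary \ref{Corol_est_supp} (via Remark \ref{rem_enough}(b)), $U_0^{-1}[A_0,U_0]=2\cdot1_{\H_0}$ and $JJ^*=1_\H$, so that Theorem \ref{thm_locally_smooth} applies at once with $\Theta=\S^1$ and $a=2$. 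This is legitimate here precisely because the free Mourre estimate is homogeneous over the whole circle, and it is arguably cleaner; the covering argument of the paper is the one that generalises when the estimate is only local. Your treatment of the condition on $T$ is also slightly more careful than the paper's: the reduction to $s\in(1/2,1]$ followed by interpolation from $\dom(\langle\cdot\rangle)\subset\dom(A)$ justifies the inclusion $\dom(\langle\cdot\rangle^s)\subset\dom(\langle A\rangle^s)$, which the paper asserts for all $s>1/2$ without comment (for $s>1$ that assertion needs exactly the reduction you make). The duality step $\dom(\langle A\rangle^s)^*\subset\dom(\langle\cdot\rangle^{-s})$ implicitly uses that $C_{\rm c}(\Tau,\C^3)$ is dense in both domains, which holds here; with that observation your argument is complete.
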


\begin{proof}
We know from Proposition \ref{prop_Mourre} and Lemma \ref{lemma_C_1_epsilon} that for
each $\theta\in\S^1$ there exists an open set $\Theta_\theta\ni\theta$ for which the
assumptions of Theorem \ref{thm_locally_smooth} are satisfied. So, each operator
$T\in\B(\H,\G)$ which extends continuously to an element of
$\B\big(\dom(\langle A\rangle^s)^*,\G\big)$ for some $s>1/2$ is locally $U$-smooth on
any closed set $\Theta_\theta'\subset\Theta_\theta\setminus\sigma_{\rm p}(U)$. Since
any closed set $\Theta'\subset\S^1\setminus\sigma_{\rm p}(U)$ is contained in a finite
union of closed sets of type $\Theta_\theta'$, we infer that $T$ is locally $U$-smooth
on $\Theta'$ too.

Now, we know from the proof of of Lemma \ref{lemma_diff} that
$\dom(\langle\cdot\rangle)\subset\dom(A)$. Therefore, we have
$\dom(\langle\cdot\rangle^s)\subset\dom(\langle A\rangle^s)$ for each $s>1/2$, and it
follows by duality that
$
\dom(\langle A\rangle^s)^*
\subset\dom(\langle\cdot\rangle^s)^*
\equiv\dom(\langle\cdot\rangle^{-s})
$
for each $s>1/2$. In consequence, any operator $T\in\B(\H,\G)$ which extends
continuously to an element of $\B\big(\dom(\langle\cdot\rangle^{-s}),\G\big)$ for some
$s>1/2$ also extends continuously to an element of
$\B\big(\dom(\langle A\rangle^s)^*,\G\big)$. This concludes the proof.
\end{proof}

\begin{Theorem}[Spectral properties of $U$]\label{thm_spec_U}
The operator $U$ has at most finitely many eigenvalues, each one of finite
multiplicity, and no singular continuous spectrum.
\end{Theorem}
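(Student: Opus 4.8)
The plan is to combine the Mourre estimate of Proposition \ref{prop_Mourre} with the abstract spectral result of Theorem \ref{thm_spec_prop} and a compactness argument on $\S^1$. First I would observe that $U\in C^{1+0}(A)$: indeed, Lemma \ref{lemma_C_1_epsilon} gives $U\in C^{1+\varepsilon}(A)$ for some $\varepsilon\in(0,1)$, and $C^{1+\varepsilon}(A)\subset C^{1+0}(A)$ by the inclusions recalled in Section \ref{sec_one}. Thus the regularity hypothesis of Theorem \ref{thm_spec_prop} is satisfied, with no spectral gap assumption needed.

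Next I would localise the Mourre estimate. Fix $\theta\in\S^1$. Since $\widetilde\varrho_U^A(\theta)\ge2>1$ by Proposition \ref{prop_Mourre}, the definition of $\widetilde\varrho_U^A$ provides $\varepsilon_\theta>0$ and $K_\theta\in\K(\H)$ such that, writing $\Theta_\theta:=\Theta(\theta;\varepsilon_\theta)$, one has
$$
E^U(\Theta_\theta)\;\!U^{-1}[A,U]\;\!E^U(\Theta_\theta)\ge E^U(\Theta_\theta)+K_\theta.
$$
The set $\Theta_\theta$ is an open arc of $\S^1$ containing $\theta$, so Theorem \ref{thm_spec_prop} applies with this $\Theta_\theta$ and $a=1$, yielding that $U$ has at most finitely many eigenvalues in $\Theta_\theta$, each of finite multiplicity, and no singular continuous spectrum in $\Theta_\theta$.

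Finally I would invoke compactness of $\S^1$: the open cover $\{\Theta_\theta\}_{\theta\in\S^1}$ admits a finite subcover $\Theta_{\theta_1},\dots,\Theta_{\theta_N}$. Since each $\Theta_{\theta_n}$ contains only finitely many eigenvalues of $U$, all of finite multiplicity, and carries no singular continuous spectrum, the same holds on $\S^1=\bigcup_{n=1}^N\Theta_{\theta_n}\supset\sigma(U)$; hence $U$ has at most finitely many eigenvalues, each of finite multiplicity, and $\sigma_{\rm sc}(U)=\emptyset$.

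I do not anticipate a genuine obstacle here: the two substantial inputs, the Mourre estimate and the $C^{1+\varepsilon}(A)$ regularity, are already in hand, and what remains is the standard reduction of a global spectral statement to a pointwise Mourre estimate via the definition of $\widetilde\varrho_U^A$ (equivalently its formulation in terms of $\eta\in C^\infty(\S^1,\R)$) and compactness of $\S^1$. The only point requiring minor care is unfolding $\widetilde\varrho_U^A(\theta)\ge2$ into the precise hypothesis of Theorem \ref{thm_spec_prop} on an open set, which is immediate from the supremum defining $\widetilde\varrho_U^A$.
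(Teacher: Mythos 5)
Your proposal is correct and follows essentially the same route as the paper: a pointwise Mourre estimate from Proposition \ref{prop_Mourre} unfolded into the hypothesis of Theorem \ref{thm_spec_prop} on an open arc around each $\theta$, the regularity supplied by Lemma \ref{lemma_C_1_epsilon}, and a finite-subcover argument on $\S^1$. The paper's proof is just a terser version of the same argument.
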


\begin{proof}
Let $\theta\in\S^1$. Then we know from Proposition \ref{prop_Mourre} and Lemma
\ref{lemma_C_1_epsilon} that there exists an open set $\Theta_\theta\ni\theta$ for
which the assumptions of Theorem \ref{thm_spec_prop} are satisfied. Thus $U$ has at
most finitely many eigenvalues in $\Theta_\theta$, each one of finite multiplicity,
and $U$ has no singular continuous spectrum in $\Theta_\theta$. Since $\S^1$ can be
covered by a finite number of open sets of type $\Theta_\theta$, it follows that $U$
has at most finitely many eigenvalues, each one of finite multiplicity, and no
singular continuous spectrum.
\end{proof}

%--------------------------------------------------------------------------------------
\subsection{Wave operators}\label{sec_wave}
%--------------------------------------------------------------------------------------

In this final section, we use the results obtained so far to establish the existence
and completeness of the wave operators for the triple $(U,U_0,J)$. We also explain why
at least two operators different from $U_0$ can be used as a free evolution operator
for $U$, and we establish the existence and completeness in these cases too.

\begin{Theorem}[Completeness, version 1]\label{thm_comp_1}
The wave operators $W_\pm(U,U_0,J):\H_0\to\H$ given by
$$
W_\pm(U,U_0,J):=\slim_{n\to\pm\infty}U^{-n}JU_0^n
$$
exist and are complete, that is, $\Ran\big(W_\pm(U,U_0,J)\big)=\H_{\rm ac}(U)$.
\end{Theorem}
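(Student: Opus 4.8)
The plan is to verify that all hypotheses of Corollary \ref{corol_wave} are met for the triple $(U,U_0,J)$ on an open set $\Theta$ that differs from $\S^1$ only by a finite point set, and then to promote the resulting local $J$-complete wave operators to global complete wave operators. For the regularity hypotheses: by Proposition \ref{prop_spec_U_0}(b) we have $U_0\in C^\infty(A_0)$, hence $U_0\in C^{1+0}(A_0)$; by Lemma \ref{lemma_C_1_epsilon} we have $U\in C^{1+\varepsilon}(A)\subset C^{1+0}(A)$ for small $\varepsilon>0$. So the ``$C^{1+0}$ branch'' of Corollary \ref{corol_wave} applies without needing spectral gaps. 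For the factorisation of the perturbation: Lemma \ref{lemma_V_trace} gives $V=G^*G_0$ with $G_0\in S_2(\H_0)$ and $G\in S_2(\H,\H_0)$; in particular, with $\G:=\H_0$, the operator $G_0=\langle\cdot\rangle^{-(1+\varepsilon_1)/2}\oplus\langle\cdot\rangle^{-(1+\varepsilon_2)/2}\oplus\langle\cdot\rangle^{-(1+\varepsilon_3)/2}$ visibly extends continuously to an element of $\B\big(\dom(\langle A_0\rangle^s)^*,\H_0\big)$ for suitable $s>1/2$ (using, as in the proof of Theorem \ref{thm_smooth_walk}, that $\dom(\langle\cdot\rangle^s)\subset\dom(\langle A_0\rangle^s)$ componentwise since $A_0$ is a direct sum of $\widetilde A$'s and $\widetilde A\langle\cdot\rangle^{-1}\in\B(\H)$), and similarly $G=G_0 D^*$ (from $G^*=DG_0$) extends continuously to an element of $\B\big(\dom(\langle A\rangle^s)^*,\H_0\big)$ because $D\in\B(\H_0,\H)$ and the same Schatten/commutator bound applies to $A=\widetilde A$.

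The remaining input is the identification of $\Theta$. By Proposition \ref{prop_spec_U_0}(b)--(c) together with Lemma \ref{lemma_properties}(c), $\widetilde\varrho^{A_0}_{U_0}(\theta)\ge2>0$ for every $\theta\in\S^1$, so $\widetilde\mu^{A_0}(U_0)=\S^1$; by Proposition \ref{prop_Mourre}, $\widetilde\varrho^A_U(\theta)\ge2>0$ for every $\theta$, so $\widetilde\mu^A(U)=\S^1$ as well. Moreover $\sigma_{\rm p}(U_0)=\varnothing$ by Proposition \ref{prop_spec_U_0}(c), and $\sigma_{\rm p}(U)$ is a finite set by Theorem \ref{thm_spec_U}. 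Hence the set
$$
\Theta=\big\{\widetilde\mu^{A_0}(U_0)\cap\widetilde\mu^A(U)\big\}\setminus\big\{\sigma_{\rm p}(U_0)\cup\sigma_{\rm p}(U)\big\}=\S^1\setminus\sigma_{\rm p}(U)
$$
is open, cofinite in $\S^1$, and Corollary \ref{corol_wave} yields that $W_\pm(U,U_0,J,\Theta)$ exist, are $J$-complete on $\Theta$, and intertwine $U_0$ and $U$.

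Finally I would pass from the local statement to the global one. Since $\sigma(U_0)=\sigma_{\rm ac}(U_0)=\S^1$ and $E^{U_0}(\sigma_{\rm p}(U))$ is a finite-rank-in-``measure-zero'' projection on $\H_0$ which annihilates $\H_{\rm ac}(U_0)=\H_0$ (a finite set has $\widetilde E^{U_0}$-measure zero because $U_0$ is purely a.c.), we have $E^{U_0}(\Theta)=1_{\H_0}$, so $W_\pm(U,U_0,J,\Theta)=\slim_n U^{-n}JU_0^n=W_\pm(U,U_0,J)$; thus the global wave operators exist and intertwine. By Lemma \ref{lemma_J}, $JJ^*=1_\H$, so $J$ is a coisometry and $W_\pm(U,U_0,J)$ are partial isometries; write $P_0^\pm$ for their initial projections. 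To get completeness I would invoke Theorem \ref{thm_wave_bis} with $J':=J^*$: applying Corollary \ref{corol_wave} symmetrically (to the triple $(U_0,U,J^*)$, whose perturbation $V_*':=J^*U-U_0J^*=-U_0(JU_0^{-1}-U^{-1}J)^*$... more directly, note $J^*U_0-UJ^*$ factorises through the adjoint of Lemma \ref{lemma_V_trace}'s factorisation) gives existence of $W_\pm(U_0,U,J^*)$ on $\S^1\setminus\sigma_{\rm p}(U)$, hence globally since $U$ is a.c.\ off a finite set; and \eqref{eq_JJ_prime} reads $\slim_n(JJ^*-1)U^nP_{\rm ac}(U)=\slim_n 0=0$ because $JJ^*=1_\H$. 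Theorem \ref{thm_wave_bis} then gives $\Ran\big(W_\pm(U,U_0,J)\big)=\H_{\rm ac}(U)$.

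The main obstacle is the bookkeeping for the ``reversed'' wave operators $W_\pm(U_0,U,J^*)$: one must check that $J^*U_0-UJ^*$ admits a factorisation as a product of operators that are locally smooth on $\S^1\setminus\sigma_{\rm p}(U)$ with respect to $A_0$ and $A$ respectively. This should follow by taking adjoints in Lemma \ref{lemma_V_trace} (writing $J^*U_0-UJ^*=-(JU_0^{-1}-U^{-1}J)^*$-type identities and using $JJ^*=1$, $S^{-1}=S^*$, $C_i^{-1}=C_i^*$) so that the same Schatten-class and commutator estimates apply, but one must be careful that the smoothness with respect to $A$ on $\S^1\setminus\sigma_{\rm p}(U)$ is exactly what Theorem \ref{thm_smooth_walk} supplies, and that the finite set $\sigma_{\rm p}(U)$ causes no loss when passing to global limits because $U_0$ is purely absolutely continuous.
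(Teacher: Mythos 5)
Your route is genuinely different from the paper's (which establishes existence of $W_\pm(U,U_0,J)$ and $W_\pm(U_0,U,J^*)$ from the trace-class property of $V$ via \cite[Ex.~3.8]{Tie_2020}, and then proves completeness directly by introducing the auxiliary operators $a_\pm\varphi=W_\pm(U,U_0,J)(\varphi,\varphi,\varphi)$ and $b_\pm=a_\pm^*$, showing $a_\pm$ is an isometry and computing $\ker(b_\pm)=E^U(\sigma_{\rm p}(U))\H$). Your smooth-perturbation route through Corollary \ref{corol_wave} is plausible in outline, but it contains one genuine gap and one glossed-over step. The genuine gap is your verification of the hypotheses of Theorem \ref{thm_wave_bis}: you assert that $JJ^*=1_\H$ makes $J$ a coisometry and hence that $W_\pm(U,U_0,J)$ are partial isometries. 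This implication is false in general: $W_\pm(U,U_0,J)^*W_\pm(U,U_0,J)=\mathop{\mathrm{w-lim}}_{n\to\pm\infty}U_0^{-n}J^*JU_0^n$ with $J^*J=\chi_1\oplus\chi_2\oplus\chi_3$, and there is no reason for this weak limit to be a projection even though $J^*J$ is one. Indeed the paper explicitly states in Remark \ref{rem_3_free} and in open problem (i) that $W_\pm(U,U_0,J)$ are \emph{not} isometric and that it is only \emph{expected} (not proved) that they are partial isometries, with initial projections tied to as-yet-unconstructed asymptotic velocity operators. So the hypothesis of Theorem \ref{thm_wave_bis} is unverified and your justification for it is wrong. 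The gap is repairable without that theorem: once $W_\pm(U_0,U,J^*)$ exist, Lemma \ref{lemma_chain} together with $JJ^*=1_\H$ gives
$$
W_\pm(U,U_0,J)\;\!W_\pm(U_0,U,J^*)=W_\pm(U,U,JJ^*)=P_{\rm ac}(U),
$$
which yields $\Ran\big(W_\pm(U,U_0,J)\big)\supset\H_{\rm ac}(U)$ directly (every $\varphi\in\H_{\rm ac}(U)$ is visibly in the range), and the reverse inclusion is automatic; no partial-isometry or closed-range argument is needed.

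The glossed-over step is the claim that $G=G_0D^*$ is locally $U$-smooth. Theorem \ref{thm_locally_smooth} requires $G$ to extend to $\B\big(\dom(\langle A\rangle^s)^*,\H_0\big)$, i.e.\ to carry decay on the \emph{input} side, whereas in the factorisation of Lemma \ref{lemma_V_trace} the weight $G_0$ sits on the \emph{output} side of $G=G_0D^*$; boundedness of $D$ alone does not give this. One must commute $\langle\cdot\rangle^s$ through $D^*$, which does work here because $D^*$ is built from shifts and multiplication operators (so $S^*\langle\cdot\rangle^s S\langle\cdot\rangle^{-s}$ is bounded) and because Assumption \ref{ass_short} supplies decay $\langle\cdot\rangle^{-(1+\varepsilon_k)}$ that can absorb $\langle\cdot\rangle^{s}$ for $1/2<s\le(1+\varepsilon_k)/2$ — but this computation needs to be carried out, not just asserted. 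With these two repairs your argument would stand, at the cost of being considerably heavier than the paper's: the paper's trace-class existence argument plus the explicit isometry $a_\pm$ avoids both the Mourre-theoretic smoothness bookkeeping and the partial-isometry issue entirely.
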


\begin{proof}
We know from Lemma \ref{lemma_V_trace} that $V$ is trace class. Thus, it follows from
\cite[Ex.~3.8]{Tie_2020} that $W_\pm(U,U_0,J)$ exist. Similarly, since
$J^*U-U_0J^*=U_0V^*U$ is trace class too, the wave operators
$$
W_\pm(U_0,U,J^*):=\slim_{n\to\pm\infty}U_0^{-n}J^*U^nP_{\rm ac}(U)
$$
exist too. Now, let $a_\pm,b_\pm\in\B(\H)$ be defined by
$$
a_\pm\varphi:=W_\pm(U,U_0,J)(\varphi,\dots,\varphi)
\quad\hbox{and}\quad b_\pm\varphi:=\sum_{k=1}^d\big(W_\pm(U_0,U,J^*)\varphi\big)_k,
\quad\varphi\in\H.
$$
Since $\H_{\rm ac}(U)\supset\Ran\big(W_\pm(U,U_0,J)\big)\supset\Ran(a_\pm)$, it is
sufficient to show that $\Ran(a_\pm)=\H_{\rm ac}(U)$ to conclude the proof. To achieve
this, we need to recall some information: $U_0$ has purely absolutely continuous
spectrum by Lemma \ref{prop_spec_U_0}(c) and $U$ has at most finitely many eigenvalues
and no singular continuous spectrum by Theorem \ref{thm_spec_U}. So
$1_{\H_0}=E^{U_0}(\S^1\setminus\sigma_{\rm p}(U))$ and
$P_{\rm ac}(U)=E^U(\S^1\setminus\sigma_{\rm p}(U))$, and we have for all
$\varphi,\psi\in\H$ the equalities
\begin{align*}
\big\langle a_\pm^*\varphi,\psi\big\rangle_\H
&=\big\langle\varphi,P_{\rm ac}(U)W_\pm(U,U_0,J)(\psi,\dots,\psi)\big\rangle_\H\\
&=\lim_{n\to\pm\infty}\sum_{k=1}^d\big\langle\varphi,
P_{\rm ac}(U)U^{-n}\chi_kU_k^n\psi\big\rangle_\H\\
&=\lim_{n\to\pm\infty}\sum_{k=1}^d\big\langle U_k^{-n}\chi_kU^nP_{\rm ac}(U)\varphi,
\psi\big\rangle_\H\\
&=\left\langle\sum_{k=1}^d\big(W_\pm(U_0,U,J^*)\varphi\big)_k,\psi\right\rangle_\H\\
&=\big\langle b_\pm\varphi,\psi\big\rangle_\H.
\end{align*}
Thus $b_\pm$ is the adjoint of $a_\pm$. Furthermore, since
$$
a_\pm
=\slim_{n\to\pm\infty}U^{-n}\tsum_{k=1}^d\chi_kU_k^n
=\slim_{n\to\pm\infty}U^{-n}\big(\tsum_{k=1}^d\chi_kC_k\big)^nS^n
$$
with $\tsum_{k=1}^d\chi_kC_k$ unitary, the operator $a_\pm$ is an isometry and thus
has closed range. Also, since
$$
b_\pm
=\slim_{n\to\pm\infty}\tsum_{k=1}^dU_k^{-n}\chi_kU^nP_{\rm ac}(U)
=\slim_{n\to\pm\infty}S^{-n}\big(\tsum_{k=1}^d\chi_kC_k\big)^{-n}U^n
E^U(\S^1\setminus\sigma_{\rm p}(U)),
$$
we have $\ker(b_\pm)=E^U(\sigma_{\rm p}(U))$. Combining what precedes, we obtain that
$$
\Ran(a_\pm)
=\H\ominus\ker(a_\pm^*)
=E^U(\S^1)\H\ominus\ker(b_\pm)
=E^U(\S^1\setminus\sigma_{\rm p}(U))\H
=\H_{\rm ac}(U),
$$
as desired.
\end{proof}

The proof of Theorem \ref{thm_comp_1} implies in particular that:

\begin{Corollary}[Completeness, version 2]\label{cor_comp_2}
The operators $a_\pm:\H\to\H$ given by
$$
a_\pm:=\slim_{n\to\pm\infty}U^{-n}J_nS^n,
\quad J_n:=\big(\tsum_{k=1}^d\chi_kC_k\big)^n,
$$
exist, are isometric, and satisfy $\Ran(a_\pm)=\H_{\rm ac}(U)$.
\end{Corollary}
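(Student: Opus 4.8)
The plan is to read this corollary directly off the proof of Theorem~\ref{thm_comp_1}, since all the ingredients were already produced there. First I would recall that in that proof the operators $a_\pm\in\B(\H)$ were introduced by $a_\pm\varphi:=W_\pm(U,U_0,J)(\varphi,\varphi,\varphi)$ and shown to satisfy
$$
a_\pm=\slim_{n\to\pm\infty}U^{-n}\tsum_{k=1}^3\chi_kU_k^n
=\slim_{n\to\pm\infty}U^{-n}\big(\tsum_{k=1}^3\chi_kC_k\big)^nS^n .
$$
The one algebraic point to verify is the second equality together with its reformulation in terms of $J_n$. Since each $C_k$ is a \emph{constant} diagonal matrix, it commutes both with the shift $S$ and with every multiplication operator $\chi_j$; hence $(SC_k)^n=S^nC_k^n$, and, using the orthogonality $\chi_j\chi_k=\delta_{jk}\chi_k$ together with $\tsum_{k=1}^3\chi_k=1_\H$, also $\big(\tsum_{k=1}^3\chi_kC_k\big)^n=\tsum_{k=1}^3\chi_kC_k^n=J_n$. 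Consequently $\tsum_{k=1}^3\chi_kU_k^n=\tsum_{k=1}^3\chi_kS^nC_k^n=\tsum_{k=1}^3\chi_kC_k^nS^n=J_nS^n$, so the operators $a_\pm$ of the corollary are exactly those of Theorem~\ref{thm_comp_1}; in particular the strong limits $\slim_{n\to\pm\infty}U^{-n}J_nS^n$ exist.

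For the isometry claim I would observe that $\tsum_{k=1}^3\chi_kC_k$ is unitary (the $\chi_k$ are mutually orthogonal projections summing to $1_\H$ and each acts on its range as the unitary $C_k$), hence $J_n$ and therefore $U^{-n}J_nS^n$ are unitary for every $n$. A strongly convergent sequence of isometries has an isometric limit, so $a_\pm$ is isometric and thus has closed range. This is precisely the observation already recorded inside the proof of Theorem~\ref{thm_comp_1}.

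Finally, the identity $\Ran(a_\pm)=\H_{\rm ac}(U)$ is literally the conclusion obtained at the end of the proof of Theorem~\ref{thm_comp_1}: there one identifies the adjoint $a_\pm^*=b_\pm=\slim_{n\to\pm\infty}\tsum_{k=1}^3U_k^{-n}\chi_kU^nP_{\rm ac}(U)$, computes $\ker(b_\pm)=E^U(\sigma_{\rm p}(U))\H$ using Proposition~\ref{prop_spec_U_0}(c) and Theorem~\ref{thm_spec_U}, and deduces $\Ran(a_\pm)=\H\ominus\ker(a_\pm^*)=\H_{\rm ac}(U)$ from the closedness of $\Ran(a_\pm)$. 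Hence nothing beyond this reformulation is required. There is no genuine obstacle: the only care needed is the bookkeeping of the commutation identities leading to $\tsum_{k=1}^3\chi_kU_k^n=J_nS^n$, while the existence, isometry and range statements are all inherited verbatim from the proof of Theorem~\ref{thm_comp_1}.
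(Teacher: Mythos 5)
Your proposal is correct and takes essentially the same route as the paper: Corollary \ref{cor_comp_2} is stated there as an immediate by-product of the proof of Theorem \ref{thm_comp_1}, where the identity $a_\pm=\slim_{n\to\pm\infty}U^{-n}\big(\tsum_{k=1}^3\chi_kC_k\big)^nS^n$, the isometry (hence closed range), and $\Ran(a_\pm)=\H_{\rm ac}(U)$ are all already established. Your explicit verification of the commutation bookkeeping $\tsum_{k=1}^3\chi_kU_k^n=J_nS^n$ correctly fills in the one step the paper leaves implicit.
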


Corollary \ref{cor_comp_2} means that the operators $a_\pm$ are complete wave
operators, with time-dependent identification operators $J_n$, for the pair $(U,S)$.
This shows that if one uses the matrix powers $\big(\tsum_{k=1}^d\chi_kC_k\big)^n$ as
time-dependent identification operators, then the (pretty simple) shift $S$ can be
chosen as a free evolution operator for $U$.

\begin{Corollary}\label{cor_spec_U}
One has $\sigma_{\rm ac}(U)=\S^1$.
\end{Corollary}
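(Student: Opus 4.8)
The plan is to use the completeness of the wave operators of Theorem~\ref{thm_comp_1} to identify $U|_{\H_{\rm ac}(U)}$, up to similarity, with the restriction of $U_0$ to a subspace that \emph{also} reduces the conjugate operator $A_0$, and then to transport to that subspace the homogeneous Mourre estimate of Proposition~\ref{prop_spec_U_0}. First I would recall from the proof of Theorem~\ref{thm_comp_1} that both $W:=W_+(U,U_0,J)$ and $W':=W_+(U_0,U,J^*)$ exist, that $\Ran(W)=\H_{\rm ac}(U)$, and that $W'$ intertwines $U$ and $U_0$. Applying the chain rule of Lemma~\ref{lemma_chain} together with the identities $JJ^*=1_\H$ and $J^*J=\chi_1\oplus\chi_2\oplus\chi_3$ of Lemma~\ref{lemma_J} and the equality $P_{\rm ac}(U_0)=1_{\H_0}$ of Proposition~\ref{prop_spec_U_0}(c), one gets
$$
WW'=W_+(U,U,JJ^*)=P_{\rm ac}(U)
\quad\hbox{and}\quad
W'W=W_+(U_0,U_0,J^*J)=Z,
$$
where $Z:=\bigoplus_{k=1}^3\slim_{n\to+\infty}U_k^{-n}\chi_kU_k^n=\bigoplus_{k=1}^3\slim_{n\to+\infty}S^{-n}\chi_kS^n$ (the second equality since each constant diagonal matrix $C_k$ commutes with $S$ and with $\chi_k$). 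Since $WW'=P_{\rm ac}(U)$ and $W=P_{\rm ac}(U)W$, one finds $Z^2=Z$; being also a strong limit of orthogonal projections, $Z$ is an orthogonal projection, and $Z\ne0$ because $WZ=W\ne0$. Moreover $Z$ commutes with $U_0$, as $W$ and $W'$ both intertwine $U$ and $U_0$.

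Next I would check that $\Ran Z$ reduces $A_0$. For each $n$ and each $k$, the operator $U_k^{-n}\chi_kU_k^n$ commutes with $\widetilde A$: indeed $U_k^{-1}[\widetilde A,U_k]=2\cdot1_\H$ by Lemma~\ref{lemma_SC_0} (applied with $\widetilde U_0=U_k=SC_k$), and since $\widetilde A$ is essentially self-adjoint on $C_{\rm c}(\Tau,\C^3)$ this yields $U_k^{-n}\widetilde AU_k^n=\widetilde A+2n$; combined with $[\chi_k,\widetilde A]=0$ (both are multiplication operators in the canonical basis of $\ell^2(\Tau,\C^3)$) a direct computation gives $[U_k^{-n}\chi_kU_k^n,\widetilde A]=0$. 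Letting $n\to+\infty$ and using that $\widetilde A$ is closed, each summand of $Z$ commutes with $\widetilde A$, hence $Z$ commutes with $A_0=\widetilde A\oplus\widetilde A\oplus\widetilde A$; equivalently, the closed subspace $\Ran Z$ reduces both $U_0$ and $A_0$.

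For the conclusion, from $WW'=P_{\rm ac}(U)$ it follows that $W'|_{\H_{\rm ac}(U)}$ is bounded below with range $\Ran Z$, so it is a bounded bijection $\H_{\rm ac}(U)\to\Ran Z$ with bounded inverse $W|_{\Ran Z}$, intertwining $U|_{\H_{\rm ac}(U)}$ with $U_0|_{\Ran Z}$; hence $\sigma_{\rm ac}(U)=\sigma(U|_{\H_{\rm ac}(U)})=\sigma(U_0|_{\Ran Z})$. On the other hand, since $\Ran Z$ reduces both $U_0$ and $A_0$, the imprimitivity relation $\e^{isA_0}\gamma(U_0)\e^{-isA_0}=\gamma(\e^{2is}U_0)$ of Proposition~\ref{prop_spec_U_0}(b) restricts to $\Ran Z$; arguing exactly as in the proof of Proposition~\ref{prop_spec_U_0}(c) (Mackey's imprimitivity theorem applied to $U_0|_{\Ran Z}$), this shows $\sigma(U_0|_{\Ran Z})$ to be invariant under all rotations of $\S^1$, hence to equal $\emptyset$ or $\S^1$; since $Z\ne0$ it equals $\S^1$. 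Combining, $\sigma_{\rm ac}(U)=\S^1$.

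The hard part will be the second step: showing that $\Ran Z$, a priori only the range of a strong-operator limit of conjugated coordinate projections, genuinely reduces the \emph{unbounded} operator $A_0$. This rests on the special structure here — the shifts $S_{ij}$ permute the canonical basis of $\ell^2(\Tau)$, so $\chi_k$ and $\widetilde A$ are simultaneously diagonal, and the homogeneous Mourre estimate $U_k^{-1}[\widetilde A,U_k]=2$ is precisely what makes the commutation $[\chi_k,\widetilde A]=0$ survive conjugation by $U_k^n$ — together with a little care in passing the commutation through the strong limit (closedness of $\widetilde A$).
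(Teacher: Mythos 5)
Your proof is correct, but it takes a genuinely different route from the paper's. The paper works with the isometry $a_+$ of Corollary \ref{cor_comp_2} rather than with $W_\pm(U,U_0,J)$ itself: since $a_+$ is unitary from $\H$ onto $\H_{\rm ac}(U)$, the operator $U_+:=(a_+)^*Ua_+$ is a genuine unitary on all of $\H$ with $\sigma(U_+)=\sigma_{\rm ac}(U)$, and the paper computes $U_+=\slim_{n\to\infty}S^{-n}\big(\sum_{k=1}^3\chi_kC_k\big)S^{n+1}$ explicitly and then verifies $\e^{is\widetilde A}U_+\e^{-is\widetilde A}=\e^{2is}U_+$ by pushing the conjugation through the strong limit, using $[\widetilde A,\sum_{k=1}^3\chi_kC_k]=0$ and the imprimitivity relation for $S$. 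This sidesteps entirely the point you correctly flag as the delicate one, namely that $\Ran Z$ with $Z=W'W$ reduces the unbounded operator $A_0$: in the paper's version the rotation covariance is checked on a bounded operator and no domain questions arise. Your argument is nevertheless sound — $Z=Z^*=Z^2$ from $WW'=P_{\rm ac}(U)$, the commutation $[U_k^{-n}\chi_kU_k^n,\widetilde A]=0$ from $U_k^{-1}\widetilde AU_k=\widetilde A+2$ together with $[\chi_k,\widetilde A]=0$, and the passage to the limit via closedness of $\widetilde A$ all hold — and it has the side benefit of re-deriving the completeness $\Ran(W)=\H_{\rm ac}(U)$ from the chain rule and $JJ^*=1_\H$ alone. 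Two small remarks: the appeal to Mackey's theorem in your last step is unnecessary, since the restricted imprimitivity relation already makes $U_0|_{\Ran Z}$ unitarily equivalent to $\e^{2is}U_0|_{\Ran Z}$ and a nonempty closed rotation-invariant subset of $\S^1$ is all of $\S^1$; and your nontriviality claim $Z\ne0$ implicitly uses $\H_{\rm ac}(U)\ne\{0\}$, which should be justified by Theorem \ref{thm_spec_U}.
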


\begin{proof}
Let $U_+:=(a_+)^*Ua_+$. Since $a_+$ is unitary from $\H$ to $\H_{\rm ac}(U)$, we have
$$
\sigma_{\rm ac}(U)
=\sigma\big(U\upharpoonright\H_{\rm ac}(U)\big)
=\sigma\big((a_+)^*Ua_+\big)
=\sigma(U_+).
$$
So it is sufficient to show that $\sigma(U_+)=\S^1$ to prove the claim. Let
$\varphi\in\H$. Then the definition of $a_+$, the intertwining property of
$W_+(U,U_0,J)$, and Lemma \ref{lemma_J} imply that
\begin{align*}
U_+\varphi
&=\sum_{k=1}^d\big(W_+(U_0,U,J^*)W_+(U,U_0,J)U_0(\varphi,\dots,\varphi)\big)_k\\
&=\slim_{n\to\infty}\sum_{k=1}^d\big(U_0^{-n}J^*JU_0^n
(U_1\varphi,\dots,U_d\varphi)\big)_k\\
&=\slim_{n\to\infty}S^{-n}\big(\tsum_{k=1}^d\chi_kC_k)S^{n+1}\varphi.
\end{align*}
This, together with the fact that $[\widetilde A,\tsum_{k=1}^d\chi_kC_k]=0$ and the
imprimitity relation \eqref{eq_imp}, implies for $s\in\R$
\begin{align*}
\e^{is\widetilde A}U_+\e^{-is\widetilde A}
&=\slim_{n\to\infty}\big(\e^{is\widetilde A}S\e^{-is\widetilde A}\big)^{-n}
\big(\tsum_{k=1}^d\chi_kC_k)\big(\e^{is\widetilde A}S\e^{-is\widetilde A}\big)^{n+1}\\
&=\slim_{n\to\infty}(\e^{2is}S)^{-n}\big(\tsum_{k=1}^d\chi_kC_k)(\e^{2is}S)^{n+1}\\
&=\e^{2is}U_+.
\end{align*}
So $U_+$ is unitarily equivalent to $\e^{2is}U_+$ for each $s\in\R$, and thus
$\sigma(U_+)=\S^1$.
\end{proof}

Combining Theorem \ref{thm_spec_U} and Corollary \ref{cor_spec_U}, we infer that the
spectrum of $U$ covers the whole unit circle and is purely absolutely continuous,
outside possibly a finite set where $U$ may have eigenvalues of finite multiplicity.

Interestingly enough, a third choice of free evolution operator is yet possible for
the proof of the existence and completeness of the wave operators, namely, the
operator $\widetilde U_0:=S\sum_{k=1}^d\chi_kC_k:$

\begin{Theorem}[Completeness, version 3]\label{thm_comp_3}
The wave operators $W_\pm(U,\widetilde U_0):\H\to\H$ given by
$$
W_\pm(U,\widetilde U_0):=\slim_{n\to\pm\infty}U^{-n}(\widetilde U_0)^n
$$
exist, are isometric, and are complete, that is,
$\Ran\big(W_\pm(U,\widetilde U_0)\big)=\H_{\rm ac}(U)$.
\end{Theorem}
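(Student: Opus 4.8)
The plan is to obtain $W_\pm(U,\widetilde U_0)$ as a composition of wave operators already at our disposal, by means of the chain rule of Lemma~\ref{lemma_chain}, and then to read off its completeness from two further applications of that chain rule. As a preliminary observation, I note that $\widetilde U_0=SC_0$ with $C_0:=\sum_{k=1}^3\chi_kC_k$ a diagonal coin operator: each $C_k\in\U(3)$ is diagonal, the functions $\chi_k$ and the matrices $C_j$ give rise to mutually commuting multiplication operators, and $C_0(x)$ equals one of the matrices $C_k\in\U(3)$ for every $x\in\Tau$. Hence Lemma~\ref{lemma_SC_0} and the imprimitivity argument following it apply to $\widetilde U_0$, so that $\widetilde U_0$ is unitarily equivalent to a multiplication operator with purely absolutely continuous spectrum; in particular $P_{\rm ac}(\widetilde U_0)=1_\H$, and thus $W_\pm(U,\widetilde U_0)=\slim_{n\to\pm\infty}U^{-n}(\widetilde U_0)^n$ coincides with the general wave operator $W_\pm(U,\widetilde U_0,1_\H)$.

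Next I would check that $W_\pm(\widetilde U_0,U_0,J)$ and $W_\pm(U_0,\widetilde U_0,J^*)$ exist. Using the identity $C_0\chi_k=\chi_kC_k$ (a consequence of $\chi_j\chi_k=\delta_{jk}\chi_k$) together with Lemma~\ref{lemma_J}, a direct computation gives, for $\Phi=(\varphi_1,\varphi_2,\varphi_3)\in\H_0$ and $\varphi\in\H$,
$$
JU_0\Phi-\widetilde U_0J\Phi=\sum_{k=1}^3[\chi_k,S]\;\!C_k\varphi_k
\quad\hbox{and}\quad
J^*\widetilde U_0\varphi-U_0J^*\varphi=\bigoplus_{k=1}^3[\chi_k,S]\;\!C_0\varphi.
$$
Each commutator $[\chi_k,S]$ is finite rank — as shown in the proof of Lemma~\ref{lemma_V_trace}, one has $[\chi_k,S_{ij}]=[\chi_k,S_{ij}]\chi_{\{e,a_1,a_2,a_3\}}$ for all $i,j,k$ — so both perturbations are finite rank, hence trace class. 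Therefore, exactly as in the proof of Theorem~\ref{thm_comp_1} (via \cite[Ex.~3.8]{Tie_2020}), the wave operators $W_\pm(\widetilde U_0,U_0,J)$ and $W_\pm(U_0,\widetilde U_0,J^*)$ exist.

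I would then invoke the chain rule twice. Since $W_\pm(U,U_0,J)$ exists (Theorem~\ref{thm_comp_1}), $W_\pm(U_0,U,J^*)$ exists (shown in its proof), and $JJ^*=1_\H$ (Lemma~\ref{lemma_J}), applying Lemma~\ref{lemma_chain} first with $(U_1,U_2,U_3)=(U,\widetilde U_0,U_0)$ and then with $(U_1,U_2,U_3)=(\widetilde U_0,U,U_0)$, in both cases with $J_{23}=J^*$ and $J_{31}=J$, yields the existence of $W_\pm(U,\widetilde U_0)$ and $W_\pm(\widetilde U_0,U,1_\H)$, together with the factorisations
$$
W_\pm(U,\widetilde U_0)=W_\pm(U,U_0,J)\;\!W_\pm(U_0,\widetilde U_0,J^*),\qquad
W_\pm(\widetilde U_0,U,1_\H)=W_\pm(\widetilde U_0,U_0,J)\;\!W_\pm(U_0,U,J^*).
$$
As a strong limit of the unitary operators $U^{-n}(\widetilde U_0)^n$, the operator $W_\pm(U,\widetilde U_0)$ is isometric.

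For the completeness, write $W:=W_\pm(U,\widetilde U_0)$ and $W':=W_\pm(\widetilde U_0,U,1_\H)$. Two further applications of Lemma~\ref{lemma_chain}, with $(U_1,U_2,U_3)=(\widetilde U_0,\widetilde U_0,U)$ and with $(U_1,U_2,U_3)=(U,U,\widetilde U_0)$ and all identification operators equal to $1_\H$, give
$$
W'W=W_\pm(\widetilde U_0,\widetilde U_0,1_\H)=P_{\rm ac}(\widetilde U_0)=1_\H,\qquad
WW'=W_\pm(U,U,1_\H)=P_{\rm ac}(U).
$$
From $W'W=1_\H$ we obtain $W=(WW')W=P_{\rm ac}(U)W$, hence $\Ran(W)\subset\H_{\rm ac}(U)$; from $WW'=P_{\rm ac}(U)$ we obtain $\H_{\rm ac}(U)=\Ran\big(P_{\rm ac}(U)\big)=\Ran(WW')\subset\Ran(W)$. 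Therefore $\Ran\big(W_\pm(U,\widetilde U_0)\big)=\H_{\rm ac}(U)$, which is the asserted completeness. The only genuinely computational point in this plan is the finite-rankness of the commutators $[\chi_k,S]$, already available from the proof of Lemma~\ref{lemma_V_trace}; the rest is bookkeeping, the delicate part being to choose the Hilbert spaces and identification operators correctly in the four invocations of the chain rule and to use $P_{\rm ac}(\widetilde U_0)=1_\H$ throughout.
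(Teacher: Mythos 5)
Your proof is correct, but it takes a genuinely different route from the paper's. For existence, the paper computes directly that the one-space perturbation $\widetilde U_0-U=S\sum_{k=1}^3\chi_k(C_k-C)$ is trace class (by the same calculation as in Lemma \ref{lemma_V_trace}) and invokes \cite[Ex.~3.8]{Tie_2020} to get both $W_\pm(U,\widetilde U_0)$ and $W_\pm(\widetilde U_0,U)$; you instead observe that the perturbations linking $\widetilde U_0$ to $U_0$ are finite rank (reducing to the commutators $[\chi_k,S]$ already controlled in Lemma \ref{lemma_V_trace}) and then import the wave operators of Theorem \ref{thm_comp_1} through the chain rule, which in passing gives the factorisation $W_\pm(U,\widetilde U_0)=W_\pm(U,U_0,J)\;\!W_\pm(U_0,\widetilde U_0,J^*)$ relating version 3 to version 1 --- a relation the paper does not record. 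For completeness, the paper takes $\Theta=\S^1\setminus\sigma_{\rm p}(U)$, identifies $1_\H=E^{\widetilde U_0}(\Theta)$ and $P_{\rm ac}(U)=E^U(\Theta)$ using Theorem \ref{thm_spec_U}, and concludes via Lemma \ref{lemma_J_complete} and Remark \ref{rem_unitary}; your argument is purely algebraic, deriving $W'W=1_\H$ and $WW'=P_{\rm ac}(U)$ from two further applications of Lemma \ref{lemma_chain} and reading off $\Ran(W)=\H_{\rm ac}(U)$. The paper's route is shorter and independent of Theorem \ref{thm_comp_1}; yours avoids redoing the trace-class computation and bypasses the spectral identification of $E^{\widetilde U_0}(\Theta)$ and $E^U(\Theta)$ at the completeness stage, at the cost of leaning on version 1 and on four careful invocations of the chain rule, all of which you have set up correctly.
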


\begin{proof}
The coin operator $C_0:=\sum_{k=1}^d\chi_kC_k$ is diagonal. So we know from Section
\ref{section_free} that $\widetilde U_0$ has purely absolutely continuous spectrum
covering the whole unit circle $\S^1$. Furthermore, calculations similar to that of
Lemma \ref{lemma_V_trace} show that the perturbation
$
\widetilde U_0-U=S\sum_{k=1}^d\chi_k(C_k-C)
$
is trace class. Therefore, we obtain from \cite[Ex.~3.8]{Tie_2020} that the wave
operators
$$
W_\pm(U,\widetilde U_0)
\quad\hbox{and}\quad
W_\pm(\widetilde U_0,U):=\slim_{n\to\pm\infty}(\widetilde U_0)^{-n}U^nP_{\rm ac}(U)
$$
exist. And the operators $W_\pm(U,\widetilde U_0)$ are isometric. Since
$1_\H=E^{\widetilde U_0}(\S^1\setminus\sigma_{\rm p}(U))$ and
$P_{\rm ac}(U)=E^U(\S^1\setminus\sigma_{\rm p}(U))$, it follows from Remark
\ref{rem_unitary} and Lemma \ref{lemma_J_complete} that
$$
\Ran\big(W_\pm(U,\widetilde U_0)\big)
=E^U(\S^1\setminus\sigma_{\rm p}(U))\H
=\H_{\rm ac}(U),
$$
as desired.
\end{proof}

\begin{Remark}\label{rem_3_free}
Each choice of free evolution operator comes with its pros and cons. The shift $S$ is
a very simple operator, but the corresponding complete, isometric, wave operators
$a_\pm$ are not easily interpretable since they are defined with time-dependent
identification operators. The operator $\widetilde U_0=S\sum_{k=1}^d\chi_kC_k$ is
simple and the corresponding complete, isometric, wave operators
$W_\pm(U,\widetilde U_0)$ are intuitive. But $\widetilde U_0$ does not encode in
separate Hilbert spaces the multichannel structure of the scattering system. The $d$
scattering channels are summed together in the Hilbert space $\H$ via the matrix
$\sum_{k=1}^d\chi_kC_k$. Finally, the operator $U_0$ is simple, encodes in separate
Hilbert spaces the multichannel structure of the scattering system, and the
corresponding complete wave operators $W_\pm(U,U_0,J)$ are intuitive. But the
operators $W_\pm(U,U_0,J)$ are not isometric. We expect them to be partial isometries
with nontrivial initial subspaces $\H_0^\pm\subset\H_0$ defined in terms of some
asymptotic velocity operators, as in the case of anisotropic quantum walks on $\Z$
(see the discussion below).
\end{Remark}

To conclude, we list some interesting generalisations and problems left open by this
work on quantum walks on trees:

\begin{enumerate}[(i)]
\item As mentioned in Remark \ref{rem_3_free}, it would be interesting to determine
the initial subspaces $\H_0^\pm\subset\H_0$ on which the wave operators
$W_\pm(U,U_0,J)$ are isometric. As for anisotropic quantum walks on $\Z$
\cite[Prop.~3.4]{RST_2019}, we expect these subspaces to be defined in terms of some
asymptotic velocity operators for the operators $U_i=SC_i$ ($i=1,\ldots,d$). However,
on trees of odd degree $d\ge3$, there are no canonical position operators coming to
mind allowing to define straightforwardly asymptotic velocity operators. So, one might
have to do something new in order to determine the subspaces $\H_0^\pm$.
\item It would be interesting to consider the case of quantum walks with coin operator
converging at infinity to constant coins along partitions of the tree $\Tau$ different
from the one considered here. In this first work on the topic, we used for simplicity
the partition of $\Tau$ into its $d$ main branches, but many other more refined
choices of partitions into subtrees are possible.
\item In the case $d=3$, for evolution operators $U=SC$ with constant coin operator
$C\in\U(3)$, it has been shown in \cite[Sec.~2]{JM_2014} that one can impose boundary
conditions that restrict the configuration space of the quantum walker to a rooted
tree, while preserving the unitarity of $U$. It would be interesting to generalise
this restriction procedure to evolution operators with non-constant coin operators and
determine which results of this paper still hold in that modified setup.
\item In this work, we have treated coin operators that converge on each main branch
of $\Tau$ to a constant diagonal unitary matrix, whereas in the case of quantum walks
on $\Z$ the authors of \cite{RST_2018,RST_2019} have covered coin operators that
converge at infinity to arbitrary constant unitary matrices. This is explained by the
fact that on $\Tau$ there isn't any explicit Fourier transform allowing to diagonalise
non-diagonal asymptotic operators. So, the construction of a conjugate operator based
on the Fourier transform presented in \cite[Sec.~4.1]{RST_2018} is not possible here.
That being said, each cyclic subspace of $\ell^2(\Tau)$ of the form
$\overline{\Span}\{S_{i,j}^n\delta_x\mid n\in\Z\}$ is isomorphic to $\ell^2(\Z)$, and
thus admits a Fourier transform \cite[Lemma~2.2]{HJ_2014}. Therefore, it would be
worth investigating if these implicit Fourier transforms can be used to construct a
conjugate operator suitable for quantum walks on $\Tau$ with coin operators converging
on each main branch of $\Tau$ to an arbitrary constant unitary matrix.
\end{enumerate}

%--------------------------------------------------------------------------------------
%\bibliography{../bibliographie/bibliographie}
%--------------------------------------------------------------------------------------

%--------------------------------------------------------------------------------------

\end{document}